\def\showauthornotes{0}
\def\showkeys{0}
\def\showdraftbox{0}
\def\showcolorlinks{1}
\def\usemicrotype{0}
\def\showfixme{0}
\newtheorem{theorem}{Theorem}[section]
\newtheorem*{theorem*}{Theorem}
\newtheorem{proposition}[theorem]{Proposition}
\newtheorem*{proposition*}{Proposition}
\newtheorem{lemma}[theorem]{Lemma}
\newtheorem*{lemma*}{Lemma}
\newtheorem{corollary}[theorem]{Corollary}
\newtheorem*{conjecture*}{Conjecture}
\newtheorem*{fact*}{Fact}
\newtheorem*{hypothesis*}{Hypothesis}
\theoremstyle{definition}
\newtheorem{definition}[theorem]{Definition}
\newtheorem*{definition*}{Definition}
\newtheorem{algorithm}[theorem]{Algorithm}
\newtheorem*{problem*}{Problem}
\theoremstyle{remark}
\newtheorem{claim}[theorem]{Claim}
\newtheorem*{claim*}{Claim}
\newtheorem*{remark*}{Remark}
\newtheorem{observation}[theorem]{Observation}
\newtheorem*{observation*}{Observation}
\let\mathbb\varmathbb
\newcommand{\pref}{\prettyref}
\newcommand{\savehyperref}[2]{\texorpdfstring{\hyperref[#1]{#2}}{#2}}
\newcommand{\Sref}[1]{\hyperref[#1]{\S\ref*{#1}}}
\newcommand{\Authornote}[2]{{\sffamily\small\color{red}{[#1: #2]}}}
\newcommand{\Authornotecolored}[3]{{\sffamily\small\color{#1}{[#2: #3]}}}
\newcommand{\Authorcomment}[2]{{\sffamily\small\color{gray}{[#1: #2]}}}
\newcommand{\Authorstartcomment}[1]{\sffamily\small\color{gray}[#1: }
\newcommand{\Authorfnote}[2]{\footnote{\color{red}{#1: #2}}}
\newcommand{\Authorfixme}[1]{\Authornote{#1}{\textbf{??}}}
\newcommand{\Authormarginmark}[1]{\marginpar{\textcolor{red}{\fbox{\Large #1:!}}}}
\newcommand{\Authornote}[2]{}
\newcommand{\Authornotecolored}[3]{}
\newcommand{\Authorcomment}[2]{}
\newcommand{\Authorstartcomment}[1]{}
\newcommand{\Authorfnote}[2]{}
\newcommand{\Authorfixme}[1]{}
\newcommand{\Authormarginmark}[1]{}
\newcommand{\Brac}[1]{\left[#1\right]}
\newcommand{\abs}[1]{\lvert#1\rvert}
\newcommand{\Abs}[1]{\left\lvert#1\right\rvert}
\newcommand{\norm}[1]{\lVert#1\rVert}
\newcommand{\Esymb}{\mathbb{E}}
\newcommand{\Psymb}{\mathbb{P}}
\newcommand{\Vsymb}{\mathbb{V}}
\DeclareMathOperator*{\E}{\Esymb}
\DeclareMathOperator*{\Var}{\Vsymb}
\DeclareMathOperator*{\ProbOp}{\Psymb}
\renewcommand{\Pr}{\ProbOp}
\newcommand{\Ex}[2][]{\E_{{#1}}\Brac{#2}}
\newcommand{\sos}{{SoS}\xspace }
\newcommand{\tensor}{\otimes}
\newcommand{\textparen}[1]{\text{(#1)}}
\newcommand{\because}[1]{\textparen{because #1}}
\renewcommand{\because}[1]{\textparen{because #1}}
\newcommand{\defeq}{\stackrel{\mathrm{def}}=}
\newcommand{\from}{\colon}
\newcommand\bdot\bullet
\DeclareMathOperator{\Ind}{\mathbb{I}}
\DeclareMathOperator{\Ind}{\mathds 1}}
\DeclareMathOperator{\LP}{LP}
\DeclareMathOperator{\opt}{opt}
\DeclareMathOperator{\argmax}{argmax}
\DeclareMathOperator{\supp}{supp}
\newcommand{\etal}{et al.\xspace}
\newcommand{\N}{\mathbb N}
\newcommand{\R}{\mathbb R}
\newcommand{\F}{\mathbb F}
\newcommand{\problemmacro}[1]{\texorpdfstring{\textup{\textsc{#1}}}{#1}\xspace}
\newcommand{\maxksat}{\problemmacro{max $k$-sat}}
\newcommand{\cA}{\mathcal A}
\newcommand{\cF}{\mathcal F}
\newcommand{\cP}{\mathcal P}
\newcommand{\cS}{\mathcal S}
\let\epsilon=\varepsilon
\numberwithin{equation}{section}
\newcommand\MYcurrentlabel{xxx}
\newcommand{\MYstore}[2]{%
  \global\expandafter \def \csname MYMEMORY #1 \endcsname{#2}%
}
\newcommand{\MYload}[1]{%
  \csname MYMEMORY #1 \endcsname%
}
\newcommand{\MYnewlabel}[1]{%
  \renewcommand\MYcurrentlabel{#1}%
  \MYoldlabel{#1}%
}
\newcommand{\MYdummylabel}[1]{}
\newcommand{\torestate}[1]{%
  \let\MYoldlabel\label%
  \let\label\MYnewlabel%
  #1%
  \MYstore{\MYcurrentlabel}{#1}%
  \let\label\MYoldlabel%
}
\newcommand{\restatetheorem}[1]{%
  \let\MYoldlabel\label
  \let\label\MYdummylabel
  \begin{theorem*}[Restatement of \prettyref{#1}]
    \MYload{#1}
  \end{theorem*}
  \let\label\MYoldlabel
}
\newcommand{\restatedef}[1]{%
  \let\MYoldlabel\label
  \let\label\MYdummylabel
  \begin{definition*}[Restatement of \prettyref{#1}]
    \MYload{#1}
  \end{definition*}
  \let\label\MYoldlabel
}
\newcommand{\restatelemma}[1]{%
  \let\MYoldlabel\label
  \let\label\MYdummylabel
  \begin{lemma*}[Restatement of \prettyref{#1}]
    \MYload{#1}
  \end{lemma*}
  \let\label\MYoldlabel
}
\newcommand{\restateprop}[1]{%
  \let\MYoldlabel\label
  \let\label\MYdummylabel
  \begin{proposition*}[Restatement of \prettyref{#1}]
    \MYload{#1}
  \end{proposition*}
  \let\label\MYoldlabel
}
\newcommand{\restatefact}[1]{%
  \let\MYoldlabel\label
  \let\label\MYdummylabel
  \begin{fact*}[Restatement of \prettyref{#1}]
    \MYload{#1}
  \end{fact*}
  \let\label\MYoldlabel
}
\newcommand{\restateobs}[1]{%
  \let\MYoldlabel\label
  \let\label\MYdummylabel
  \begin{observation*}[Restatement of \prettyref{#1}]
    \MYload{#1}
  \end{observation*}
  \let\label\MYoldlabel
}
\newcommand{\restate}[1]{%
  \let\MYoldlabel\label
  \let\label\MYdummylabel
  \MYload{#1}
  \let\label\MYoldlabel
}
\newcommand{\eps}{\epsilon}
\let\origparagraph\paragraph
\renewcommand{\paragraph}[1]{\origparagraph{#1.}}
\let\citet\cite
\theoremstyle{definition}
\DeclareUrlCommand\email{}
\newcommand{\restateproblem}[2]{%
  \let\MYoldlabel\label
  \let\label\MYdummylabel
  \begin{problem*}[Restatement of \prettyref{#1}, {#2}]
    \MYload{#1}
    \end{problem*}
  \let\label\MYoldlabel
}
\newcommand{\ovec}{{\vec 1}}
\g@addto@macro\TPT@defaults{\footnotesize}
\newcommand{\Enk}{E_{n,k}}
\newcommand{\D}{\mathcal{D}}
\newcommand{\U}{\mathcal{U}}
\DeclareUrlCommand\email{}
\renewcommand{\it}{\em}
\DeclareMathOperator{\CSP}{CSP}
\DeclarePairedDelimiter\ceil{\lceil}{\rceil}
\newcommand{\maxkxor}{\problemmacro{max $k$-xor}}
\DeclareUrlCommand\email{}
\newcommand{\inner}[1]{\langle #1 \rangle}
\let\pref=\prettyref
\newcommand{\setcomp}[1]{\overline{#1}}
\begin{document}


\title{Extended Formulation Lower Bounds for Refuting Random CSPs}
\author{Jonah Brown-Cohen \thanks{Research supported by the Approximability and Proof Complexity project funded by the Knut and Alice Wallenberg Foundation.}\\
	KTH Royal Institute of Technology
\and Prasad Raghavendra \thanks{Supported by NSF CCF 1718695 and
	NSF CCF 1408673.}
\\
U.C.\ Berkeley}
\maketitle
\thispagestyle{empty}

\begin{abstract}
	Random constraint satisfaction problems (CSPs) such as random $3$-SAT are conjectured to be computationally intractable.
	The average case hardness of random $3$-SAT and other CSPs has broad and far-reaching implications on problems in approximation, learning theory and cryptography.

	In this work, we show subexponential lower bounds on the size of linear programming relaxations for refuting random instances of constraint satisfaction problems.
	Formally, suppose $P : \{0,1\}^k \to \{0,1\}$ is a predicate that supports a $t-1$-wise uniform distribution on its satisfying assignments.
	Consider the distribution of random instances of CSP $P$ with $m = \Delta n$ constraints.
	We show that any linear programming extended formulation that can refute instances from this distribution with constant probability must have size at least $\Omega\left(\exp\left(\left(\frac{n^{t-2}}{\Delta^2}\right)^{\frac{1-\nu}{k}}\right)\right)$ for all $\nu > 0$.
	For example, this yields a lower bound of size $\exp(n^{1/3})$ for random $3$-SAT with a linear number of clauses.

	We use the technique of {\it pseudocalibration} to directly obtain extended formulation lower bounds from the planted distribution.
	This approach bypasses the need to construct Sherali-Adams integrality gaps in proving general LP lower bounds.
	As a corollary, one obtains a self-contained proof of subexponential Sherali-Adams LP lower bounds for these problems.
	We believe the result sheds light on the technique of pseudocalibration, a promising but conjectural approach to LP/SDP lower bounds.

\end{abstract}

\clearpage

\setcounter{page}{1}

\section{Introduction}

Constraint satisfaction problems (CSP) are a rich and expressive family of optimization problems that arise ubiquitously both in theory and practice.
There is a deep and extensive theory around the computational complexity of CSPs in the worst case.
%
%
In addition to their worst-case intractability, CSPs are among the simplest examples of computational problems that appear to be intractable on average.
%
%
Arguably, random CSPs yield the most natural distributions of optimization problems that are both computationally intractable and easy to sample from.
Furthermore, the average case hardness of random CSPs such as {\it random $3$-SAT} and {\it random $k$-XOR} has been a useful starting point towards establishing intractability of various problems both in the average and the worst case \cite{AAMMW11,BKS13,DLS13,D16}.
This work is devoted to showing the computational intractability of random CSPs for a natural family of algorithms namely the class of linear programming relaxations also referred to as LP extended formulations.

\paragraph{Refuting random CSPs}

We first describe the main computational problem that we study: that of refuting random CSPs. 
Given a predicate $P : \{0,1\}^k \to \{0,1\}$, an instance of the corresponding CSP consists of a set of boolean variables $V$ and a family of constraints which consist of the predicate $P$ applied to subsets of variables and their negations.
For the sake of concreteness, we will use  $k$-SAT as our running example, where the predicate is $P(x_1,\ldots,x_k) = x_1 \vee x_2 \ldots  \vee x_k$ and the constraints are referred to as clauses.
An instance $\Phi$ of random $k$-SAT is sampled by drawing $m$ clauses independently over $n$ variables uniformly at random.  
The ratio $\Delta = m/n$ is termed the density of clauses in $\Phi$.
The focus of this work is the regime where the number of clauses $m \gg n$.
In this regime, a random $k$-SAT instance $\Phi$ is far-from-satisfiable with overwhelming probability.
More precisely, with overwhelming probability, no assignment satisfies more than a $1-\delta$ fraction of clauses for some absolute constant $\delta > 0$.
A natural computational task for random $k$-SAT in this regime is that of strong refutation, wherein the algorithm certifies an upper bound on the maximum number of clauses satisfied by any assignment.  Formally, a strong refutation algorithm is defined as follows. 
\begin{definition} (Strong Refutation)
An algorithm $\cA$ is a {\it strong refutation} algorithm for random $k$-SAT at density $\Delta$, if for a fixed constant $\eta > 0$, given a random instance $\Phi$ of $k$-SAT with density $\Delta$, the algorithm $\cA$:
\begin{compactitem}
\item Outputs YES with probability at least $\frac{1}{2}$ over the choice of $\Phi$.
\item Outputs NO if $\Phi$ has an assignment satisfying at least a $(1-\eta)$-fraction of clauses.
\end{compactitem}
\end{definition}
We will also use the terminology {\it $\eta$-strongly refutes} when we wish to specify the constant $\eta$ in the above definition.
Note that the task of strong refutation as described above is more stringent than the task of {\it weak refutation} or what is sometimes just called {\it refutation}.  In {\it weak refutation}, the algorithm outputs $NO$ if $\Phi$ has an assignment satisfying all the clauses, i.e., the algorithm is only required to certify that a given instance is not completely satisfiable.  Weak refutation is strictly easier than strong refutation in many cases such as random $k-XOR$. 

The task of refuting random CSPs arises in myriad areas of theoretical computer science including proof complexity \cite{BB02}, inapproximability \cite{Feige02}, SAT solvers, cryptography \cite{ABW10}, learning theory \cite{DLSS14}, statistical physics \cite{CLP02} and complexity theory \cite{BKS13}.

Using the spectral approach pioneered by Goerdt and Krivelevich \cite{GK01}, a series of works \cite{FG01,FGK05, COGL04,AOW15,BM15} have culminated in a polynomial-time strong refutation algorithm for random $k$-SAT at densities $\Delta \geq n^{\frac{k - 2}{2}}$.
Recall that a random $k$-SAT instance is far from being satisfiable with high probability for all $\Delta$ larger than some constant.
This leaves open a wide range of densities from constant to $n^{\frac{k - 2}{2}}$ where no polynomial time algorithms for refutation are known.
It was recently shown that degree $\tilde{O}(n^{\epsilon})$ sum-of-squares semidefinite programs can refute random $k$-SAT instances at density $n^{(\frac{k -2}{2})(1- \epsilon)}$ for all $\epsilon \in (0,1)$, thereby yielding a sub-exponential time refutation algorithm for all densities in this range \cite{RaghavendraRS16}.
However, the problem of refutation is widely believed to have super-polynomial computational complexity for this entire range of clause densities.

\paragraph{Linear and Semidefinite Programming Relaxations for CSPs}
With even the worst-case complexity of $\textsf{P} \neq \textsf{NP}$ still out of reach, the average case complexity of random CSPs is well beyond the realm of current techniques to conclusively settle.
Therefore, an approach to gather evidence towards hardness of refuting random CSPs is to consider restricted classes of algorithms.
Linear and semidefinite programming relaxations are the prime candidates to pursue, since they yield the best known algorithms for CSPs in many worst-case settings. 
\textbf{Specific} linear and semidefinite programming relaxations such as the Sherali-Adams LP hierarchy and the Sum-of-Squares SDP hierarchy have been extensively studied, and a fairly detailed and clear picture has emerged in the literature.
In \pref{sec:relatedwork} we briefly survey the lower bound results for the sum-of-squares SDP (also known as the Lasserre/Parrilo SDP hierarchy), which is the most powerful of the specific LP/SDP hierarchies (see \cite{kothari2017sum} for a detailed survey).
%


\paragraph{Extended Formulations}
In a foundational work, Yannakakis \cite{Yannakakis91} laid out the model of LP extended formulations to capture the notion of a general linear program for a problem.
Roughly speaking, an {\it extended formulation} for a problem $\Pi$ consists of a family of linear programs for every input size $n$ with
the key restriction being that the feasible region of the linear program depends solely on the input size $n$, and is independent of the specific instance of the problem  
(analogous to having a single circuit independent of the instance for each input size).
We will defer the formal definition of extended formulations to later.

The model of extended formulations captures several classic linear programming based algorithms in combinatorial optimization such as that for minimum spanning tree.
%
%
In the case of CSPs, all of the LP/SDP hierarchies such as Sherali-Adams, Lovasz-Schriver, Sum-of-Squares SDP are special cases of LP/SDP extended formulations respectively. See \pref{sec:relatedwork} for a summary of the history of lower bounds for extended formulations.
%


In the context of \textbf{worst-case} CSPs, the work of \cite{chan2013approximate,KothariMR17} yields a general "lifting" theorem which takes lower bounds for Sherali-Adams linear programs to corresponding lower bounds for LP extended formulations.
On the one hand, this establishes optimality of Sherali-Adams linear programs among all linear programs of comparable size for CSPs.
Furthermore, with the wealth of lower bound results known for Sherali-Adams linear programs, one immediately obtains lower bounds for extended formulations.
%

\origparagraph{Average-case lower bounds for general LPs?}
The previous work mentioned above yields average-case lower-bounds for \textbf{specific} LP relaxations, and \textbf{worst-case} lower bounds for general LP extended formulations. Thus, a natural question is whether we can simultaneously obtain average-case extended formulation lower bounds.
We answer this question in the affirmative, and as an added bonus our extended formulation lower bounds are significantly stronger than those known before, even for worst-case instances.

\subsection{Our Results}
Our main result is a lower bound on LP extended formulations for strongly refuting random CSPs.  
\begin{theorem}
	\label{thm:lplowerbound}
Let $P$ be a predicate supporting a $(t-1)$-wise uniform distribution on satisfying assignments.
For any $\eps > 0$, let $1 < \Delta < n^{\frac{t-2}{2} - \eps}$.	
Then for any constants $\eta, \nu > 0$, any linear programming formulation that $\eta$-strongly refutes random instances of $\CSP(P)$ with constant probability must have size at least $\Omega\left(\exp\left(\left(\frac{n^{t-2}}{\Delta^2}\right)^{\frac{(1 - \nu)}{k}}\right)\right)$.
\end{theorem}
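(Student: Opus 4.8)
My plan is to characterize the size of an LP extended formulation by a Yannakakis-style nonnegative factorization and to supply the required obstruction by pseudocalibrating against the planted distribution, rather than by importing a Sherali--Adams integrality gap from the literature. Concretely, following the framework of \cite{chan2013approximate,KothariMR17} I would work with the slack matrix of the CSP value polytope: rows indexed by instances $I$ of $\CSP(P)$ that are $\eta$-far from satisfiable, columns by assignments $x \in \bits^n$, and entry $(1-\eta)m - \val_I(x) \ge 0$. An LP extended formulation of size $s$ that $\eta$-strongly refutes random instances with constant probability must, for a constant fraction of random $I$, both contain value-$(1-\eta)m$ points for all near-satisfiable instances and certify $\val_I < 1-\eta$; taken together this yields a rank-$s$ nonnegative factorization of the submatrix supported on those rows. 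It therefore suffices to produce, for a $1-o(1)$ fraction of random $I$, a distribution-like object $\mu_I$ on assignments that is consistent with every such small factorization and satisfies $\tE_{x\sim\mu_I}[\val_I(x)] \ge 1-\eta$: this forces the LP value on $I$ to be at least $(1-\eta)m$, contradicting refutation.

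\textbf{Low-complexity reduction and pseudocalibration.}
The structural step --- which I would re-derive in the refutation setting --- is the CLRS/KMR fact that after a random sampling of a rank-$s$ nonnegative factorization one may take the assignment-side functions to be $D$-juntas with $D = \Theta(\log s)$; equivalently, for the purpose of the CSP objective the LP is dominated by a mixture of degree-$D$ locally consistent (Sherali--Adams) families. Hence $\mu_I$ need only be locally consistent on all subsets of at most $D$ variables and give the correct expectation of the degree-$k$ function $\val_I$. I then take $\mu_I$ to be the pseudocalibration, at degree $\Theta(\log s)$, of the planted ensemble: draw a seed $\sigma$, then $m$ constraints on uniformly random $k$-tuples with negation patterns chosen so each literal-tuple is distributed as the $(t-1)$-wise uniform distribution on $P^{-1}(1)$, and let the local marginals of $\mu_I$ be the low-degree projection, as a function of $I$, of the corresponding planted conditional marginals. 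Because that distribution on $P^{-1}(1)$ is $(t-1)$-wise uniform, the planted ensemble locally mimics the uniform one, which is exactly what lets $\mu_I$ slip past the truncated LP.

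\textbf{Verification and parameters.}
Two checks remain. For the value, $\val_I$ has degree $k \le D$, so $\tE_{x\sim\mu_I}[\val_I(x)]$ equals the planted expectation up to truncation error; since every planted constraint is satisfied by the seed, this is $1 - o(1) \ge 1-\eta$. For feasibility, the restriction of $\mu_I$ to any set of at most $D$ variables must be, up to negligible error, a genuine probability distribution --- this is the crux, and it reduces to bounding the surviving low-degree Fourier mass of the planted-minus-uniform conditional distribution for a random $I$, i.e.\ a norm bound on a random matrix indexed by the constraint hypergraph of $I$, which holds precisely when $D \lesssim (n^{t-2}/\Delta^2)^{1/k}$ and $\Delta < n^{(t-2)/2 - \eps}$; a vanishing perturbation of $\mu_I$ toward the uniform distribution then absorbs the $o(1)$ slack. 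Back-solving $s \le \exp(\Theta(D))$ with $D \asymp (n^{t-2}/\Delta^2)^{1/k}$ yields the claimed bound $\exp\!\big((n^{t-2}/\Delta^2)^{(1-\nu)/k}\big)$, with $1-\nu$ absorbing the hidden constants, and the conclusion about strong refutation is immediate; as a by-product $\mu_I$ is a genuine high-value degree-$D$ Sherali--Adams solution, giving the self-contained SA lower bound mentioned in the introduction.

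\textbf{The main obstacle.}
In the sum-of-squares world pseudocalibration is canonical and the only hard check is positive-semidefiniteness, via graph-matrix norm bounds; here there is no PSD constraint, but its role is taken over by two ingredients that must be made to fit together. First, one must push the CLRS/KMR ``low nonnegative rank implies junta-approximable'' argument through in the average-case refutation setting and interface it cleanly with the pseudocalibrated local solution --- this is the conceptual heart, since the entire point is to avoid invoking a pre-built Sherali--Adams integrality gap. Second, and where the genuine quantitative work lies, is the feasibility estimate: controlling the low-degree Fourier mass of the planted distribution as a function of the clause density, a random-matrix / hypergraph-expansion bound that is exactly what pins down both the exponent $1/k$ and the density ceiling $\Delta < n^{(t-2)/2 - \eps}$.
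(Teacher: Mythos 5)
Your high-level setup (nonnegative factorization via \pref{lem:nonnegativerep}, pseudocalibration against the planted ensemble, truncation degree $D \asymp \log s$) matches the paper's starting point, but the structural engine you propose is precisely the one the paper argues cannot be made to work here, and what the paper actually does is different in two essential respects.

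First, you lean on re-deriving the CLRS/KMR ``low nonnegative rank $\Rightarrow$ the assignment-side $q_i$ are junta-approximable'' step, i.e.\ a random restriction that localizes the $q_i$ and reduces the problem to a degree-$D$ Sherali--Adams-style local solution. The paper's whole point in Section 1.1 is that this move is unavailable for the uniform distribution over instances: the CLRS random restriction implicitly replaces the natural random CSP with one where a small gap instance sits inside a larger pool of dummy variables, and the two distributions are not interchangeable for a non-uniform model like LP extended formulations. The paper's replacement is not a junta reduction on the assignment side at all; it is the conjunctive blockwise-dense decomposition (\pref{lem:cbddecomp}) applied to the \emph{instance-side} factors $p_i$, viewed as densities over $I$. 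There is no analogue of this in your outline, and without it the logic ``size-$s$ factorization $\Rightarrow$ mixture of degree-$D$ SA families'' is unjustified in the average-case setting you are targeting.

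Second, your feasibility check asks that for a $1-o(1)$ fraction of random $I$, the truncated $\mu_I$ be a genuine local probability distribution on every $D$-set, which you propose to verify with a random-matrix/hypergraph-expansion bound. The paper explicitly avoids proving any such pointwise-over-$I$ feasibility: it never constructs a valid Sherali--Adams solution for a typical $I$, and it emphasizes that the argument uses only local statistics of $\D_*$ and never global expansion properties. What is actually shown (\pref{prop:nonnegwhp}, \pref{lem:nonnegwhp}) is the weaker and subtler statement that the pseudocalibrated density $\bar\mu_*$, when \emph{averaged over} a CBD distribution on instances (conditioned on a fixed block $U$), produces a low-degree polynomial in $x$ whose Fourier mass decays rapidly (\pref{prop:cspconditionalfourierdecay}, \pref{lem:csprapiddecay}), so that by hypercontractivity (\pref{thm:lowdegconcentration}) it is nonnegative with very high probability over $x$. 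This nonnegativity-after-CBD-averaging is the property plugged against the $q_i(x) \ge 0$, not pointwise SA-feasibility. Your plan therefore requires two nontrivial ingredients that the paper deliberately circumvents and does not establish; as written it is not a proof of \pref{thm:lplowerbound}.
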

By using the fact that the $k$-XOR predicate and the $k$-SAT predicate support $(k-1)$-wise uniform distribution on satisfying assignments, the following corollary is immediate.
\begin{corollary}
For any $\eps > 0$, let $1 < \Delta < n^{\frac{k-2}{2} - \eps}$.
Then for any constants $\eta, \nu > 0$, any linear programming formulation that $\eta$-strongly refutes random instances of $\CSP(P)$ with constant probability must have size at least $\Omega\left(\exp\left(\left(\frac{n^{k-2}}{\Delta^2}\right)^{\frac{(1 - \nu)}{k}}\right)\right)$.
\end{corollary}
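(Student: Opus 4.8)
The plan is to derive the corollary directly from \pref{thm:lplowerbound}, which we are entitled to assume. That theorem is stated for an arbitrary predicate $P$ admitting a $(t-1)$-wise uniform distribution on $P^{-1}(1)$, so the only thing I need to verify is that both the $k$-XOR predicate and the $k$-SAT predicate admit a $(k-1)$-wise uniform distribution on their satisfying assignments; once this is done I would simply set $t = k$. With $t = k$ the hypothesis $1 < \Delta < n^{\frac{t-2}{2} - \eps}$ of \pref{thm:lplowerbound} becomes exactly the hypothesis $1 < \Delta < n^{\frac{k-2}{2} - \eps}$ of the corollary, and the conclusion $\Omega\!\left(\exp\!\left(\left(\tfrac{n^{t-2}}{\Delta^2}\right)^{\frac{1-\nu}{k}}\right)\right)$ becomes the claimed bound, so no further work is required.

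To establish the uniformity claim I would argue directly. Fix $b \in \{0,1\}$, let $S_b = \{x \in \{0,1\}^k : x_1 \oplus \cdots \oplus x_k = b\}$ (an affine subspace of dimension $k-1$), and let $\mathcal{D}_b$ be the uniform distribution on $S_b$. For any $T \subseteq [k]$ with $|T| \le k-1$ and any $\alpha \in \{0,1\}^T$, conditioning on $x_T = \alpha$ leaves exactly one value of the remaining coordinate consistent with the parity constraint, so $\Pr_{x \sim \mathcal{D}_b}[x_T = \alpha] = 2^{-|T|}$; hence $\mathcal{D}_b$ is $(k-1)$-wise uniform. The $k$-XOR predicate is a shift of the parity $x_1 \oplus \cdots \oplus x_k$, so $\mathcal{D}_b$ for the appropriate $b$ is supported on $P^{-1}(1)$. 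For $k$-SAT the predicate $x_1 \vee \cdots \vee x_k$ has $0^k$ as its unique non-satisfying assignment, and since $0^k \in S_0$ but $0^k \notin S_1$, the distribution $\mathcal{D}_1$ is supported on $\{0,1\}^k \setminus \{0^k\} \subseteq P^{-1}(1)$ and is $(k-1)$-wise uniform. In both cases we may therefore take $t = k$ in \pref{thm:lplowerbound}, and the corollary follows.

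Because the corollary is a one-line consequence of \pref{thm:lplowerbound}, there is no genuine obstacle in its proof beyond getting the parameter substitution and the elementary uniformity computation right. The real work lies in \pref{thm:lplowerbound} itself, whose proof proceeds by pseudocalibration: one pairs the null distribution of random $\CSP(P)$ instances with a planted distribution supported on (near-)satisfiable instances, notes that an $\eta$-strong refutation LP of size $s$ would distinguish the two, invokes the framework relating LP extended-formulation size to nonnegative factorizations of a slack-type matrix, and then exhibits a fooling object whose moments up to degree $D$ are defined by projecting the relative density of the planted distribution onto functions of degree at most $D$. I would expect the two hard points there to be (i) checking that this pseudocalibrated object meets the consistency and nonnegativity requirements of the LP framework, and (ii) bounding the Fourier mass of the planted density by hypergraph counting, where the $(t-1)$-wise uniformity of $P$ annihilates the low-degree contributions and the threshold $\Delta < n^{(t-2)/2 - \eps}$ controls the error, the trade-off between $D$ and this error being what yields the $\exp\!\big((n^{t-2}/\Delta^2)^{(1-\nu)/k}\big)$ size bound. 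None of this, however, re-enters the proof of the corollary.
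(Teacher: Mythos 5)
Your proposal is correct and matches the paper's approach exactly: the paper derives the corollary by the same one-line substitution $t = k$ in \pref{thm:lplowerbound}, after noting that $k$-XOR and $k$-SAT support $(k-1)$-wise uniform distributions on satisfying assignments (the paper itself observes that the satisfying assignments of $k$-XOR form an affine subspace and that the same distribution can be reused for $k$-SAT). Your explicit verification of the $(k-1)$-wise uniformity is correct and spells out a step the paper leaves implicit.
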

The above lower bounds on the size of linear programming extended formulations are quantitatively stronger than all known extension complexity results for these problems, even for worst-case instances.
For example, in the case of $k$-XOR or $k$-SAT with linearly many clauses, the lower bound of Kothari \etal \cite{KothariMR17} is $\exp(n^{\epsilon_0})$ for some unspecified small constant $\epsilon_0$, while the above result yields a lower bound of $\exp(n^{1- \frac{2}{k} - \epsilon})$ for all $\epsilon > 0$.

For specific LP/SDP hierarchies such as the Sherali-Adams LP or the sum-of-squares SDP, lower bounds in literature are both quantitatively stronger (see \cite{benabbas2012sdp,Gri01,Sch08,kothari2017sum}) and rule out weak refutations too. 
However, it is interesting that our proof recovers sub-exponential lower bounds for Sherali-Adams linear programs to strongly refute these CSPs, as a corollary of the more general LP extension complexity lower bound.

\paragraph{Extended formulation lower bounds for the uniform distribution}
Our results improve on prior extended formulation lower bounds for CSPs in two ways. 
First, ours are the first lower bounds that apply to the natural uniform distribution on CSPs.
Second, as mentioned above, our lower bounds are significantly quantitatively stronger than all prior results, including in the worst case.
We now give a more detailed discussion of the relation of our results to those implied by the extended formulation lower bounds in \cite{chan2013approximate, KothariMR17, kothari2017sum, DBLP:conf/stoc/LeeRS15}.

Prior work does in fact give lower bounds for \textit{some} distribution over CSP instances, but this distribution is far from uniform.
To see why, recall that the lifting theorems in \cite{chan2013approximate, KothariMR17} allow one to take lower bounds for Sherali-Adams LPs to corresponding lower bounds for general LPs.
Specifically, fix an instance $I$ of a CSP such that the $d$-round Sherali-Adams LP optimum is $c$, but the integral optimum is $s$ for some $c,s \in \R$.
In other words, $I$ is a $c$ vs $s$-integrality gap instance for the $d$-round Sherali-Adams LP.
The result of \cite{chan2013approximate} implies a lower bound of $n^{\omega(d)}$ on linear programs that yield a $c$ vs $s$ approximation on a family of instances $\cF_I$ constructed as follows.
Each instance $I' \in \cF_I$ is an instance on $n$ variables $V$ where the instance $I$ is planted on some subset $S$ of $m$ variables, while there are no constraints on the remaining $n-m$ variables.
%

By \cite{kothari2017sum}, the integrality gap instance $I$ can be chosen to be a random CSP instance of size $m$.  This yields an extended formulation lower bound for the distribution where first a random CSP instance $I$ on $m$ variables is chosen, and then the instance $I$ is planted in a larger $n$ variable instance, with the remaining $n-m$ dummy variables not appearing in any constraints.
Although this distribution on sparse instances is entirely different from the natural random CSP distribution, it seems that
in principle, the addition of the dummy variables should not change the computational complexity of the underlying problem.

However, there cannot be a generic black-box reduction to translate lower bounds for the family $\cF_I$, into a hardness result for the underlying instance with the dummy variables removed.
The reason for this is that LP extended formulations are a non-uniform model of computation, where we allow a different LP feasibility polytope for each instance size.
In particular, if we start with a single, fixed integrality gap instance $I$ on $m$ variables, then the family $\cF_I$ admits no small LP relaxations.
But the original instance $I$--being a single instance of the problem on $m$ variables--always admits a small LP relaxation, since the LP feasibility polytope at each size $m$ can just be tailored to this specific instance.
This obstacle to proving LP extended formulation lower bounds is analogous to the difference between a circuit lower bound proved under a sparse distribution coming from random restrictions versus the uniform distribution over inputs.

Given the challenges outlined above, we now explain what is required in order for us to prove lower bounds for the uniform distribution.
At a high-level, the proof of the lifting theorems in \cite{chan2013approximate} can be thought of as taking a random restriction on the instance, and using a Sherali-Adams LP solution on this restriction.
By virtue of the random restriction, \cite{chan2013approximate} can use the Sherali-Adams LP solution for random instances constructed in \cite{kothari2017sum} as a blackbox.
Since we are targeting the natural distribution on CSP instances, we cannot afford to take a random restriction.
Thus, the Sherali-Adams solution can no longer be treated as a blackbox, but one needs to appeal to the properties of the function 
$$\Lambda : \{\text{ random instances } I\} \to \{\text{LP solution for I}\}$$
While the prior work \cite{kothari2017sum}, constructs the solution $\Lambda(I)$ for a given instance $I$, it is difficult to extract the properties of the function $\Lambda$ from these constructions.
In light of this, it is necessary to use a different approach to systematically construct the function $\Lambda$, so that we can directly reason about its properties.
To this end, we appeal to the technique of pseudocalibration developed in the work on SDP lower bounds for planted clique \cite{barak2016nearly}.

\paragraph{Pseudocalibration}
In this paper, we use pseudocalibration to explicitly construct the candidate Sherali-Adams LP solution $\Lambda(I)$ for an instance $I$ as a low-degree polynomial. 
This is the key to our analysis, as it allows us to directly reason about the properties of $\Lambda(I)$ in a non-blackbox manner.
The basic idea of pseudocalibration is to find a planted distribution $\D_*$ that locally looks the same
as the uniform distribution $\D$ on CSPs, but which is only supported on instances with a very
different objective function value than those coming from $\D$. A good example of such a planted
distribution is simple to describe for the case of random $k$-SAT: first sample a uniform random
satisfying assignment $x$, and then sample the constraints in the instance $I$ so that $x$ is a satisfying
assignment to all the constraints.

Next we write the density of $\D_*$ as a polynomial in the instance $I$. We then define $\Lambda(I)$ to be
the low-degree projection of the density of $\D_*$, and let this be our candidate Sherali-Adams LP
solution. Since the Sherali-Adams LP in some sense only looks at local properties of the instance,
and our two distributions are locally similar, we might hope that this candidate solution satisfies
all the Sherali-Adams LP constraints. Simultaneously, we hope that the Sherali-Adams objective
function value remains the same as the original objective function value under $\D_*$, giving rise to a
good integrality gap instance.

For our purposes, the power of this technique comes from the fact that we are able to directly
analyze the function $\Lambda(I)$ constructed in this way. In particular, we compute explicit bounds on
the coefficients of the polynomial $\Lambda(I)$, and use these to directly prove LP extended formulation
lower bounds. Note that LP extension complexity lower bounds subsume Sherali-Adams LP
lower bounds. Therefore, we recover a sub-exponential lower bound against Sherali-Adams LP
relaxations on random instances without explicitly constructing a feasible solution for any random
instance $I$. Moreover, our lower bound proof only relies on local statistics of the planted distribution
$\D_*$ without ever appealing to a global property such as expansion of the constraint graph. We
believe that this application of pseudocalibration sheds light on the technique, and that the ideas
could potentially be useful towards formally implementing pseudocalibration for a broader family
of problems.

\subsection{Related Work}
\label{sec:relatedwork}
There are many applications of the computational complexity of random $k$-SAT to hardness of approximation, learning theory and cryptography.
In particular, for $k = 3$ and constant clause densities, Feige's ``R3SAT hypothesis,'' asserts that for any $\delta>0$, there exists some constant $c$ such that there is no polynomial-time algorithm certifying that a random $3$-SAT instance has value at most $1-\delta$ (that is, strongly refutes $3$-SAT) at clause density $m/n = c$.
Feige used this hypothesis to obtain simple proofs of hardness of approximation results for problems such as densest-$k$ subgraph and min-bisection \cite{Feige02}, which were otherwise mostly out of reach for standard PCP based reductions.  
In turn, these problems have been used to derive several other hardness of approximation results \cite{demaine2008combination,briest2008uniform,alon2012optimizing}.
%

Goldreich \cite{Gol00} used the average case hardness of random CSPs towards constructing candidate one-way functions.  Subsequently, numerous cryptographic constructions have appealed to the hardness of refutation of random CSPs \cite{ABW10,applebaum2012dichotomy,applebaum2010public}. 
Most recently, Daniely and coauthors have shown computational intractability of some long-standing problems in learning such as learning DNFs and learning halfspaces with noise, starting with the hardness of refuting random CSPs \cite{DLSS14, D16}.
We refer the reader to \cite{kothari2017sum} for a detailed survey of the implications stemming from the computational intractability of random CSPs.

An important line of work has established essentially optimal lower bounds on the size of Sum-of-Squares SDP relaxations for refuting random CSPs.
Grigoriev \cite{Gri01} and Schoenebeck \cite{Sch08} rule out efficient weak refutations for random $k$-XOR and random $k$-SAT at densities significantly smaller than $m/n < n^{\frac{k-2}{2}}$.
Specifically, Schonebeck's result implies that with high probability over $k$-XOR instances $\Phi$ with clause density $m/n < O(n^{(\frac{k-2}{2})(1-\delta)})$, the \sos hierarchy cannot even weakly refute $\Phi$ at degree $O(n^{\delta})$.
More generally, Kothari \etal \cite{kothari2017sum} showed that for a predicate $P$ which supports a $(t-1)$-wise uniform probability distribution, the corresponding random CSP requires degree $\Omega(\frac{n}{\Delta^{2/(t-2)} \log \Delta})$ sum-of-squares proofs to weakly refute.  
In other words, the sum-of-squares SDP hierarchy takes super-polynomial time for all constraint densities less than $\tilde{\Omega}(n^{(t-2)/2})$, almost matching the upper bounds in \cite{RaghavendraRS16}.

The first lower bounds for LP extended formulations were proved by Yannakakis \cite{Yannakakis91,Yannakakis88}. He showed exponential lower bounds on the size of {\it symmetric} linear programs for the Travelling Salesman Problem (TSP) and Matching problems.
Three decades later, Fiorini \etal \cite{FioriniKaibelPashkovichTheis11,FMPTW12}, showed how to avoid the symmetry assumption and show a lower bound for general assymmetric linear programming relaxations of the TSP polytope.  
This was followed by a flurry of recent work on lower bounds for extended formulations for various problems.
Rothvo{\ss} showed an exponential lower bound on the size of extended formulations for the Matching problem \cite{rothvoss2013matching}.
For CSPs, the results of \cite{chan2013approximate, KothariMR17} and \cite{DBLP:conf/stoc/LeeRS15} give lower bounds for LP and SDP extended formulations respectively.

%
%
%

%

\section{Preliminaries}
\label{sec:prelims}
\paragraph{Notation}
We use $[n]$ to denote the set $\{1,\dots,n\}$. We will use $S^c$ to denote the complement of the set $S$.
We use $\Pr_{\D}\Brac{X}$, $\Ex[\D]{X}$ and $\Var_{\D}\Brac{X}$ to denote respectively the probability, variance and expectation of $X$ over the distribution $\D$.
For a set $\Omega$ and distribution $\D$ on $\Omega$ we write $L^2(\Omega,\D)$ to denote the space of functions $f$ on $\Omega$ with inner product given by $\inner{f,g} = \Ex[x \sim \D]{f(x)g(x)}$. When $\D$ is the uniform distribution on $\Omega$ we will denote this space simply by $L^2(\Omega)$. We will use $\pi_p$ to denote the distribution on $\{-1,1\}$ which assigns probability $p$ to $-1$ and probability $q \defeq 1 - p$ to $1$. We will use $\Ind(x = a)$ to denote the function of $x$ that is equal to one if $x = a$ and zero otherwise. For $x \in \{-1,1\}^n$ and a subset $S$ we let $\chi_S(x) = \prod_{i \in S}x_i$ be the Fourier character corresponding to $S$. We use $\ovec$ to denote the vector of all ones.

\paragraph{Relaxations of Optimization Problems}
We will be primarily interested in optimization problems over discrete domains. That is, we will focus on problems where the goal is to find the optimum value of some function on some finite set. We begin by introducing our model for maximization problems.
\begin{definition}
	
A \emph{maximization problem} \(\cP = (\cS, \cF)\)
consists of a finite set \(\cS\) of feasible solutions and a finite set \(\cF\) of nonnegative objective functions.
We define $\opt_{\cP}(f) \defeq \max_{s \in \cS}f(s)$ to be the maximum value of the objective function $f$ on $\cS$.
\end{definition}
The objective functions $f\in \cF$ should be thought of as encoding different instances of the problem. For example, for the problem of finding the maximum cut in a weighted graph, each function in $\cF$ corresponds to a different setting of the edge weights. 
\subsection{Linear Programming Formulations}
In this section we define a framework for linear programming formulations and show that a small linear programming formulation implies the existence of a small non-negative representation for a given problem.
We begin with the definition of a linear programming
formulation of a maximization problem obtained by \emph{linearizing} the objective functions and possible solutions.
\begin{definition}[LP formulation for $\cP$] 
	\label{def:lpformulation}
	Let $\cP = (\cS,
	\cF)$ be a maximization problem.
	A
	\emph{linear programming formulation of $\cP$}
	of size $R$
	consists of a linear map $A \colon \R^k \rightarrow \R^R$
	and $b \in \R^R$
	together with
	\begin{enumerate}
		\item \emph{Feasible solutions:}
		a $y^s \in \R^{k}$ with $Ay^s \leq b$
		for all $s \in \cS$, i.e., the polytope
		$\{y \in \R^{k} \mid Ay \leq b\}$
		contains the points $\{y^s \mid s \in \cS\}$,
		\item \emph{Objective functions:}
		a vector $w^f \in \R^{k}$
		satisfying
		$\inner{w^f,y^s} = f(s)$ 
		for all $f \in \cF$ and all $s \in \cS$,
		i.e., the linearizations are exact on solutions.
	\end{enumerate}
\end{definition}
An LP formulation for a problem $\cP$ naturally gives rise to the following linear programming relaxation for $\cP$:
\[
\LP(f) = \max_{y \colon Ay \leq b} \inner{w^f,y}.
\]
By the above definition, we have for each objective function $f \in \cF$
\begin{align*}
\opt_{\cP}(f) = \max_{s \in \cS}f(s)
= \max_{s \in \cS}\inner{w^f,y^s}
\leq \max_{y \colon Ay \leq b}\inner{w^f,y^s} = \LP(f).
\end{align*}
Thus, $\opt_{\cP}(f) \leq \LP(f)$, i.e. $\LP(f)$ is a relaxation of the maximization problem $\cP$.
This implies that an upper bound on the value of an LP relaxation provides an upper bound on the optimum of the original maximization problem.
More formally, we make the following definition:
\begin{definition}
	Let $c \geq 0$. An LP formulation of a maximization problem $\cP$ \emph{certifies an upper bound $c$} on $f$ if $\LP(f) \leq c$.
\end{definition}
A standard application of LP duality (see e.g. \cite{chan2013approximate}) shows that a small LP formulation for a problem $\cP$ gives rise to a small non-negative representation.
\begin{lemma}[Non-negative representation of an LP formulation]
	\label{lem:nonnegativerep}
	If a maximization problem $\cP = (\cS,\cF)$ has a linear programming formulation of size $R$ certifying upper bound $c$ on $f \in \cF$, then there exist real-valued, non-negative functions $p_i(f)$, $q_i(s) \geq 0$ for $i \in \{0\dots R\}$ such that
	\[
	c - f(s) = \sum_{i=0}^{R} p_i(f)q_i(s).
	\]
\end{lemma}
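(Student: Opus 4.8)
The plan is to deduce the decomposition from linear programming duality applied to the relaxation $\LP(f)$ attached to the formulation. Fix $f \in \cF$. The hypothesis that the formulation certifies the upper bound $c$ on $f$ is exactly the statement $\LP(f) = \max\{\inner{w^f,y} : Ay \le b\} \le c$. This primal program is feasible, since every $y^s$ satisfies $Ay^s \le b$, and it is bounded above by $c$, so its optimum is finite; hence strong LP duality applies and the dual program $\min\{\inner{b,\lambda} : \lambda \in \R^R,\ \lambda \ge 0,\ A^{\top}\lambda = w^f\}$ attains the same value. Let $\lambda^f \ge 0$ be an optimal dual solution, so that $A^{\top}\lambda^f = w^f$ and $\inner{b,\lambda^f} = \LP(f) \le c$.

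With $\lambda^f$ in hand I would simply read off the functions. For $i \in \{1,\dots,R\}$ put $p_i(f) := \lambda^f_i \ge 0$ and $q_i(s) := (b - Ay^s)_i \ge 0$, the latter being nonnegative precisely because $Ay^s \le b$. Using $A^{\top}\lambda^f = w^f$ together with the exactness relation $\inner{w^f,y^s} = f(s)$,
\[
\sum_{i=1}^{R} p_i(f)\,q_i(s) = \inner{\lambda^f,\, b - Ay^s} = \inner{b,\lambda^f} - \inner{A^{\top}\lambda^f,\, y^s} = \inner{b,\lambda^f} - f(s).
\]
To absorb the remaining slack $c - \inner{b,\lambda^f} \ge 0$, introduce the extra index $i = 0$ with $p_0(f) := c - \inner{b,\lambda^f} \ge 0$ and $q_0(s) := 1$. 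Summing over $i \in \{0,\dots,R\}$ then yields $c - f(s) = \sum_{i=0}^{R} p_i(f)\,q_i(s)$ with all factors nonnegative, which is the claim.

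There is no genuine difficulty here; the points to be careful about are verifying the hypotheses under which strong duality holds (feasibility is witnessed by the points $y^s$, and finiteness of the optimum follows from the certified bound $c$) and the bookkeeping by which the term $p_0(f)\,q_0(s)$ upgrades the inequality $\inner{b,\lambda^f} \le c$ to the desired equality. One can also phrase the argument without naming strong duality, by instead applying Farkas' lemma directly to the (infeasible, by hypothesis) affine system $Ay \le b$, $\inner{w^f,y} > c$, which produces the same multipliers $\lambda^f$ and the scalar $p_0(f)$; this is essentially the route of the standard references such as \cite{chan2013approximate}.
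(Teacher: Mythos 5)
Your proof is correct and is exactly the standard LP duality argument that the paper alludes to when it cites \cite{chan2013approximate} rather than giving its own proof. The dual multipliers $\lambda^f$ become the $p_i$, the slacks $(b - Ay^s)_i$ become the $q_i$, and the extra index $i=0$ absorbs the gap $c - \inner{b,\lambda^f}$; this is precisely the intended construction.
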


\subsection{Random Instances of Constraint Satisfaction Problems}
In this section we formally introduce the class of constraint satisfaction problems and our model for random instances.
Specifically, we will consider CSPs defined by a single boolean predicate $P$ applied to $k$-tuples of literals (i.e. possibly negated variables) from a set of $n$ variables.
\begin{definition}
	\label{def:CSP}
	A predicate $P\from \{-1,1\}^k \to \{0,1\}$ defines a constraint satisfaction problem $\CSP(P)$. An instance $I$ of $\CSP(P)$ is given by a set of $n$ variables $x_1,\dots,x_n$ and a set of $m$ \emph{constraints} $(S,b)$, where $S = (S_1,\dots,S_k)$ is a \emph{scope} of $k$ distinct variable indices and $b\in\{-1,1\}^k$ is a string of $k$ negations. The objective is: given an instance $I$, find an assignment of values from $\{-1,1\}^n$ to the variables $x_1,\dots,x_n$ in order to maximize
	\[
	f_I(x) \defeq \sum_{(S,b)} P(b_{1}x_{S_1},\dots,b_{k}x_{S_k}).
	\]
	We will use $\opt_P(I)$ to denote this maximum.
\end{definition}
Note that $\CSP(P)$ is a maximization problem where the set of possible solutions is $\cS = \{-1,1\}^n$ and the set of objective functions $\cF$ is the set of all $f_I$ for each instance $I$ of $\CSP(P)$. We will also use $\Enk$ to denote the set of possible scopes i.e. the set of $k$-tuples of distinct indices in $[n]^k$. For a subset $U \subseteq \Enk$ we will use $I_U$ to denote the restriction of the instance $I$ to the set of scopes and variables containted in $U$.

We say that a predicate $P$ supports a $(t-1)$-wise uniform distribution on satisfying assignments if there is a probability distribution $\eta_P$ with the following properties:
\begin{itemize}
	\item $\eta_P$ is supported on $P^{-1}(1)$.
	\item For every set $x_S$ of $(t-1)$ input variables, the distribution of $x_S$ is uniform.
\end{itemize}
Note, for example, that the uniform distribution on satisfying assignments to $\maxkxor$ is $(k-1)$-wise uniform.

One natural distribution on random instances of CSPs is given by choosing $\Delta n$ constraints $(S,b)$ independently and uniformly at random, where $\Delta \geq 0$ is called the \emph{constraint density}. For technical reasons we will use a slightly different distribution in our analysis. However, our results for this modified distribution can be translated into similar results for the original distribution with density $\Delta$.
\begin{definition}
	\label{def:randomCSP}
	Given a parameter $p \geq 0$ a random instance $I$ of $\CSP(P)$ on $n$ variables is sampled as follows:
	\begin{itemize}
		\item Include each constraint scope $S$ independently with probability $p$.
		\item For each $S$ chosen, sample a uniformly random string $b_S \in \{-1,1\}^k$.
		\item Let $I$ consist of all the constraints $(S,b_S)$ chosen by this process.
	\end{itemize}
	We will use $\D(p)$ to denote this distribution.
\end{definition}
In short, each constraint scope is included independently with probability $p$ and each scope is assigned one uniform random string of negations.
By analogy with the situation where the number of constraints is fixed to $\Delta n$, we will set $\Delta \defeq \frac{p \abs{\Enk}}{n}$ so that the expected number of constraints in an instance sampled from $\D(p)$ is $p \abs{\Enk} = \Delta n$.
Standard Chernoff bounds imply that the number of constraints chosen is close to $\Delta n$ with high probability.
\begin{lemma}
	\label{lem:constraintconcentration}
	Let $m$ be the number of constraints in an instance sampled from $\D(p)$.
	\[
	\Pr_{\D(p)}\Brac{\abs{m - \Delta n} > \eps \Delta n} \leq 2\exp{\left(-\frac{\eps^2\Delta n}{3}\right)}.
	\]
\end{lemma}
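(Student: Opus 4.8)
The plan is to recognize $m$ as a sum of independent indicator random variables and apply a standard multiplicative Chernoff bound. Concretely, for each scope $S \in \Enk$ let $X_S = \Ind(S \text{ is included in } I)$, so that the $X_S$ are mutually independent with $\Pr[X_S = 1] = p$, and $m = \sum_{S \in \Enk} X_S$. Then $\Ex{}{m} = p \abs{\Enk} = \Delta n$ by the definition $\Delta \defeq \frac{p\abs{\Enk}}{n}$.

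Next I would invoke the two-sided multiplicative Chernoff bound: for a sum $X$ of independent $\{0,1\}$ random variables with mean $\mu$ and for $\eps \in (0,1)$, one has $\Pr[X \geq (1+\eps)\mu] \leq \exp(-\eps^2\mu/3)$ and $\Pr[X \leq (1-\eps)\mu] \leq \exp(-\eps^2\mu/2) \leq \exp(-\eps^2\mu/3)$. Applying this with $X = m$ and $\mu = \Delta n$, and taking a union bound over the upper- and lower-tail events, yields
\[
\Pr_{\D(p)}\Brac{\abs{m - \Delta n} > \eps \Delta n} \leq \exp\left(-\frac{\eps^2\Delta n}{3}\right) + \exp\left(-\frac{\eps^2\Delta n}{3}\right) = 2\exp\left(-\frac{\eps^2\Delta n}{3}\right),
\]
which is exactly the claimed bound. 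For $\eps \geq 1$ the statement is only stronger, and the upper-tail Chernoff estimate $\Pr[m \geq (1+\eps)\mu] \leq \exp(-\eps^2\mu/3)$ in fact still holds (or one can appeal to the additive Hoeffding bound), so the restriction $\eps < 1$ is not essential.

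There is essentially no obstacle here: the only points requiring a word of care are (i) confirming the independence of the $X_S$, which is immediate from the sampling procedure in \pref{def:randomCSP} since each scope is included by an independent coin flip, and (ii) choosing the constant in the exponent uniformly for both tails, which is why the weaker denominator $3$ (rather than $2$) appears in the lower-tail term. Hence the lemma follows directly from the textbook Chernoff inequality.
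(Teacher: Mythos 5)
Your proof is correct and is exactly the argument the paper has in mind: the paper supplies no proof at all, simply citing ``standard Chernoff bounds,'' and your proposal fills in that one-line application by observing $m=\sum_{S\in\Enk}X_S$ is a sum of independent indicators with mean $p\abs{\Enk}=\Delta n$ and invoking the two-sided multiplicative Chernoff bound. One small inaccuracy in your closing aside: the bound $\Pr[m\geq(1+\eps)\mu]\leq\exp(-\eps^2\mu/3)$ does \emph{not} hold for all $\eps>1$ (e.g.\ it already fails at $\eps=2$, where the standard form is $\exp(-\eps\mu/3)$); this is immaterial here since the lemma is only ever applied with $\eps=\eta/2$ a small constant, but the claim as stated is not quite right.
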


\subsection{Strong Refutation}
For a random CSP there are constants $\Delta_0 < \Delta_1$, such that a random instance with constraint density $\Delta < \Delta_0$ is satisfiable with high probability, and for $\Delta > \Delta_1$ is unsatisfiable with high probability. Further, for some CSPs including $k$-SAT, it is conjectured that these two numbers are equal, and there is a sharp satifiability threshold $\Delta_c = \Delta_0 = \Delta_1$.
For random instances $I$ with density $\Delta > \Delta_1$ a natural computational problem is to efficiently find a certificate that $I$ is unsatisfiable. This is known as \emph{refuting} (or sometimes \emph{weakly refuting}) the instance $I$. Our results apply to the computationally harder problem of certifying an upper bound on the maximum number of constraints that can simultaneously be satisfied in the instance $I$.

\begin{definition}
	Let $A$ be an algorithm that takes as input an instance $I$ of $\CSP(P)$ and outputs a number. We say that $A$ is a \emph{strong refutation} algorithm for $\CSP(P)$ if $A$ always outputs a valid upper bound on $\opt_P(I)$. If additionally $A$ outputs $1-\delta$ with probability $s$ over the random choice of $I$ sampled from a distribution $\mu$, then we say that $A$ is a $\delta$-strong refutation algorithm for $\mu$ with success probability $s$.
\end{definition}
By \prettyref{def:lpformulation}, any LP formulation for $\CSP(P)$ immediately gives rise to a strong refutation algorithm. For our purposes, let $\Delta_*$ be the density for $\CSP(P)$ such that a random instance sampled from $\D(p)$ with $p > (1 + \eps)\frac{n\Delta_*}{\Abs{\Enk}}$ is not even $1-\delta$ satisfiable (for some constant $\delta$) with high probability. One can take $\Delta_*$ to be the first moment upper bound on the satisfiability threshold. It is natural to ask if it is possible to construct an LP formulation that $\delta$-strongly refutes such CSPs with any reasonable probability. Our main result shows that such an LP formulation must have sub-exponential size, where the exponent varies depending on the parameter $p$.

\subsection{Blockwise-Dense Distributions}
A key element of our proof lies in being able to approximately decompose any distribution on instances into a convex combination of ``simpler'' distributions. The class of simple distributions we will consider are based on the following definitions. These definitions appeared first (in a slightly different form) in \cite{GoosLMWZ16} in the context of communication complexity, and then were later used to prove extended formulation lower bounds in \cite{KothariMR17}.
\begin{definition}
	Let $\D$ be a distribution on instances $I$. We say that $\D$ is \emph{blockwise-dense} with parameter $\delta$ relative to $\D(p)$ if 
	for every subset $V \subseteq \Enk$ and every instance $I'_V$ we have 
	\[
	\Pr_{I \sim \D}\Brac{I_V = I'_V} \leq \Pr_{I \sim \D(p)}\Brac{I_V = I'_V}^{1 - \delta}
	\]
\end{definition}
That is, a blockwise-dense distribution $\D$ does not assign a much higher probability to any instance than $\D(p)$ does. The exponent of $1-\delta$ should be thought of as some constant with a small value of $\delta >0$. Unfortunately, we will not be able to decompose any distribution into a convex combination of blockwise-dense distributions.
In particular, a distribution which fixes a set of $d$ scopes and their corresponding negations cannot be decomposed into blockwise-dense distributions if $d$ is large enough. This example motivates the following definition.
\begin{definition}
	Let $\D$ be a distribution on instances $I$. We say that $\D$ is $d$\emph{-conjunctive blockwise-dense} ($d$-CBD) with parameter $\delta$ relative to $\D(p)$ if
	\begin{enumerate}
		\item There exists $U \subseteq \Enk$ with $\abs{U} \leq d$ and a restricted instance $I^*_U$ such that $\Pr_{I \sim \D}\Brac{I_U = I^*_U} = 1$.
		\item For every subset $V \subseteq \Enk \setminus U$ and every instance $I'_V$ we have 
		\[
		\Pr_{I \sim \D}\Brac{I_V = I'_V} \leq \Pr_{I \sim \D(p)}\Brac{I_V = I'_V}^{1 - \delta}
		\]
	\end{enumerate}
	We call the coordinates in the subset $U$ the \emph{fixed} coordinates of $\D$.
\end{definition}
The basic intuition is that $d$-CBD distributions fix some small set of $d$ coordinates and then sample all other coordinates with probability not too much higher than the probability assigned to them by $\D(p)$. In \prettyref{sec:cbddecomposition} we will show that any distribution can be decomposed into a convex combination of conjunctive blockwise-dense distributions plus an error set which has small measure under $\D(p)$.


\section{Proof Overview}
\label{sec:proofoverview}
In this section we introduce our approach to proving LP lower bounds using pseudo-calibration. The main idea is that we want to use the small non-negative representation given by \prettyref{lem:nonnegativerep} to derive a contradiction. In particular, we want to find a function $H(x,I)$ that witnesses a violation of the equality $c - f_I(x) = \sum_i p_i(I)q_i(x)$. That is, we want both
\begin{equation}
\label{eqn:proofapproachlhs}
\E[H(x,I)(c - f_I(x))] < 0
\end{equation}
and
\begin{equation}
\label{eqn:proofapproachrhs}
\E\left[H(x,I) \sum_{i=0}^{R} p_i(I)q_i(x)\right] \geq 0
\end{equation}
where the expectation is taken over random CSPs conditioned on being in the set of instances $I$ on which the LP succeeds.

One natural way to try to construct such an $H$ is to first choose a \emph{planted} distribution $\D_*$ on pairs $(x,I)$ of assignments and instances, so that $x$ always completely satisfies the instance $I$. Next, let $\mu_*(x,I)$ be the density of $\D_*$ relative to the distribution where $I \sim \D(p)$ and $x \sim \{-1,1\}^n$ independently. Since $x$ always satisfies $I$ we have
\[
\Ex[\substack{I \sim \D(p)\\ x \sim \{-1,1\}^n}]{\mu_*(x,I)(c - f_I(x))} = \Ex[(x,I) \sim \D_*]{c - f_I(x)}< 0
\]
for any $c$ which is sufficiently less than the expected number of constraints in $I$. Furthermore, since each of the $q_i(x)$ and $p_i(I)$ are non-negative we have by linearity of expectation
\[
\Ex[\substack{I \sim \D(p)\\ x \sim \{-1,1\}^n}]{\mu_*(x,I)\sum_{i=0}^R q_i(x)p_i(I)} = \sum_{i=0}^{R}\Ex[(x,I) \sim \D_*]{q_i(x)p_i(I)} \geq 0.
\]

Of course this cannot be a correct proof of a lower bound, because we have not yet used that the LP formulation has size $R$ at all. In particular, the distribution $\D_*$ has very small probability mass under the background measure $\D(p)$, and so $\mu_*$ is supported only a small fraction of instances.
Furthermore, an LP--or any algorithm--for refuting a random CSP is not supposed to work on planted instances. So the fact that $\mu_*$ is identically zero whenever the above non-negative representation exists is unsurprising.

The solution to this problem is a technique called pseudo-calibration, which allows us to construct a function $H(x,I)$ with similar properties to $\mu_*(x,I)$, but which is much more spread out over the space of possible instances. Briefly, $H(x,I)$ is constructed by simply dropping the high-degree terms from the Fourier expansion of $\mu_*$ over an appropriately chosen Fourier basis. Our goal is then to show that the two inequalities \pref{eqn:proofapproachlhs} and \pref{eqn:proofapproachrhs} are satisfied. 

The inequality in \pref{eqn:proofapproachlhs} is easier to satisfy since the objective function $f_I(x)$ is low-degree, and so when averaging with a low-degree trunctation of $\mu_*(x,I)$ nothing changes. The inequality in \pref{eqn:proofapproachrhs} is more difficult, and requires the bulk of the work in the paper. The issue is that the $p_i$ and $q_i$ may be arbitrary non-negative functions. 
The only fact we have to work with is that there are at most $R$ of them. 

The main idea is to look at each $p_i(I)$ individually, and to think of each one (after re-normalization) as a probability distribution $\D_i$ over instances $I$. We then decompose each $\D_i$ into a collection of simpler (conjunctive blockwise dense) distributions $\D_{i,j}$ plus a small error term. 
Since there are at most $R$ distributions $\D_i$, if we make the error term small enough (say having total probability mass much less than $\frac{1}{R^2}$) then the total contribution of the error will be negligible.

Finally, using Fourier analytic properties of $H(x,I)$, we show that $\Ex[I \sim \D_{i,j}]{H(x,I)}$ is non-negative with high-probability over $x$ for all of the simpler distributions $\D_{i,j}$. Thus, since each $q_i(x)$ is non-negative, we are able to conclude that \pref{eqn:proofapproachrhs} holds, up to the error introduced by our decomposition into simpler distributions. This yields the desired contradiction.
We now give a more detailed overview of the techniques involved in each piece of the above high-level description.

\paragraph{Pseudo-calibration and the Fourier expansion of $\mu_*$}
As described above we define the function $H(x,I)$ by dropping the high degree terms in the Fourier expansion of $\mu_*(x,I)$. We then wish to show that, for any conjunctive blockwise dense distribution $\D$, the function $G(x) = \Ex[I \sim \D]{H(x,I)}$ is non-negative with high probability over $x$. We will achieve this by showing that the Fourier coefficients of $G$ decay very rapidly as the degree increases. We then appeal to a concentration inequality based on the hypercontractivity of bounded degree polynomials over $\{-1,1\}^n$.

Since the CBD distribution $\D$ fixes some subset $U$ of the scopes of $I$, the key to bounding the Fourier coefficients of $G$ is to bound the Fourier coefficients of $H$ after fixing $U$. If $H$ were an actual probability density, this could be achieved by bounding the coefficients of the density conditioned on $U$. However, $H$ is merely the low-degree terms in the Fourier expansion of the true probability density $\mu_*$, so this cannot be done directly.

Instead, let $\mu_*\rvert_{U}$ and $H\rvert_U$ respectively denote the density of $\mu_*$ conditioned on $U$ and the function $H$ with the scopes in $U$ fixed. Then we have four functions to consider $\mu_*$, $\mu_*\rvert_U$, $H$ and $H\rvert_U$. Since $H$ is a low-degree truncation of $\mu_*$ it is easy to relate these two functions. Similarly, we can relate $\mu_*$ to $\mu_*\rvert_U$ and $H$ to $H\rvert_U$. 
This allows us to write down an explicit formula for the difference $\mu*\rvert_U - H\rvert_U$ in terms of $\mu_*$. Thus to compute bounds on the Fourier coefficients of $H\rvert_U$ we need only to compute bounds on the Fourier coefficients of $\mu_*\rvert_U$.

Up to this point, our argument is completely generic in that it is not specific to CSPs, and could hypothetically be carried out for other combinatorial optimization problems. In fact, the only property we have used so far is that the uniform distribution on instances and solutions is a product measure, so that there is some reasonable Fourier basis in which to express $\mu_*$. To actually compute the requisite bounds on the Fourier coefficients, we of course must use some facts particular to the planted distribution $\D_*$ for CSPs. In particular, we compute explicit bounds on the Fourier coefficients of $\mu_*$ and $\mu_*\rvert_U$ for any CSP supporting a $(t-1)$-wise uniform distribution on satisfying assignments.

It is worthwhile to note that this approach does not involve constructing a valid Sherali-Adams integrality gap for any one fixed instance (as is done in prior work). Rather, the proof is purely Fourier-analytic, and relies only on showing that certain functions are bounded \emph{on average} over instances.


\paragraph{Conjunctive blockwise dense decompositions}
In order to decompose a distribution $\D$ into a convex combination of conjunctive blockwise dense distributions, we modify the approach of \cite{GoosLMWZ16, KothariMR17}. 
These prior works start with some distribution which does not put too much probability mass on any one instance, and then uses a greedy strategy to decompose any such distribution.
We cannot use these results directly because they give a dependence between the maximum size $B$ of a fixed block $U$ in the resulting CBD decompositions, and the $\epsilon$ probability mass of $\D$ left over as error. This dependence means that we cannot make $\epsilon$ small enough for our applications without having $B$ grow to be larger than the total number of coordinates available to put in the fixed block $U$.

Therefore we modify their approach by using essentially the same greedy decomposition algorithm, but also recursively truncating the intermediate distributions to remove the parts which put too much mass on any one instance relative to the background measure.
We ensure that the set of instances removed in this way has small mass under the background measure, and this allows us to achieve the required dependence between $B$ and $\epsilon$.


\paragraph{Outline of the rest of the paper}
In \pref{sec:pseudocalibration} we formally define the pseudo-calibration of a planted density $\mu_*(x,I)$. We then identify Fourier analytic properties of $\mu_*$ which guarantee that, after averaging $\mu_*$ over a conjunctive blockwise dense distribution on instances, the resulting function of $x$ is non-negative with high probability.
In \pref{sec:csppseudocalibration} we compute bounds on the Fourier coefficients of $\mu_*$ for the case of constraint satisfaction problems. In particular, we show that the Fourier analytic properties identified in \pref{sec:pseudocalibration} are satisfied for CSPs which support a $(t-1)$-wise independent distribution on satisfying assignments to their defining predicate.
In \pref{sec:cbddecomposition} we prove that any distribution on instances can be decomposed into a convex combination of conjunctive blockwise dense distributions plus a small error term.
Finally in \pref{sec:lbproof} we put all the pieces together to complete the proof as described above.

\section{Pseudo-Calibration}
\label{sec:pseudocalibration}
In this section we formally define the pseudo-calibration of the planted density for an optimization problem. We then introduce generic Fourier analytic properties of the planted density that are sufficient for proving LP lower bounds. In particular, we will show in this section that any planted density satisfying these properties is non-negative with high probability after averaging over a conjunctive blockwise dense distribution on instances.

\subsection{The Planted Density}
The first step to formally define pseudo-calibration is to define what is meant by a planted distribution $\D_*$. Let $\Omega$ be a finite set and $M$ be an integer depending on $n$. We will assume that instances $I$ of the optimization problem are elements of $\Omega^M$ and there is some null distribution on instances denoted $\D_\emptyset$.
\begin{definition}
	\label{def:planted}
	A planted distribution $\D_*$ is a distribution on solution and instance pairs $(x,I)$ which can be sampled as follows:
	\begin{itemize}
		\item Sample $x$ uniformly at random from $\{-1,1\}^n$.
		\item Sample $I$ from a distribution $\D_x$ such that all instances $I \in \supp\{D_x\}$ satisfy
		\[x = \argmax_{x' \in \{-1,1\}^n} f_I(x').\]
	\end{itemize}
\end{definition}
In words, the planted distribution $\D_*$ first samples a uniformly random solution $x$, and then samples an instance $I$ so that $x$ is an optimal solution for $I$.
Let $\mu_*(x,I)$ be the density of $\D_*$ relative to the distribution $\{-1,1\}^n \times \D_{\emptyset}$ where solutions and instances are sampled independently.

Next we describe the appropriate Fourier basis for expanding $\mu_*$. Let $\chi_\alpha$ be the Fourier basis for $L^2(\{-1,1\}^n)$. That is, for $\alpha \subseteq [n]$
\[
\chi_\alpha(x) = \prod_{i \in \alpha} x_i.
\]
Let $\zeta_\sigma$ be a Fourier basis for $L^2(\Omega^M,\D_\emptyset)$. That is, for $\sigma \in \{0,\dots,\abs{\Omega}-1\}^M$
\[
	\zeta_\sigma(I) = \prod_j \zeta_{\sigma_j}(I_j)
\]
where we require that
\begin{itemize}
	\item $\zeta_0 \equiv 1$.
	\item $\Ex[\D_{\emptyset}]{\zeta_i\zeta_j} = 0$ for $i \neq j$.
	\item $\Ex[\D_{\emptyset}]{\zeta_j^2} = 1$ for all $j$.
\end{itemize}
We also define the support of $\sigma$ to be $\supp(\sigma) \defeq \{j \in M \mid \sigma_j \neq 0\}$ and set $\abs{\sigma} = \Abs{\supp(\sigma)}$.
For a pair $\sigma,\sigma'$ with disjoint support we will write $\sigma + \sigma'$ to denote the coordinate-wise sum of $\sigma$ and $\sigma'$. In particular, $\sigma + \sigma'$ is equal to $\sigma$ on $\supp(\sigma)$, equal to $\sigma'$ on $\supp(\sigma')$, and equal to zero everywhere else.

We can then write $\mu_*(x,I)$ as a function in the tensor product of the spaces spanned by the $\chi_\alpha$ and $\zeta_\sigma$.
That is
\[
	\mu_*(x,I) = \sum_{\alpha,\sigma} \widehat{\mu_*}(\alpha,\sigma) \chi_\alpha(x)\zeta_\sigma(I)
\]
where the Fourier coefficients above are given by the inversion formula
\[
	\widehat{\mu_*}(\alpha,\sigma) = \Ex[(x,I) \sim \{-1,1\}^n \times \D_\emptyset]{\mu_*(x,I)\chi_\alpha(x)\zeta_\sigma(I)} = \Ex[(x,I) \sim \D_*]{\chi_\alpha(x)\zeta_\sigma(I)}.
\]

Finally, we are ready to define the pseudo-calibrated density. For a pair $d = (d_x,d_I) \in \N^2$ let $A(d) \defeq \{(\alpha,\sigma) \mid \abs{\alpha} \leq d_x, \abs{\sigma} \leq d_I\}$. That is, $A(d)$ corresponds to the set of Fourier coefficients where $x$ has degree at most $d_x$ and $I$ has degree at most $d_I$.
We will use $L_d$ to denote the linear projection operator onto the span of $\{\chi_\alpha \zeta_\sigma\}_{(\alpha,\sigma) \in I(d)}$.
\begin{definition}
\label{def:pseudocalibration}
	For $d = (d_x,d_I)\in \N^2$ we define the $d$-pseudo-calibration $\bar{\mu}_* \defeq L_d \mu_*$ . Equivalently
	\[
		\bar{\mu}_*(x,I) = \sum_{(\alpha,\sigma)\in A(d)} \widehat{\mu_*}(\alpha,\sigma) \chi_\alpha(x)\zeta_\sigma(I)
	\]
\end{definition}

\subsection{Conditioning the Pseudo-calibrated Density}
In this section we will identify the properties of a pseudo-calibrated density that we will use for proving LP lower bounds. In particular, we will show that these properties imply that, after averaging $\bar{\mu}_*$ over a CBD distribution on instances, the resulting function is non-negative with high probability.
Since CBD distributions fix a block $U$ of coordinates of $I$ to a given value, the analysis will be based on computing bounds on the Fourier coefficients of $\bar{\mu}_*$ after fixing such a block $U$. In particular, we will relate this to the Fourier coefficients of $\mu_*\rvert_U$, the density of $\D_*$ conditioned on the fixed block $U$.

Recall that for a fixed block $U\subseteq [M]$ we defined the restricted instance $I_U$ to be the subset of coordinates of $I$ corresponding to $U$.
We will also define $V(U)$ to be the set of coordinates for variables contained in the constraint scopes of $I_U$. When $U$ is clear from context we will write $x_V\defeq x_{V(U)}$ for the variables on this set of coordinates.
We now introduce notation for the restriction of a function on instances.

\begin{definition}
\label{def:restrictplanted}
Let $f:\{-1,1\}^n\times \Omega^M \to \R$ be a function on assignments $x$ and instances $I$. Let $U \subseteq [M]$ and fix a restricted instance $I_U$. Then $R_U f$ is the function defined on the coordinates outside of $U$ given by
\[
	R_U f(x,I_{\setcomp{U}}) \defeq f\left(x,(I_{\setcomp{U}},I_U)\right).
\]
\end{definition}
A key fact about the restriction operator $R_U$ is that it transforms the Fourier coefficients of a function in a simple way.
\begin{observation}
	\label{obs:fourierrestrict}
	Let $f(x,I)$ be a function on instances and assignments. Then for all $\sigma$ with $\supp(\sigma) \cap U = \emptyset$ 
	\[
		\widehat{R_U f}(\alpha,\sigma) = \sum_{ \sigma' : \supp(\sigma') \subseteq U} \widehat{f}(\alpha, \sigma + \sigma')\zeta_{\sigma'}(I_U).
	\]
\end{observation}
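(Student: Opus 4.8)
The plan is to prove the identity by a direct computation with the Fourier expansion of $f$, using only the product structure of the basis $\{\zeta_\sigma\}$ and the fact that the restriction operator $R_U$ merely substitutes the fixed value $I_U$. I would begin from the expansion $f(x,I) = \sum_{\alpha,\tau}\widehat f(\alpha,\tau)\,\chi_\alpha(x)\,\zeta_\tau(I)$ and substitute $I = (I_{\setcomp U}, I_U)$ with $I_U$ held fixed, as in \pref{def:restrictplanted}. The crucial point is that every multi-index $\tau \in \{0,\dots,\abs{\Omega}-1\}^M$ decomposes \emph{uniquely} as $\tau = \sigma + \sigma'$ with $\supp(\sigma)\subseteq\setcomp U$ and $\supp(\sigma')\subseteq U$ (take $\sigma$ to be $\tau$ restricted to $\setcomp U$ and $\sigma'$ to be $\tau$ restricted to $U$, each extended by zeros), and that by the product formula $\zeta_\tau(I) = \prod_j \zeta_{\tau_j}(I_j)$ together with $\zeta_0 \equiv 1$ this gives the factorization $\zeta_\tau(I) = \zeta_\sigma(I_{\setcomp U})\,\zeta_{\sigma'}(I_U)$.

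Substituting and regrouping the sum over $\tau$ as a double sum over $(\sigma,\sigma')$ yields
\[
R_U f(x,I_{\setcomp U}) = \sum_{\alpha}\ \sum_{\sigma:\,\supp(\sigma)\subseteq\setcomp U}\Bigparen{\sum_{\sigma':\,\supp(\sigma')\subseteq U}\widehat f(\alpha,\sigma+\sigma')\,\zeta_{\sigma'}(I_U)}\chi_\alpha(x)\,\zeta_\sigma(I_{\setcomp U}).
\]
Because $I_U$ is fixed, the bracketed quantity is a constant depending only on $\alpha$, $\sigma$ and the value $I_U$; hence the right-hand side is precisely the expansion of $R_U f$ in the basis $\{\chi_\alpha\zeta_\sigma\}$ indexed by $\alpha\subseteq[n]$ and $\sigma$ with $\supp(\sigma)\subseteq\setcomp U$. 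Matching the coefficient of $\chi_\alpha\zeta_\sigma$ on both sides — and noting that $\supp(\sigma)\subseteq\setcomp U$ is the same condition as $\supp(\sigma)\cap U = \emptyset$ — gives the claimed formula $\widehat{R_U f}(\alpha,\sigma) = \sum_{\sigma':\,\supp(\sigma')\subseteq U}\widehat f(\alpha,\sigma+\sigma')\,\zeta_{\sigma'}(I_U)$.

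I do not expect any genuine obstacle: the statement is a bookkeeping consequence of the tensor-product structure of the Fourier basis together with the fact that restriction touches only the coordinates in $U$. The one place to be slightly careful is in justifying that one may read off Fourier coefficients from the displayed identity — i.e. that the inner sum over $\sigma'$ really produces constants rather than functions of $x$ or $I_{\setcomp U}$ — which is immediate since $I_U$ is frozen. If one prefers to sidestep the appeal to uniqueness of expansions, an equivalent route is to apply the inversion formula $\widehat{R_U f}(\alpha,\sigma) = \Ex[x,\,I_{\setcomp U}]{R_U f(x,I_{\setcomp U})\,\chi_\alpha(x)\,\zeta_\sigma(I_{\setcomp U})}$, expand $f$ inside the expectation, and use $\Ex{\chi_\alpha\chi_{\alpha'}}=\delta_{\alpha\alpha'}$ together with the coordinatewise orthonormality $\Ex[\D_\emptyset]{\zeta_i\zeta_j}=\delta_{ij}$ on the coordinates of $\setcomp U$; only the terms with $\tau$ agreeing with $\sigma$ off $U$ survive, again producing the same sum.
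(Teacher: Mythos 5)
Your proof is correct and follows exactly the same route as the paper's (very terse) proof: decompose each index $\tau$ uniquely into its $U$ and $\setcomp{U}$ parts, factor $\zeta_\tau$ using the tensor-product structure and $\zeta_0\equiv 1$, and read off the coefficient of $\chi_\alpha\zeta_\sigma$. You have simply made explicit the bookkeeping that the paper treats as immediate.
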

\begin{proof}
	The proof is immediate from the fact that precisely those coordinates in $U$ are set to the fixed values $I_U$, and all other coordinates remain as inputs to $R_U f$.
\end{proof}

Next we formally define the conditional planted density. Recall that $V = V(U)$ is the set of coordinates for variables in $x$ corresponding to the restricted instance $I_U$.
\begin{definition}
\label{def:condtionplanted}
Let $U \subseteq [M]$ and fix a restricted instance $I_U$. Then $\mu_*\rvert_U$ is the planted density conditioned on $U$ defined by
\[
	\mu_*\rvert_U\left(x,I_{\setcomp{U}}\right) \defeq \frac{\Pr_{\D_*}[x_{\setcomp{V}},I_{\setcomp{U}} \mid x_V, I_U]}{\Pr_{\D_\emptyset}[x_{\setcomp{V}},I_{\setcomp{U}} \mid x_V, I_U]}
	= \frac{\Pr_{\D_*}[x_{\setcomp{V}},I_{\setcomp{U}} \mid x_V, I_U]}{\Pr_{\D_\emptyset}[x_{\setcomp{V}},I_{\setcomp{U}}]}
\]
\end{definition}
An important but subtle point in the above definition is that we think of $\mu_*\rvert_U$ as a function of all the variables in $x$, including those variables $x_V$ on which we condition. 
We are now ready to state the properties of the planted density that allow us to prove LP lower bounds.

\begin{definition}
\label{def:plantedproperties}
	Fix $d_x, d_I$ and $\delta > 0$.
	The planted density $\mu_*$ exhibits $\epsilon(s_x,s_I)$-\emph{conditional Fourier decay} if there is a bound $\Abs{\widehat{\mu_*\rvert_U}(\alpha,\sigma)} \leq B_U(\alpha,\sigma)$, such that for all $b$-CBD distributions $\D$ with parameter $\delta$ and fixed block $U$:
	\begin{enumerate}
		\item For each $\alpha \subseteq [n]$ with $\abs{\alpha} = s_x \leq d_x$ and each $s_I\leq d_I$
		\[
			\sum_{s_I \leq \abs{\sigma} \leq d_I} B_U(\alpha,\sigma)\Ex[\D]{\zeta_\sigma} \leq \binom{n}{s_x}^{-\frac{1}{2}}\epsilon(s_x, s_I).
		\]
		\item For each $\alpha$ and each pair $\sigma$, $\sigma'$ with disjoint support
		\[
			B_{\emptyset}(\alpha,\sigma + \sigma') \norm{\zeta_{\sigma'}}_\infty \leq  B_{\emptyset}(\alpha,\sigma)
		\]
	\end{enumerate}
\end{definition}
Note that $\epsilon(s_x,s_I)$ may depend on $\delta$ and $b$, but not on any other properties of the distribution $\D$.
The first property should be thought of as giving an upper bound on the Fourier coefficients of the conditional density after averaging over a blockwise-dense distribution on instances.
The factor of $\binom{n}{s_x}^{-\frac{1}{2}}$ is included so that the sum of the squares of these averaged Fourier coefficients is bounded by $\epsilon(s_x,s_I)^2$.
The second property allows us to obtain a bound on the Fourier coefficients of the restricted density in terms of the original density by appealing to \pref{obs:fourierrestrict}.

In the remainder of this section we will prove that the properties from \pref{def:plantedproperties} imply that, after averaging $\bar{\mu}_*$ over a CBD distribution, the resulting function is non-negative with high probability.
It is straightforward to relate the restriction of the planted density $R_U\mu_*$ to the conditional planted density $\mu_*\rvert_U$ using Bayes' rule.
However, we are actually interested in the restriction of the pseudo-calibrated density $R_U\bar{\mu}_*$.
This adds some subtlety to the argument which in turn necessitates the second condition in \pref{def:plantedproperties}.
In particular, we still use the relationship obtained via Bayes' rule, but must deal with an additional error term introduced by pseudo-calibration.
This additional error term is the one which requires the second condition mentioned above.

First, we introduce notation for the density on instances within the set of scopes $U$.

\begin{definition}
	\label{def:plantedwithinU}
	Let $U\subseteq [M]$. Then $\pi_U$ is the density on instances within $U$ given by
	\[
		\pi_U(x_V, I_U) \defeq \frac{\Pr_{D_*}[x_{V}, I_U]}{\Pr_{D_\emptyset}[x_{V}, I_U]}
	\]
	where the notation $x_V$ and $I_U$ emphasizes the fact that $\pi$ is defined only over coordinates in $U$ and $V = V(U)$.
\end{definition}

With these definitions in hand we can relate the restriction of the planted density to the conditional planted density.

\begin{lemma}
	\label{lem:restrictcondition}
	Let $U$ and $I_U$ be as in \pref{def:restrictplanted}. Then
	\[
		R_U\mu_*\left(x,I_{\setcomp{U}}\right) = \mu_*\rvert_U\left(x,I_{\setcomp{U}}\right)\pi_U(x_V, I_U).
	\]
\end{lemma}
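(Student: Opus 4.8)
The plan is to unwind every object in the statement back to its definition as a ratio of probability mass functions; once this is done, the claimed identity collapses to the chain rule for the joint distribution $\D_*$ together with the product structure of the null distribution $\{-1,1\}^n\times\D_\emptyset$. Throughout, $\Pr_{\D_\emptyset}[\cdot]$ should be read as probability under this product null measure, exactly as in the denominators of \pref{def:condtionplanted} and \pref{def:plantedwithinU}.

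First I would recall that, by definition of the planted density in \pref{def:planted}, $\mu_*(x,I) = \Pr_{\D_*}[x,I]/\Pr_{\D_\emptyset}[x,I]$, so that $R_U\mu_*(x,I_{\setcomp{U}})$ is precisely this ratio evaluated at the instance $(I_{\setcomp{U}}, I_U)$ with the block $I_U$ held fixed. Next, split the coordinates of a pair $(x,I)$ into the ``inside $U$'' coordinates $(x_V, I_U)$, with $V=V(U)$, and the ``outside $U$'' coordinates $(x_{\setcomp{V}}, I_{\setcomp{U}})$, and apply the chain rule to the numerator,
\[
\Pr_{\D_*}[x,I] = \Pr_{\D_*}[x_V, I_U]\cdot\Pr_{\D_*}[x_{\setcomp{V}}, I_{\setcomp{U}}\mid x_V, I_U],
\]
and similarly to the denominator, where moreover the conditioning on $(x_V,I_U)$ can be dropped,
\[
\Pr_{\D_\emptyset}[x,I] = \Pr_{\D_\emptyset}[x_V, I_U]\cdot\Pr_{\D_\emptyset}[x_{\setcomp{V}}, I_{\setcomp{U}}].
\]
The deletion of the conditioning in this second line is the one place where we use that $x$ is a uniform product over $[n]$ and $\D_\emptyset$ a product over the $M$ instance coordinates, so that the inside and outside coordinates are independent under the null. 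Dividing these two factorizations and matching the resulting factors against \pref{def:condtionplanted} and \pref{def:plantedwithinU} gives exactly $R_U\mu_*(x,I_{\setcomp{U}}) = \mu_*\rvert_U(x,I_{\setcomp{U}})\,\pi_U(x_V, I_U)$.

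I do not expect a genuine obstacle here; this is a bookkeeping identity whose role is to prepare the Fourier-coefficient comparison between $\mu_*\rvert_U$ and the restricted pseudo-calibrated density in the rest of the section, and the argument above is essentially a two-line application of Bayes' rule. The only thing that must be kept straight is the subtle point flagged after \pref{def:condtionplanted}: all three of $R_U\mu_*$, $\mu_*\rvert_U$, and the product $\mu_*\rvert_U\cdot\pi_U$ are to be regarded as functions of the full assignment $x$ (in particular $\mu_*\rvert_U$ retains $x_V$ as a free variable) together with $I_{\setcomp{U}}$, with $I_U$ fixed. Once the domains are aligned in this way, the equality is literally the factorization displayed above.
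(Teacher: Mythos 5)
Your proof is correct and is essentially the paper's argument: both apply Bayes' rule (the chain rule) to split $\Pr_{\D_*}[x,I]/\Pr_{\D_\emptyset}[x,I]$ into the conditional-over-marginal product across the inside/outside-$U$ coordinates and then match factors against \pref{def:condtionplanted} and \pref{def:plantedwithinU}. Your extra remark that conditioning drops in the null denominator by product structure is already absorbed into the second equality of \pref{def:condtionplanted} in the paper, so it is a justified clarification rather than a different route.
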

\begin{proof}
	By \pref{def:restrictplanted} and Bayes' rule
	\[
		R_U\mu_*\left(x,I_{\setcomp{U}}\right) = \frac{\Pr_{D_*}[x,(I_{\setcomp{U}},I_U)]}{\Pr_{D_\emptyset}[x,(I_{\setcomp{U}},I_U)]}
		= \frac{\Pr_{\D_*}[x_{\setcomp{V}},I_{\setcomp{U}} \mid x_V, I_U] \Pr_{\D_*}[x_V,I_U]}{\Pr_{\D_\emptyset}[x_{\setcomp{V}},I_{\setcomp{U}} \mid x_V, I_U]\Pr_{\D_\emptyset}[x_V,I_U]}
	\]
	the proof then follows by the definitions of $\mu_*\rvert_U$ and $\pi_U$.
\end{proof}
Our goal is to use the relationship from the above lemma to obtain bounds on the Fourier coefficients of $R_U \bar{\mu}_*$. We begin by deriving a formula for $R_U\bar{\mu}_*$ in terms of $\pi_U$ and $\mu_*\rvert_U$ plus an error term.

\begin{lemma}
	\label{lem:highlowdecomp}
	Let $U$ and $I_U$ be as in \pref{def:restrictplanted}. Let $d = (d_x,d_I)$ be the degree bound so that $\bar{\mu}_* = L_d\mu_*$. Further, let $d' = (d_x,d_I - \abs{U})$ and $d'' = (d_x, d_I - 2\abs{U})$. Then
	\[
		R_U\bar{\mu}_* = \pi_U \cdot L_{d''}\mu_*\rvert_U + h
	\]
	where the error term $h$ is given by
	\[
		h = (L_{d'}R_U\mu_* - L_{d''}R_U\mu_*) + (R_U\bar{\mu}_* - L_{d'}R_U\bar{\mu}_*).
	\]
\end{lemma}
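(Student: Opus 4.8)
The lemma is a bookkeeping identity whose proof is short once two ingredients — \pref{lem:restrictcondition} and \pref{obs:fourierrestrict} — are in hand; the plan is to rearrange it into two clean sub-claims and dispatch each. Substituting the displayed definition of $h$ into the asserted equality $R_U\bar{\mu}_* = \pi_U\cdot L_{d''}\mu_*\rvert_U + h$ and cancelling the common term $R_U\bar{\mu}_*$ from both sides, one sees that the lemma is equivalent to
\[
L_{d''}R_U\mu_* \;+\; L_{d'}R_U(\bar{\mu}_* - \mu_*) \;=\; \pi_U\cdot L_{d''}\mu_*\rvert_U .
\]
I would therefore reduce to proving (A) $L_{d'}R_U(\mu_* - \bar{\mu}_*) = 0$ and (B) $L_{d''}R_U\mu_* = \pi_U\cdot L_{d''}\mu_*\rvert_U$; adding these two gives the lemma.

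For (A), I would use that $\bar{\mu}_* = L_d\mu_*$, so $\widehat{(\mu_* - \bar{\mu}_*)}(\alpha,\tau)$ is nonzero only when $\abs{\alpha} > d_x$ or $\abs{\tau} > d_I$. Applying \pref{obs:fourierrestrict} to $\mu_* - \bar{\mu}_*$, for any $\sigma$ with $\supp(\sigma)\cap U = \emptyset$ we have
\[
\widehat{R_U(\mu_* - \bar{\mu}_*)}(\alpha,\sigma) \;=\; \sum_{\sigma'\,:\,\supp(\sigma')\subseteq U}\widehat{(\mu_* - \bar{\mu}_*)}(\alpha,\sigma + \sigma')\,\zeta_{\sigma'}(I_U).
\]
When $(\alpha,\sigma)$ lies in the index set of $L_{d'}$ we have $\abs{\alpha}\le d_x$ and $\abs{\sigma}\le d_I - \abs{U}$, and since $\supp(\sigma)$ and $\supp(\sigma')$ are disjoint with $\abs{\sigma'}\le\abs{U}$, every summand has $\abs{\sigma + \sigma'} = \abs{\sigma} + \abs{\sigma'}\le d_I$; hence $\widehat{(\mu_* - \bar{\mu}_*)}$ vanishes on all those pairs, so $L_{d'}R_U(\mu_* - \bar{\mu}_*) = 0$.

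For (B), I would first invoke \pref{lem:restrictcondition} to replace $R_U\mu_*$ by $\pi_U\cdot\mu_*\rvert_U$, so that (B) becomes the assertion that multiplication by $\pi_U$ commutes with the projection $L_{d''}$. The key observation is that $\pi_U$ depends only on the coordinates in the fixed block $U$ and on the variables $V(U)$ occurring in those scopes; in particular it is constant in $I_{\setcomp{U}}$, so it does not move mass between the $\zeta_\sigma$-isotypic pieces (for $\sigma$ supported in $\setcomp{U}$) and hence commutes with truncating the $\zeta$-degree, the analogous statement for the $x$-variables being straightforward since $\pi_U$ touches only the $\le k\abs{U}$ variables of $V(U)$. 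Concretely, writing $\mu_*\rvert_U = L_{d''}\mu_*\rvert_U + (\mu_*\rvert_U - L_{d''}\mu_*\rvert_U)$, the product $\pi_U\cdot L_{d''}\mu_*\rvert_U$ already lies in the range of $L_{d''}$, while $\pi_U\cdot(\mu_*\rvert_U - L_{d''}\mu_*\rvert_U)$ still has $\zeta$-degree exceeding $d_I - 2\abs{U}$ and is annihilated by $L_{d''}$; this yields (B).

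The step that needs genuine care is (B): one must track precisely which coordinates $\pi_U$ ``lives on,'' so that it slips through $L_{d''}$ without disturbance. This is exactly why the statement threads the three nested degree bounds $d$, $d'$, $d''$ with successive drops of $\abs{U}$ — one drop is consumed in (A), where \pref{obs:fourierrestrict} absorbs $\abs{\sigma'}\le\abs{U}$ units of $\zeta$-degree into the characters $\zeta_{\sigma'}(I_U)$, and the second drop supplies the slack that makes the commutation in (B) exact. Everything else is linear-algebraic rearrangement.
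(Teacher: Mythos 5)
Your reduction of the lemma to the two claims (A) $L_{d'}R_U(\mu_* - \bar{\mu}_*) = 0$ and (B) $L_{d''}R_U\mu_* = \pi_U\cdot L_{d''}\mu_*\rvert_U$ is exactly the algebraic content of the paper's proof, and your derivation of (A) via \pref{obs:fourierrestrict} reproduces the paper's first step correctly. However, your justification of (B) has a gap in the $x$-variable part. The assertion that ``the analogous statement for the $x$-variables is straightforward since $\pi_U$ touches only the $\leq k\abs{U}$ variables of $V(U)$'' is not true: multiplication by a polynomial in few variables does not commute with degree truncation in those variables. Concretely, $\pi_U$ can have Fourier weight on $\chi_\tau$ with $\emptyset \neq \tau\subseteq V(U)$, and for such $\tau$ the product $\chi_\tau\cdot\chi_{\alpha}$ has degree $\abs{\alpha\triangle\tau}$, which may be larger or smaller than $\abs{\alpha}$. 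Your sub-claim that $\pi_U\cdot L_{d''}\mu_*\rvert_U$ ``already lies in the range of $L_{d''}$'' fails when $\abs{\alpha\triangle\tau}>d_x$ while $\abs{\alpha}\le d_x$, and your sub-claim that $\pi_U\cdot(\mu_*\rvert_U - L_{d''}\mu_*\rvert_U)$ is annihilated by $L_{d''}$ fails in the complementary direction: a term $\chi_\alpha\zeta_\sigma$ with $\abs{\alpha} > d_x$ but $\abs{\sigma}\le d_I - 2\abs{U}$ survives $\Id - L_{d''}$, yet multiplying by some $\chi_\tau$ can bring $\abs{\alpha\triangle\tau}$ down to $\le d_x$, giving a nonzero contribution inside the range of $L_{d''}$. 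Both failures stem from the fact that $d''$ keeps the same $x$-degree bound $d_x$ as $d$, while $\pi_U$ is generically nonconstant in $x_V$.

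To be fair, the paper's own proof is no more careful here: the only justification it gives for the identity $L_{d''}(\pi_U\cdot\mu_*\rvert_U) = \pi_U\cdot L_{d''}\mu_*\rvert_U$ is that ``$\pi_U$ has degree at most $\abs{U}$ in $I$,'' which addresses the $\zeta$-degree commutation (correct, since $\pi_U$ is constant in $I_{\setcomp{U}}$) but says nothing about the shift in $x$-degree that multiplication by $\pi_U$ induces. So you have faithfully tracked the paper's argument and identified the exact spot where care is needed; what is missing in both your write-up and the paper is an argument that the $x$-degree discrepancy either vanishes for the specific $\mu_*\rvert_U$ at hand, or can be absorbed into the high-degree error term $h$ and controlled alongside it in \pref{prop:nonnegwhp}.
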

\begin{proof}
	Observe that by \pref{obs:fourierrestrict} and the definition of $d$ and $d'$
	\[
		L_{d'}R_U \bar{\mu}_* = L_{d'}R_U L_d \mu_* = L_{d'} R_U \mu_*.
	\]
	Further, by \pref{lem:restrictcondition} we have
	\[
		L_{d'}R_U \mu_* = L_{d'}(\pi_U \cdot \mu_*\rvert_U).
	\]
	Using the fact that $\pi_U$ has degree at most $\abs{U}$ in $I$, and the definition of $d''$ we obtain
	\[
		L_{d'}(\pi_U \cdot \mu_*\rvert_U) = \pi_U \cdot L_{d''}\mu_*\rvert_U + (L_{d'}R_U\mu_* - L_{d''}R_U\mu_*).
	\]
	Putting it all together we get
	\[
		L_{d'}R_U \bar{\mu}_* = \pi_U \cdot L_{d''}\mu_*\rvert_U + (L_{d'}R_U\mu_* - L_{d''}R_U\mu_*).
	\]
	Adding $R_U \bar{\mu}_*$ to both sides and rearranging completes the proof.
\end{proof}

We next use the properties from \pref{def:plantedproperties} to prove bounds on the Fourier coefficients of the error term $h$ from \pref{lem:highlowdecomp} above. The main point here is that, although $h$ may lose some factor compared to the level of decay of $\mu_*$, this can be compensated for by the fact that $h$ consists only of high degree terms in $I$.
\begin{lemma}
	\label{lem:hfourierbounds}
	Let $U$, $I_U$, $\D$ and $d$ be as in \pref{def:plantedproperties} and let $\mu_*$ exhibit $\epsilon(s_x,s_I)$-conditional Fourier decay. Let $u = \max(s_I,d_I - 2\Abs{U})$.
	For $\abs{\alpha} = s_x \leq d_x$ and $s_I \leq d_I - \Abs{U}$
		\[
			\sum_{s_I \leq \abs{\sigma} \leq d_I} \Abs{\widehat{h}(\alpha,\sigma)\Ex[\D]{\zeta_\sigma}} \leq \binom{n}{s_x}^{-\frac{1}{2}}2^{\Abs{U}}\epsilon(s_x,u).
		\]
\end{lemma}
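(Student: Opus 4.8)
The plan is to use the decomposition $h = h_1 + h_2$ from \pref{lem:highlowdecomp}, where $h_1 = L_{d'}R_U\mu_* - L_{d''}R_U\mu_*$ and $h_2 = R_U\bar\mu_* - L_{d'}R_U\bar\mu_*$, and to control each piece's contribution to the sum separately. The gain we must extract is that, although passing from $\mu_*$ to $h$ loses a multiplicative factor $2^{\Abs U}$, this is more than compensated because $h$ is built only from \emph{high}-degree frequencies in $I$, where the available decay $\epsilon(s_x,\cdot)$ is strongest.

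\textbf{Step 1: locating the Fourier support of $h$.} By definition of the projections $L_{d'}$ and $L_{d''}$, the coefficient $\widehat{h_1}(\alpha,\sigma)$ vanishes unless $\Abs\alpha \le d_x$ and $d_I - 2\Abs U < \Abs\sigma \le d_I - \Abs U$, where it equals $\widehat{R_U\mu_*}(\alpha,\sigma)$; and $\widehat{h_2}(\alpha,\sigma)$ vanishes unless $\Abs\alpha\le d_x$ and $\Abs\sigma > d_I - \Abs U$, where it equals $\widehat{R_U\bar\mu_*}(\alpha,\sigma)$. Moreover $\bar\mu_* = L_d\mu_*$ has $I$-degree at most $d_I$, and by \pref{obs:fourierrestrict} the restriction $R_U$ cannot raise the degree on the surviving coordinates, so in $h_2$ we also have $\Abs\sigma\le d_I$. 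Hence every $\sigma$ contributing to $h$ satisfies $\Abs\sigma > d_I - 2\Abs U$, so that in the sum of the statement only $\sigma$ with $\Abs\sigma\ge u$, $u = \max(s_I, d_I - 2\Abs U)$, actually occur.

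\textbf{Step 2: a pointwise bound on $\widehat h$.} The goal of this step is $\Abs{\widehat h(\alpha,\sigma)} \le 2^{\Abs U}B_U(\alpha,\sigma)$ for every pair in the support from Step 1. For both pieces the relevant coefficients are $\widehat{R_U\mu_*}(\alpha,\sigma)$ and $\widehat{R_U\bar\mu_*}(\alpha,\sigma)$, which by \pref{obs:fourierrestrict} equal $\sum_{\supp\sigma'\subseteq U}\widehat{\mu_*}(\alpha,\sigma+\sigma')\zeta_{\sigma'}(I_U)$ and, on the relevant low-degree range, the same sum with $\mu_*$ replaced by $\bar\mu_*$ (which there agrees with $\mu_*$). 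Now $\Abs{\widehat{\mu_*}(\alpha,\tau)} = \Abs{\widehat{\mu_*\rvert_\emptyset}(\alpha,\tau)} \le B_\emptyset(\alpha,\tau)$, and condition 2 of \pref{def:plantedproperties} gives $B_\emptyset(\alpha,\sigma+\sigma')\norm{\zeta_{\sigma'}}_\infty \le B_\emptyset(\alpha,\sigma)$, so each summand has absolute value at most $B_\emptyset(\alpha,\sigma)$; summing over which coordinates of $U$ carry a nonzero entry of $\sigma'$ contributes a factor of at most $2^{\Abs U}$. Finally, the relation $R_U\mu_* = \pi_U\cdot\mu_*\rvert_U$ of \pref{lem:restrictcondition} (again together with condition 2) converts $B_\emptyset$ to $B_U$ on the high-$I$-degree range where $h$ lives, yielding the claim.

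\textbf{Step 3: conclusion, and the main obstacle.} Combining Steps 1 and 2, for $\Abs\alpha = s_x$ and $s_I \le d_I - \Abs U$,
\[
\sum_{s_I \le \Abs\sigma \le d_I} \Abs{\widehat h(\alpha,\sigma)\Ex[\D]{\zeta_\sigma}} \;\le\; 2^{\Abs U}\sum_{u \le \Abs\sigma \le d_I} B_U(\alpha,\sigma)\Ex[\D]{\zeta_\sigma},
\]
and applying condition 1 of \pref{def:plantedproperties} to the $b$-CBD distribution $\D$ with fixed block $U$, with second argument $u$, bounds the right-hand side by $2^{\Abs U}\binom{n}{s_x}^{-\frac12}\epsilon(s_x,u)$, which is the assertion. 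The crux is Step 2: since $h$ is not the restriction of an honest density but the discrepancy between the restricted pseudo-calibration $R_U\bar\mu_*$ and the separately truncated conditional density $\pi_U L_{d''}\mu_*\rvert_U$, its Fourier coefficients cannot be read off directly from those of $\mu_*\rvert_U$, so one must route through all four functions $\mu_*,\mu_*\rvert_U,\bar\mu_*,R_U\bar\mu_*$ as in the discussion preceding \pref{lem:highlowdecomp} and check that the only costs incurred are the $I$-degree loss $2\Abs U$ (harmless by Step 1) and the multiplicative loss $2^{\Abs U}$; keeping this constant equal to $2^{\Abs U}$ rather than something growing with $\Abs\Omega$ is delicate and is exactly what condition 2 of \pref{def:plantedproperties} is meant to buy.
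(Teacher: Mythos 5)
Your Steps 1 and the first half of Step 2 match the paper's argument and are correct: every Fourier coefficient of $h$ is one of $\widehat{R_U\mu_*}$ or $\widehat{R_U\bar\mu_*}$, the supports lie in $d_I-2\Abs{U} < \abs{\sigma} \leq d_I$, and \pref{obs:fourierrestrict} plus condition 2 of \pref{def:plantedproperties} give the pointwise bound $\Abs{\widehat h(\alpha,\sigma)} \leq 2^{\Abs{U}}B_\emptyset(\alpha,\sigma)$.

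The gap is the final sentence of Step 2, where you assert that ``the relation $R_U\mu_* = \pi_U\cdot\mu_*\rvert_U$ (again together with condition 2) converts $B_\emptyset$ to $B_U$ on the high-$I$-degree range.'' Condition 2 only relates $B_\emptyset(\alpha,\sigma+\sigma')$ to $B_\emptyset(\alpha,\sigma)$; it says nothing about $B_U$. Nor does the factorization through $\pi_U$ produce such an inequality: $\pi_U$ is a genuine nonconstant function on $(x_V,I_U)$, and multiplying or dividing by it does not turn the bound $B_\emptyset$ for $\widehat{\mu_*}$ into the bound $B_U$ for $\widehat{\mu_*\rvert_U}$. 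Your Step 3 then needs $B_U$ in the sum in order to invoke condition 1 with fixed block $U$, so the gap propagates into the conclusion. The fix is that no conversion is needed: since $h$ is a function only of $I_{\setcomp{U}}$ and $\D$ is $b$-CBD with fixed block $U$, the induced distribution of $I_{\setcomp{U}}$ is itself blockwise-dense (i.e.\ $0$-CBD with empty fixed block), so one applies condition 1 with the empty fixed block and bound $B_\emptyset$, which immediately yields $2^{\Abs{U}}\sum_{u\le\abs{\sigma}\le d_I}B_\emptyset(\alpha,\sigma)\Ex[\D]{\zeta_\sigma}\le 2^{\Abs{U}}\binom{n}{s_x}^{-1/2}\epsilon(s_x,u)$. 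That is exactly what the paper does (the phrase ``applied in the $U=\emptyset$ case'' in its proof).
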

\begin{proof}
	First observe that since $h = (L_{d'}R_U\mu_* - L_{d''}R_U\mu_*) + (R_U\bar{\mu}_* - L_{d'}R_U\bar{\mu}_*)$, every Fourier coefficient of $h$ is either a Fourier coefficient of $R_U\mu_*$ or $R_U\bar{\mu}_*$.
	Let $A = \{\sigma \mid \supp(\sigma) \subseteq U\}$.
	Then, whenever $d - \abs{U} \geq \abs{\sigma} \geq d_I - 2\abs{U}$, \pref{obs:fourierrestrict} implies
	\begin{align*}
		\Abs{\widehat{h}(\alpha,\sigma)} 
			&\leq \sum_{\sigma' \in A} \Abs{\widehat{\mu_*}(\alpha,\sigma + \sigma')\zeta_{\sigma'}(I_U)}\\
			&\leq \sum_{\sigma' \in A} B_\emptyset(\alpha,\sigma + \sigma')\norm{\zeta_{\sigma'}}_\infty\\
			&\leq 2^{\Abs{U}}B_\emptyset(\alpha,\sigma)
	\end{align*}		
	where the bound $B$ in the final inequality comes from the second property in \pref{def:plantedproperties}.
	Further, if $\abs{\sigma} < d_I - 2\abs{U}$ then $\widehat{h}(\alpha,\sigma) = 0$. Letting $J$ denote the interval $[u,d_I - \abs{U}]$ we then have
	\begin{align*}
		\sum_{\abs{\sigma} \leq d_I}\Abs{\widehat{h}(\alpha,\sigma)\Ex[D]{\zeta_\sigma}}
			\leq \sum_{\abs{\sigma} \in J}2^{\Abs{U}}B_\emptyset(\alpha,\sigma)\Ex[D]{\zeta_\sigma}
			\leq 2^{\Abs{U}}\binom{n}{s_x}^{-\frac{1}{2}}\epsilon(s_x,u)
	\end{align*}
	where the final inequality follows from the first property of \pref{def:plantedproperties} applied in the $U = \emptyset$ case.

\end{proof}
To get a better understanding of the lemma, suppose that $\epsilon(s_x,s_I)$ decays very quickly with $s_I$, say $\epsilon(s_x,s_I) = O\left(n^{-s_I}\right)$. 
Then whenever $d_I - 2\Abs{U} \gg \Abs{U}$ we will have $2^{\Abs{U}}\epsilon(s_x,u) = O\left(n\right)^{-u}$ in the lemma statement.
That is, the decay of the Fourier coefficients of $h$ will be nearly as good as those of $\mu_*\rvert_{U}$ given by the first property from \pref{def:plantedproperties}.

Motivated by the above discussion, we will quantify the rate of Fourier decay necessary for our application.
\begin{definition}
	\label{def:rapidlydecaying}
	A function $\epsilon(s_x,s_I)$ is \emph{$b$-rapidly decaying} if
	\begin{enumerate}
		\item $\epsilon(s_x,s_I) = o(1)$ unless $s_x = s_I = 0$.
		\item $\sum_{s_x \geq 1} \epsilon(s_x,1)^c = o(1)$ for any constant $c > 0$
		\item There exists $\nu > 0$ such that for all $s_x \leq d_x$ and $u \geq d_I - 2b$ 
		\[
			2^b\epsilon(s_x,u) \leq \epsilon(s_x,1)^{(1-\nu)}
		\]
	\end{enumerate}
\end{definition}

The final ingredient for the proof of the main result of this section is the following concentration bound based on hypercontractivity.
\begin{theorem}[\cite{OD14} Chapter 9]
	\label{thm:lowdegconcentration}
	Let $f:\{-1,1\}^n \to \R$ have degree at most $k$. Then for any $t \geq \sqrt{2e}^k$ we have
	\[
		\Pr_{x \sim \{-1,1\}^n}\Brac{\abs{f(x)} \geq t \norm{f}_2} \leq \exp\left(-\frac{k}{2e}t^{\frac{2}{k}}\right).
	\]
\end{theorem}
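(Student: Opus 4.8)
The plan is the standard route of combining the hypercontractive inequality with a high-moment Markov bound and then optimizing the moment; this is essentially the argument in \cite{OD14}. First I would record the $(2,q)$-hypercontractive inequality in the form adapted to bounded-degree functions. By the Bonami--Beckner hypercontractive inequality, for every $q \geq 2$ the noise operator $T_\rho$ with $\rho = 1/\sqrt{q-1}$ satisfies $\norm{T_\rho g}_q \leq \norm{g}_2$ for all $g : \{-1,1\}^n \to \R$. Since $f$ has degree at most $k$ it has finite Fourier support, so $T_\rho$ is invertible on it and $T_\rho^{-1}$ multiplies the level-$j$ part of $f$ by $\rho^{-j} \leq \rho^{-k}$. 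Applying hypercontractivity to $g = T_\rho^{-1} f$ therefore yields
\[
\norm{f}_q \;=\; \norm{T_\rho(T_\rho^{-1} f)}_q \;\leq\; \norm{T_\rho^{-1} f}_2 \;\leq\; \rho^{-k}\norm{f}_2 \;=\; (q-1)^{k/2}\norm{f}_2.
\]
This is the only place hypercontractivity is used.

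Next I would apply Markov's inequality to the nonnegative random variable $\abs{f(x)}^q$: for any $q \geq 2$,
\[
\Pr_{x \sim \{-1,1\}^n}\Brac{\abs{f(x)} \geq t\norm{f}_2} \;=\; \Pr_{x}\Brac{\abs{f(x)}^q \geq t^q\norm{f}_2^q} \;\leq\; \frac{\Ex[x]{\abs{f(x)}^q}}{t^q\norm{f}_2^q} \;=\; \frac{\norm{f}_q^q}{t^q\norm{f}_2^q} \;\leq\; \Paren{\frac{(q-1)^{k/2}}{t}}^{q}.
\]

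Finally I would optimize the free parameter $q$. Take $q = 1 + \tfrac{1}{e}\,t^{2/k}$. The hypothesis $t \geq \sqrt{2e}^{\,k} = (2e)^{k/2}$ gives $t^{2/k} \geq 2e$, hence $q \geq 3 \geq 2$, so the two steps above are valid. With this choice $(q-1)^{k/2} = e^{-k/2}\,t$, so the bound becomes $\Paren{e^{-k/2}}^{q} = \exp(-kq/2)$, and since $q \geq t^{2/k}/e$ this is at most $\exp\!\Paren{-\tfrac{k}{2e}\,t^{2/k}}$, which is exactly the claimed inequality.

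\paragraph{Main obstacle}
Once hypercontractivity is granted there is no real obstacle: the rest is an elementary Markov-plus-optimization computation. The only genuine content is the hypercontractive inequality itself, whose proof is the tensorization of the two-point inequality over a single coordinate; I would cite this from \cite{OD14} rather than reproduce it. The one subtlety to watch is verifying that the optimal choice of $q$ is at least $2$ so that both the hypercontractivity step and the $L^q$-monotonicity used in Markov apply — and this is precisely the role played by the hypothesis $t \geq \sqrt{2e}^{\,k}$.
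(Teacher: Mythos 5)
Your proof is correct. The paper does not prove this theorem itself—it cites it directly from O'Donnell's book—and your argument is exactly the standard one given there: degree-$k$ hypercontractivity in the form $\norm{f}_q \leq (q-1)^{k/2}\norm{f}_2$, followed by a $q$-th moment Markov bound and a near-optimal choice of $q$ (your $q = 1 + \tfrac{1}{e}t^{2/k}$ is a harmless variant of O'Donnell's $q = \tfrac{1}{e}t^{2/k}$, chosen so that $(q-1)^{k/2} = t e^{-k/2}$ holds with equality), with the hypothesis $t \geq \sqrt{2e}^{\,k}$ serving precisely to guarantee $q \geq 2$.
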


We now prove that averaging $\bar{\mu}_*$ over a CBD distribution on instances produces a function that is non-negative with high probability.
\begin{proposition}
	\label{prop:nonnegwhp}
	 Let $\D$ be a $b$-CBD distribution with parameter $\delta$ and fixed block $U$. Suppose that $\mu_*$ exhibits $\epsilon(s_x,s_I)$-conditional Fourier decay for a $b$-rapidly decaying function $\epsilon(s_x,s_I)$. Let $H(x) = \E_{I \sim \D}\Brac{\bar{\mu}_*\left(x,I\right)}$. Then there is a $\nu>0$ such that
	\[
		\Pr_{x}\Brac{H(x) \leq -o(1)} \leq \sum_{s = 1}^{d_x}\exp\left(-\frac{s}{2e}\epsilon(s,1)^{-\frac{2 - 2\nu}{s}}\right).
	\]
\end{proposition}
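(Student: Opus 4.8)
The plan is to expand $H$ as a polynomial in $x$ of degree at most $d_x$, show that its constant Fourier coefficient is $1-o(1)$ while its level-$s$ Fourier weight is of order $\epsilon(s,1)^{1-\nu}$, and then apply the hypercontractive tail bound \pref{thm:lowdegconcentration} degree-by-degree and union bound over the $d_x$ levels. The nonnegativity claim then follows because the fluctuation around the constant term $1-o(1)$ is, with the stated probability, only $o(1)$ in magnitude.

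\emph{Step 1 (reduction to the fixed block).} Since $\bar\mu_* = L_d\mu_*$ has $x$-degree at most $d_x$, so does $H$; write $H = \sum_{s=0}^{d_x}H_{=s}$ for its homogeneous components. As $\D$ is $b$-CBD with fixed block $U$ (so $\abs{U}\le b$) and randomizes only the coordinates of $I$ outside $U$, we have $H(x) = \Ex[I\sim\D]{R_U\bar\mu_*(x,I_{\setcomp U})}$, and by \pref{obs:fourierrestrict} it suffices to control $\widehat{R_U\bar\mu_*}(\alpha,\sigma)$ for $\supp(\sigma)\cap U=\emptyset$ after averaging against $\Ex[\D]{\zeta_\sigma}$.

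\emph{Step 2 (Fourier bounds via the conditional density).} Apply \pref{lem:highlowdecomp} to write $R_U\bar\mu_* = \pi_U\cdot L_{d''}\mu_*\rvert_U + h$ with $d''=(d_x,d_I-2\abs U)$; this replaces $R_U\bar\mu_*$, for which we have no direct bounds, by the conditional density $\mu_*\rvert_U$, whose coefficients are bounded by $B_U$ per \pref{def:plantedproperties}, multiplied by the nonnegative $O(1)$-degree factor $\pi_U$ (which depends only on the $\le k\abs U = O(1)$ variables $x_{V(U)}$ and is bounded, so it only redistributes Fourier mass among $O(1)$ neighboring levels), plus the pseudo-calibration error $h$. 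For the constant coefficient, a direct computation from $\widehat{\mu_*}(\emptyset,\emptyset)=1$, property~1 of \pref{def:plantedproperties} (applied with $s_x=0$, $s_I=1$), and the $h$-bound below gives $\widehat H(\emptyset) = 1-o(1)$. For $\abs\alpha=s\ge1$: the contribution of $\pi_U\cdot L_{d''}\mu_*\rvert_U$ to $\widehat H(\alpha)$ is bounded, via the $B_U$ bounds, by property~1 of \pref{def:plantedproperties} with $s_I=1$ (the $\sigma=\emptyset$ terms drop for $\alpha$ not supported on $V(U)$, the rest being of size $O(1)$ and handled by property~1 at $s_I=0$); the contribution of $h$ is bounded by \pref{lem:hfourierbounds}, which costs a factor $2^{\abs U}$ but only sees $\epsilon(s,u)$ with $u\ge d_I-2\abs U\ge d_I-2b$, so property~3 of a $b$-rapidly-decaying $\epsilon$ absorbs the $2^{\abs U}$ loss. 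Combining, $\abs{\widehat H(\alpha)}\le \binom n s^{-1/2}\cdot O(\epsilon(s,1)^{1-\nu})$, hence $\norm{H_{=s}}_2\le O(\epsilon(s,1)^{1-\nu})$ for a constant depending only on $b,k$.

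\emph{Step 3 (level-by-level concentration).} For $1\le s\le d_x$ set the threshold $\lambda_s := \epsilon(s,1)^{-(1-\nu')}\norm{H_{=s}}_2$ for a $\nu'$ slightly smaller than $\nu$; since $\epsilon$ is $b$-rapidly decaying (property~1 forces $\epsilon(s,1)\to0$ fast enough that $\epsilon(s,1)^{-(1-\nu')}\ge \sqrt{2e}^{\,s}$), \pref{thm:lowdegconcentration} applied to the degree-$s$ polynomial $H_{=s}$ yields $\Pr_x\Brac{\abs{H_{=s}(x)}>\lambda_s}\le\exp\!\big(-\tfrac{s}{2e}\epsilon(s,1)^{-\frac{2(1-\nu')}{s}}\big)$. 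On the complementary event $\abs{H_{=s}(x)}\le\lambda_s$ for all $s$, we get $\big|\sum_{s\ge1}H_{=s}(x)\big|\le\sum_s\lambda_s\le\sum_s O(\epsilon(s,1)^{\nu-\nu'}) = o(1)$ by property~2 of \pref{def:rapidlydecaying} (with $c=\nu-\nu'>0$), so $H(x) = \widehat H(\emptyset)+\sum_{s\ge1}H_{=s}(x)\ge 1-o(1)-o(1)>-o(1)$. A union bound over the $d_x$ levels then gives $\Pr_x\Brac{H(x)\le-o(1)}\le\sum_{s=1}^{d_x}\exp\!\big(-\tfrac{s}{2e}\epsilon(s,1)^{-\frac{2(1-\nu')}{s}}\big)$, which is the claimed bound after relabelling $\nu'$ as $\nu$ (the statement only asks for \emph{some} $\nu>0$).

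The main obstacle is the error term $h$ introduced by pseudo-calibration: restricting to the fixed block via \pref{obs:fourierrestrict} degrades the Fourier decay by a factor $2^{\abs U}$, and the whole design of the ``$b$-rapidly decaying'' hypothesis (especially property~3) is precisely to absorb this loss using the fact, from \pref{lem:highlowdecomp} and \pref{lem:hfourierbounds}, that $h$ is supported only on $I$-degrees at least $d_I-2\abs U$. Getting the bookkeeping of the three degree parameters $d,d',d''$ right — and ensuring $d_I-2\abs U$ is still large enough that $\epsilon(s,d_I-2\abs U)$ is much smaller than $\epsilon(s,1)$ — is the delicate part; a secondary subtlety is tracking the $O(1)$ shift in degree caused by the multiplier $\pi_U$, which is why $\epsilon$ is required to be monotone and rapidly decaying in $s_x$ as well.
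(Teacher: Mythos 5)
Your Steps 1 and 3 (reducing to the fixed block, invoking \pref{lem:highlowdecomp}, applying \pref{thm:lowdegconcentration} level by level, union bounding, and absorbing the $2^{|U|}$ loss from $h$ via property~3 of \pref{def:rapidlydecaying}) track the paper's argument. But Step~2 has a genuine gap that the paper's proof is specifically designed to avoid.

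The problem is your treatment of the factor $\pi_U$. After using \pref{lem:highlowdecomp} to write $R_U\bar\mu_* = \pi_U\cdot L_{d''}\mu_*\rvert_U + h$, you try to push this through to a bound on $\|H_{=s}\|_2$ directly by asserting that $\pi_U$ ``depends only on the $\le k|U|=O(1)$ variables $x_{V(U)}$ and is bounded, so it only redistributes Fourier mass among $O(1)$ neighboring levels.'' Neither half of this is true: $|U|\le b$, and $b$ is \emph{not} a constant — in the application (\pref{lem:nonnegwhp}) one takes $b\le \min(\rho d_y,k^{-1}d_x)$, which is polynomial in $n$ — so $\pi_U$ depends on up to $k b$ variables of $x$ and has $x$-degree up to $kb$; and $\pi_U$ is a ratio of probabilities with $\|\pi_U\|_\infty$ as large as roughly $2^{kb}$ (each factor $\eta_P$ can be $\approx 2^k$). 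Multiplying a polynomial with controlled level weights by such a factor can smear Fourier mass across $\Theta(kb)$ levels and blow up the $L^2$ norms by $2^{\Theta(kb)}$, so the conclusion $\|H_{=s}\|_2 \le O(\epsilon(s,1)^{1-\nu})$ does not follow, and the downstream hypercontractive bounds cannot be applied to $H_{=s}$.

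The paper's proof gets around this by \emph{never} bounding the Fourier weight of $\pi_U\cdot(\text{something})$. Instead it keeps the decomposition $H(x) = \pi_U(x_V)\cdot f(x) + g(x)$ with $f(x)=\E_{I\sim\D}[L_{d''}\mu_*\rvert_U(x,I_{\setcomp U})]$ and $g(x)=\E_{I\sim\D}[h(x,I_{\setcomp U})]$, proves $\|f^{=s}\|_2\le\epsilon(s,1)$ and $\|g^{=s}\|_2\le\epsilon(s,1)^{1-\nu}$ via property~1 of \pref{def:plantedproperties} and \pref{lem:hfourierbounds}, applies \pref{thm:lowdegconcentration} to $f$ and $g$ \emph{separately} to get $f(x)\ge 0$ and $g(x)\ge -o(1)$ with the claimed probability, and then uses the single sign fact $\pi_U\ge 0$ (it is a probability ratio) to conclude $H=\pi_U f+g\ge -o(1)$. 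The nonnegativity of $\pi_U$ — not any boundedness — is the structural fact that lets you drop it from the analysis; your proposal needs to be reorganized around that observation rather than treating $\pi_U$ as a benign low-degree multiplier.

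A secondary symptom of the same issue is your computation of $\widehat H(\emptyset)=1-o(1)$: since $H=\pi_U f+g$, the constant term of $H$ involves $\E_x[\pi_U f]$, which is not controlled by $\widehat{\mu_*}(\emptyset,\emptyset)$ alone. The paper instead notes that the constant term of $f$ is $\ge 1-\epsilon(0,1)$ because $\mu_*\rvert_U$ is a genuine density, and never needs $\widehat H(\emptyset)$ at all.
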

\begin{proof}
	We define 
	\begin{align*}
		f(x) &\defeq \Ex[I \sim \D]{L_{d'}\mu_*\rvert_U\left(x,I_{\setcomp{U}}\right)}\\ 
		g(x) &\defeq \Ex[I \sim \D]{h\left(x,I_{\setcomp{U}}\right)} = \Ex[I \sim \D]{(L_{d'}R_U\mu_* - L_{d''}R_U\mu_*) + (R_U\bar{\mu}_* - L_{d'}R_U\bar{\mu}_*)}. 
	\end{align*}
	We have by \pref{lem:restrictcondition} that
	\[
		H(x) = \pi_U(x_U)\cdot f(x) + g(x)
	\]
	Observe that both $L_{d'}\mu_*\rvert_U$ and $h$ do not depend on the coordinates in the fixed block $U$, since these have already been fixed in the definition of each function. 
	This implies that, when sampling from $\D$, the distribution on the input $I_{\setcomp{U}}$ to both $L_{d'}\mu_*\rvert_U$ and $h$ is in fact blockwise-dense.
		Let $f^{=s},g^{=s}$ denote the degree-$s$ part of $f$ and $g$ respectively.
	By the first property of \pref{def:plantedproperties} we have
	\begin{align*}
		\norm{f^{=s}}^2_2 
			&= \sum_{\abs{\alpha} = s} \widehat{f}(\alpha)^2\\
			&= \sum_{\abs{\alpha} = s} \left(\sum_{\abs{\sigma} \leq d_I - \Abs{U}}\widehat{\mu_*\rvert_U}(\alpha,\sigma)\Ex[\D]{\zeta_\sigma(I)}\right)^2\\
			&\leq \sum_{\abs{\alpha} = s} \left(\sum_{\abs{\sigma} \leq d_I - \Abs{U}}\Abs{\widehat{\mu_*\rvert_U}(\alpha,\sigma)\Ex[\D]{\zeta_\sigma(I)}}\right)^2\\
			&\leq \sum_{\abs{\alpha} = s} \binom{n}{s}^{-1}\epsilon(s,1)^2\\
			&= \epsilon(s,1)^2
	\end{align*}
	By \prettyref{thm:lowdegconcentration} we have
	\[
		\Pr_{x}\Brac{\abs{f^{=s}(x)} \geq \tau \epsilon(s,1)} \leq \exp\left(-\frac{s}{2e}\tau^{\frac{2}{s}}\right).
	\]
	Let $c = c(\nu) > 0$ be a sufficiently small constant and set $\tau = \epsilon(s,1)^{-1+c}$. Taking a union bound over $s \geq 1$ yields
	\begin{align*}
		\Pr_{x}\Brac{\Abs{\sum_{s=1}^{d_x}f^{=s}(x)} \geq \sum_{s=1}^{d_x}\epsilon(s,1)^{c}}
			&\leq \sum_{s = 1}^{d_x}\exp\left(-\frac{s}{2e}\epsilon(s,1)^{-\frac{2 - 2c}{s}}\right)\\
	\end{align*}
	Next observe that since $\mu_*\rvert_U$ is a density, the constant term is at least $\Ex{f} \geq 1 - \epsilon(0,1) = 1 - o(1)$. Further, since $\epsilon(s_x,s_I)$ is rapidly decaying, $\sum_{s \geq 1} \epsilon(s,1)^c = o(1)$. Thus we can conclude that
	\[
		\Pr_{x}\Brac{\Abs{f(x)} \leq 0} \leq \sum_{s = 1}^{d_x}\exp\left(-\frac{s}{2e}\epsilon(s,1)^{-\frac{2 - 2c}{s}}\right).
	\]
	
	For $g$ observe that by \pref{lem:hfourierbounds}, and an identical calculation to the one for $f$ yields $\norm{g^{=s}}_2^2 \leq 2^{2\Abs{U}}\epsilon(s,u)^2$ where $u = \max(s,d_I - 2\Abs{U})$. Since $\epsilon(s_x,s_I)$ is rapidly decaying, we have $\norm{g^{=s}}_2^2 \leq \epsilon(s,1)^{2(1-\nu)}$.
	Now applying our above argument for $f$ to $g$ we get that, for $c' > 0$
	\[
		\Pr_{x}\Brac{\Abs{g(x)} \leq -o(1)} \leq \sum_{s = 1}^{d_x}\exp\left(-\frac{s}{2e}\epsilon(s,1)^{-\frac{2 - 2c'}{s}}\right).
	\]
	where the only difference that arises is the fact that $\Ex{g} \geq -o(1)$ since the constant term of $g$ may be negative, but is bounded in magnitude by $2^{\abs{U}}\epsilon(0,u) \leq \epsilon(0,1)^{1 - \nu} = o(1)$. Combining these bounds with the expression $H = \pi_U \cdot f + g$ and the fact that $\pi_U$ is non-negative completes the proof.
\end{proof}
\section{Pseudocalibration for CSPs}
\label{sec:csppseudocalibration}
In this section we introduce the planted density for CSPs and prove that this density exhibits conditional Fourier decay whenever the CSP predicate supports a $(t-1)$-wise uniform distribution on satisfying assignments.
 
\subsection{Functions on CSPs}
We now introduce the technical framework that we will use to analyze functions on instances and assignments to CSPs. Recall that $\Enk$ denotes the set of possible scopes $S \in [n]^k$. Clearly $|\Enk| = \frac{n!}{(n-k)!}$ is the number of possible constraint scopes.

We will encode assignments by $x\in \{-1,1\}^n$, sets of constraint scopes by $y \in \{-1,1\}^{|\Enk|}$ and sets of negations by $b\in \{-1,1\}^{|\Enk| \cdot k}$. We will index coordinates of $x$ by indices $i \in [n]$, coordinates of $y$ by scopes $S \in \Enk$ and coordinates of $b$ by pairs $(S,j)$ of scopes $S \in \Enk$ and indices $j \in [k]$. 

The coordinates of $x$ are the assignment to the $n$ variables in the instance. The coordinate $b_{(S,j)}$ of $b$ is the $j$-th entry of the string of negations $b_S$ for the constraint scope $S$. The coordinate $y_S$ is the $\{-1,1\}$ indicator of whether the constraint on scope $S$ is included. We will also use the notation $x_S = (x_{S_1},\dots, x_{S_k})$ and $b_S = (b_{(S,1)},\dots,b_{(S,k)})$. 

Note that every instance $I$ in the support of $\D(p)$ can be described by a pair $(y,b)$ where $y$ indicates which constraint scopes $S$ are included in $I$ and $b$ indicates which negations are applied on each scope, including those not included in $I$. We allow these coordinates of $b$ corresponding to scopes $S$ not included in $I$ to be arbitrary without affecting the instance. For technical reasons, it will be useful to extend the distribution $\D(p)$ to a distribution on pairs $(y,b)$, where the coordinates $b_S$ for scopes $S$ not included in $I$ by $y$ are also sampled uniformly at random.

For a subset $U \subseteq \Enk$ we will write $y_U$ to denote the subset of coordinates of $y$ corresponding to $U$. By extension we will write $b_U$ for the subset of coordinates $b_{(S,j)}$ of the negations $b$ with $S \in U$. Finally, we will use $I_U \defeq (y_U,b_U)$ to denote the instance $I$ restricted to the subset $U$. We will refer to $I_U$ as a restricted instance.

We will use $\U_P$ to denote the $(t-1)$-wise uniform distribution supported on $z \in \{-1,1\}^k$ with $P(z) = 1$. 
We will use $\eta_P$ to denote the density of $\U_P$ with respect to the uniform distribution on $\{-1,1\}^k$.

Using this notation, we can rewrite the $\CSP(P)$ objective function from \prettyref{def:CSP} as
\begin{equation}
\label{eqn:objectiveF}
F(x,y,b) \defeq f_{(y,b)}(x) = \sum_{S\in\Enk} \Ind(y_S = -1)P(b_S \circ x_S)
\end{equation}

\subsection{The Planted Density for CSPs}
We begin with the definition of the planted distribution $\D_*$ as described in \pref{sec:pseudocalibration}.
\begin{definition}
	\label{def:cspplanted}
	An assignment and instance pair $(x,I)$ is sampled from the planted distribution $\D_*$ as follows:
	\begin{itemize}
		\item Sample $x$ uniformly at random from $\{-1,1\}^n$.
		\item Choose to include each constraint scope $S$ independently with probability $p$.
		\item For each scope $S$ chosen, sample $z_S \sim \U_P$ and set $b_S = z_S \circ x_S$.
	\end{itemize}
\end{definition}
The null distribution $\D_\emptyset$ in this case is just the uniform distribution on CSPs $\D(p)$.
As with the distribution $\D(p)$, we can extend $\D_*$ to a distribution on triples $(x,y,b)$ by sampling the coordinates $b_S$ for scopes $S$ not included in $y$ uniformly at random, as these do not affect the satisfiability of the instance. Let $\mu_*(x,y,b)$ be the density of $\D_*$ relative to the distribution $\{-1,1\}^n \times \D(p)$.

Now we claim that
\begin{equation}
\label{eqn:mustarformula}
\mu_*(x,y,b) \propto \prod_{S \in \Enk} \left(\Ind(y_S = -1)\eta_P(b_S \circ x_S) + \Ind(y_S = 1)\right).
\end{equation}
Indeed for any fixed $x,y,b$ we have
\begin{align*}
\Pr_{\substack{(y',b') \sim \D(p) \\ x' \sim \{-1,1\}^n}}[(x',y',b') = (x,y,b)]\mu_*(x,y,b)
&= \left(2^{-(n + k|\Enk|)}\prod_{S : y_S = -1}p \prod_{S : y_S = 1} (1-p)\right) \cdot \mu_*(x,y,b)\\
&\propto 2^{-n}\prod_{S : y_S = -1}  2^{-k} \eta_P(b_S \circ x_S) p \prod_{S : y_S = 1} 2^{-k}(1-p)\\
&= 2^{-n}\prod_{S : y_S = -1} \Pr_{z \sim \U_P}[z = b_S \circ x_S] p \prod_{S : y_S = 1} 2^{-k}(1-p)\\
&= \Pr_{(x',y',b') \sim \D_*}[(x',y',b') = (x,y,b)]
\end{align*}

Next we describe the appropriate Fourier basis for expanding $\mu_*$. Let $\chi_\alpha$ be the Fourier basis for $L^2(\{-1,1\}^n)$. That is, for $\alpha \subseteq [n]$
\[
\chi_\alpha(x) = \prod_{i \in \alpha} x_i.
\]
Let $\phi_\beta$ be the $p$-biased Fourier basis for $L^2(\{-1,1\}^{|\Enk|},\pi_p^{\tensor |\Enk|})$. That is, for $\beta \subseteq \Enk$
\[
\phi_\beta(y) = \prod_{S \in \beta} \phi(y_S), \qquad
\phi(-1) = -\sqrt{\frac{q}{p}}, \qquad
\phi(1) = \sqrt{\frac{p}{q}}
\]
where $q = 1 - p$.
Finally, let $\psi_\gamma$ be the Fourier basis for $L^2(\{-1,1\}^{|\Enk|\cdot k})$. That is, for $\gamma \subseteq \Enk \times [k]$
\[
\psi_\gamma(b) = \prod_{(S,j) \in \gamma} b_{(S,j)}.
\]

We can now write $\mu_*(x,y,b)$ as a function in the tensor product of the spaces spanned by the $\chi_\alpha$, $\phi_\beta$ and $\psi_\gamma$.
That is
\[
	\mu_*(x,y,b) = \sum_{\alpha,\beta,\gamma} \widehat{\mu_*}(\alpha,\beta,\gamma) \chi_\alpha(x)\phi_\beta(y)\psi_\gamma(b)
\]
where the Fourier coefficients above are given by the inversion formula
\[
	\widehat{\mu_*}(\alpha,\beta,\gamma) = \Ex[(x,y,b) \sim \D(p)]{\mu_*(x,y,b)\chi_\alpha(x)\phi_\beta(y)\psi_\gamma(b)} = \Ex[(x,y,b) \sim \D_*]{\chi_\alpha(x)\phi_\beta(y)\psi_\gamma(b)}.
\]

Now letting $M = \abs{\Enk}$ and $\Omega = \{-1,1\} \times \{-1,1\}^{k}$ we have that each instance $I = (y,b)$ is an element of $\Omega^M$.
Further, we can identify $\sigma \in \Abs{\Omega}^M$ with the pair of subsets $\beta \subseteq \Enk$ and $\gamma \subseteq \Enk \times [k]$ as follows.
Each coordinate $\sigma_j$ corresponds to the $j$-th scope $S$ under some ordering of the scopes in $\Enk$. The first bit of $\sigma_j$ indicates whether $S$ is included in $\beta$. The remaining $k$ bits of $\sigma_j$ indicate which pairs $(S,l)$ are included in $\gamma$.
With this correspondence we can define 
\[
	\zeta_\sigma(I) \defeq \phi_\beta(y)\psi_\gamma(b) 
\]
This is indeed a Fourier basis for $L^2(\Omega^M,\D(p))$. Thus, we are in exact setting that is required for \pref{def:pseudocalibration} to make sense.

In particular, for a pair $d = (d_x,d_I) \in \N^2$ the set $A(d)$ corresponds in this case to the set of Fourier coefficients where $x$ has degree $d_x$, at most $d_I$ scopes are included in $\beta$, and the number of scopes whose negations are included in $\gamma$ is also at most $d_I$.
Therefore the pseudo-calibrated density for CSPs $\bar{\mu}_* = L_d \mu_*$ is given by	
\[
	\bar{\mu}_*(x,y,b) = \sum_{(\alpha,\beta,\gamma)\in A(d)} \widehat{\mu_*}(\alpha,\beta,\gamma) \chi_\alpha(x)\phi_\beta(y)\psi_\gamma(b)
\]

\subsection{The Fourier Coefficients of the Planted Density}
In this section we will compute bounds on the Fourier coefficients of the planted density $\mu_*(x,y,b)$.
A key fact that we will exploit is that $\mu_*$ does not directly depend on $x$, but rather only on tuples of the form $b_S \circ x_S = (b_{(S,1)}x_{S_1},\dots, b_{(S,k)}x_{S_k})$. In order to formalize this statement we will need the following definition:
\begin{definition}
	\label{def:derives}
	Let $\gamma \subseteq \Enk \times [k]$ and let $\alpha \subset [n]$. Let $c_i = |\{(S,j) \in \gamma \mid S_j = i\}|$ be the number of appearances of the coordinate $i$ as $S_j$ for some $(S,j) \in \gamma$. Then $\gamma \vdash \alpha$ (in words, $\gamma$ derives $\alpha$) if $c_i$ is odd for every $i \in \alpha$, and $c_i$ is even for every $i \notin \alpha$.
\end{definition}
Recalling that for a pair $(S,j)\in\gamma$ the value $S_j$ is simply an index in $[n]$, the above definition says that every index $i\in \alpha$ must appear an odd number of times as some $S_j$, and every $i \notin \alpha$ must appear an even number of times. To see why this definition is useful consider the following example. Suppose $S_1 = 1$, $S_2 = 2$, $T_1 = 2$ and $T_2 = 3$. Then
\begin{align*}
b_{(S,1)}x_{S_1}b_{(S,2)}x_{S_2} \cdot b_{(T,1)}x_{T_1}b_{(T,2)}x_{T_2} &= b_{(S,1)}x_1b_{(S,2)}x_2 \cdot b_{(T,1)}x_2 b_{(T,2)}x_3\\
&= b_{(S,1)}b_{(S,2)}b_{(T,1)}b_{(T,2)}x_1 x_3\\
&\defeq \psi_\gamma(b)\chi_\alpha(x)
\end{align*}
where the second to last equality used the fact that $x_2^2 = 1$. Thus, when multiplying these two monomials, corresponding to scope $S$ and $T$ respectively, the resulting product has the form $\psi_\gamma(b)\chi_\alpha(x)$ where $\gamma \vdash \alpha$. This is due to the fact that $x_2$ appeared an even number of times and so was eliminated from the product.

We now proceed with the computation of the Fourier coefficient bounds for $\mu_*$.
\begin{lemma}
	\label{lem:derives}
	If $\widehat{\mu_*}(\alpha,\beta,\gamma) \neq 0$ then $\gamma \vdash \alpha$.
\end{lemma}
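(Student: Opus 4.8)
The plan is to exploit a sign-flip symmetry of the planted distribution $\D_*$. For $\epsilon \in \{-1,1\}^n$ define the map $T_\epsilon$ sending $(x,y,b) \mapsto (x', y, b')$ where $x'_i = \epsilon_i x_i$ and $b'_{(S,j)} = \epsilon_{S_j} b_{(S,j)}$. First I would check that $\D_*$ is invariant under $T_\epsilon$. This follows directly from the sampling procedure in \pref{def:cspplanted}: if $x$ is uniform then $x' = \epsilon \odot x$ is uniform; $y$ is untouched; for a scope $S$ with $y_S = -1$ we have $b_{(S,j)} = z_{S,j} x_{S_j}$ with $z_S \sim \U_P$, hence $b'_{(S,j)} = \epsilon_{S_j} z_{S,j} x_{S_j} = z_{S,j} x'_{S_j}$, so $b'_S = z_S \circ x'_S$ still has the correct conditional law; and for $S$ with $y_S = 1$ the uniform negations stay uniform. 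Thus $T_\epsilon$ pushes $\D_*$ to itself.

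Next I would use the inversion formula $\widehat{\mu_*}(\alpha,\beta,\gamma) = \Ex_{(x,y,b)\sim\D_*}[\chi_\alpha(x)\phi_\beta(y)\psi_\gamma(b)]$ together with this invariance: applying $T_\epsilon$ inside the expectation gives
\[
\widehat{\mu_*}(\alpha,\beta,\gamma) = \Ex_{(x,y,b)\sim\D_*}\big[\chi_\alpha(\epsilon \odot x)\,\phi_\beta(y)\,\psi_\gamma(\epsilon \cdot b)\big].
\]
Now $\chi_\alpha(\epsilon \odot x) = \chi_\alpha(\epsilon)\,\chi_\alpha(x)$ and, writing $c_i = \abs{\{(S,j)\in\gamma \mid S_j = i\}}$ as in \pref{def:derives}, $\psi_\gamma(\epsilon\cdot b) = \big(\prod_{i}\epsilon_i^{c_i}\big)\psi_\gamma(b)$, while $\phi_\beta(y)$ is unchanged. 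Therefore
\[
\widehat{\mu_*}(\alpha,\beta,\gamma) = \chi_\alpha(\epsilon)\Big(\prod_i \epsilon_i^{c_i}\Big)\,\widehat{\mu_*}(\alpha,\beta,\gamma) = \Big(\prod_i \epsilon_i^{\,[i\in\alpha] + c_i}\Big)\,\widehat{\mu_*}(\alpha,\beta,\gamma)
\]
for every $\epsilon \in \{-1,1\}^n$.

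Finally, I would observe that if $\gamma \not\vdash \alpha$ then there is some index $i$ for which $[i\in\alpha] + c_i$ is odd; taking $\epsilon$ to be $-1$ on that coordinate and $+1$ elsewhere makes the scalar factor equal to $-1$, forcing $\widehat{\mu_*}(\alpha,\beta,\gamma) = -\widehat{\mu_*}(\alpha,\beta,\gamma)$, i.e. $\widehat{\mu_*}(\alpha,\beta,\gamma) = 0$. This is the contrapositive of the claim. I do not expect a real obstacle here; the only point requiring care is verifying the $T_\epsilon$-invariance of $\D_*$ cleanly (in particular handling the scopes not included in $y$, whose negation bits are sampled uniformly), and making sure the transformation of $\psi_\gamma$ correctly accumulates the exponents $c_i$ across repeated occurrences of a variable in different scopes of $\gamma$ — exactly the bookkeeping illustrated by the worked example preceding the lemma. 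Note that the $(t-1)$-wise uniformity of $\U_P$ is not needed for this lemma; it only enters later when quantitative coefficient bounds are derived.
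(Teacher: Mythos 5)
Your proof is correct, and it takes a genuinely different route from the paper's. The paper proves this lemma by directly expanding $\mu_*(x,y,b) \propto \prod_S\bigl(\Ind(y_S=-1)\eta_P(b_S\circ x_S)+\Ind(y_S=1)\bigr)$ into monomials, tracking that each $b_{(S,j)}$ always arrives glued to the matching $x_{S_j}$, and observing that repeated $x_i$'s cancel in pairs; this is a bookkeeping argument over the explicit Fourier expansion of $\eta_P$. You instead identify a $\{\pm1\}^n$ sign-flip symmetry $T_\epsilon$ of $\D_*$ (which preserves uniformity of $x$, leaves $y$ alone, and commutes with the rule $b_S=z_S\circ x_S$) and then use the inversion formula $\widehat{\mu_*}(\alpha,\beta,\gamma)=\Ex_{\D_*}[\chi_\alpha\phi_\beta\psi_\gamma]$ to show the coefficient is fixed only when $\prod_i\epsilon_i^{[i\in\alpha]+c_i}\equiv 1$, which is exactly $\gamma\vdash\alpha$. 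The symmetry argument is cleaner and more conceptual for the vanishing statement; the paper's expansion approach is essentially the same computation it then reuses in \pref{lem:fourierformula} to get the quantitative magnitude bounds, which your symmetry argument would not give. Your handling of the scopes not included in $y$ (uniform $b_S$ stays uniform under sign flips) is exactly the point that needs checking, and you checked it. One can also note that your argument, like the paper's, only determines which coefficients \emph{can} be nonzero; neither claims the converse.
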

\begin{proof}
	Let $\eta_P$ be the density of the $(t-1)$-wise independent distribution supported on $P^{-1}(1)$,
	and let $\eta_P(z) = \sum_T \widehat{\eta_P}(T)\prod_{j\in T} z_j$ be its Fourier expansion.
	Recall from \prettyref{eqn:mustarformula}
	\[
		\mu_*(x,y,b) \propto \prod_{S \in \Enk} \Ind(y_S = -1)\eta_P(b_S \circ x_S) + \Ind(y_S = 1).
	\]
	The only terms in the above product depending on either $b$ or $x$ are of the form
	\[
		\eta_P(b_S \circ x_S) = \sum_{T \subseteq k} \widehat{\eta_P}(T) \prod_{j \in T} b_{(S,j)}x_{S_j}.
	\]
	We can use this formula to expand the expression for $\mu_*$ as a sum of products of the variables $b$ and $x$ and the indicators $\Ind(y_S = \pm 1)$.
	Any term depending on $b$ or $x$ will then be a product, over \emph{distinct} $S$, of terms of the form
	\[
		\prod_S \Ind(y_S = \pm 1)\widehat{\eta_P}(T_S)\prod_{j \in T_S} b_{(S,j)}x_{S_j}
	\]
	where the $T_S \subseteq [k]$ are a sequence of subsets depending on $S$.
	In any such product over distinct $S$ all the variables $b_{(S,j)}$ for $j \in T_S$ are distinct.
	So letting $\gamma = \cup_S \{(S,j) \mid j \in T_S\}$, we have
	\[
		\prod_S \Ind(y_S = \pm 1)\widehat{\eta_P}(T_S)\prod_{j \in T_S} b_{(S,j)}x_{S_j} = \psi_\gamma(b)\prod_S \Ind(y_S = \pm 1)\widehat{\eta_P}(T_S)\prod_{j \in T_S} x_{S_j}.
	\]
	Furthermore, every $x_i$ appears exactly $c_i = |\{(S,j) \in \gamma \mid S_j = i\}|$ times in the product.
	If $c_i$ is even $x_i^{c_i} = 1$, and if $c_i$ is odd $x_i^{c_i} = x_i$. Thus,
	\[
		\psi_\gamma(b)\prod_S \Ind(y_S = \pm 1)\widehat{\eta_P}(T_S)\prod_{j \in T_S} x_{S_j} = \psi_\gamma(b)\chi_\alpha(x)\prod_S \widehat{\eta_P}(T_S)\Ind(y_S = \pm 1)
	\]
	where $\gamma \vdash \alpha$.
	Now if we expand the indicator functions $\Ind(y_S = \pm 1)$ in the $\phi_\beta$ basis, we conclude that $\mu_*(x,y,b)$ can be written as a linear combination of terms $\chi_\alpha(x)\phi_\beta(y)\psi_\gamma(b)$ where all such terms appearing with a non-zero coefficient satisfy $\gamma \vdash \alpha$.
	There may be additional cancellations between terms in the final Fourier expansion of $\mu_*$ as some of the basis functions with non-zero coefficients could appear multiple times.
	However, we can still conclude that any non-zero coefficients $\widehat{\mu_*}(\alpha,\beta,\gamma)$ must have $\gamma \vdash \alpha$.

\end{proof}

Two additional definitions are required in order to state the formula for the Fourier coefficients. First, we need notation for the set of scopes contained in $\gamma \subseteq \Enk \times [k]$.
\begin{definition}
	Let $\gamma \subseteq \Enk \times [k]$. Then $\bar{\gamma} \defeq \{S \mid (S,j) \in \gamma \text{ for some } j \}$ is the set of scopes $S$ present in $\gamma$.
\end{definition}
Second, we need notation for the minimum number of coordinates contained in any scope $S$ present in $\gamma$.
\begin{definition}
	Let $\gamma \subseteq \Enk \times [k]$. Then $r(\gamma) \defeq \min_{S \in \bar{\gamma}} \abs{\{j \mid (S,j) \in \gamma\}}$ is the \emph{minimum arity} of $\gamma$.
\end{definition}

Now using \prettyref{lem:derives} we can compute bounds on the Fourier coefficients of $\mu_*$.
\begin{lemma}
	\label{lem:fourierformula}
	If $\gamma \vdash \alpha$, $r(\gamma) \geq t$ and $\beta \subseteq \bar{\gamma}$ then
	\[
		\abs{\widehat{\mu_*}(\alpha,\beta,\gamma)} \leq  \sqrt{pq}^{\abs{\bar{\gamma}\cap\beta}}  p^{\abs{\bar{\gamma}\setminus\beta}}
	\]
	otherwise $\widehat{\mu_*}(\alpha,\beta,\gamma) = 0.$
\end{lemma}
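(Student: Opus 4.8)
The plan is to read the Fourier coefficient $\widehat{\mu_*}(\alpha,\beta,\gamma)=\Ex[(x,y,b)\sim\D(p)]{\mu_*(x,y,b)\chi_\alpha(x)\phi_\beta(y)\psi_\gamma(b)}$ off a term-by-term expansion of the product formula \pref{eqn:mustarformula} (whose proportionality is in fact an equality, by the displayed computation following it). First I would substitute the expansion $\eta_P(z)=\sum_{T\subseteq[k]}\widehat{\eta_P}(T)\prod_{j\in T}z_j$ and expand $\prod_{S\in\Enk}$, which writes $\mu_*$ as a sum of terms, each obtained by choosing for every scope $S$ either to \emph{exclude} it (contributing $\Ind(y_S=1)$) or to \emph{include} it with a subset $T_S\subseteq[k]$ (contributing $\Ind(y_S=-1)\widehat{\eta_P}(T_S)\prod_{j\in T_S}b_{(S,j)}x_{S_j}$). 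Since $x,y,b$ are independent under $\D(p)$, with $x,b$ uniform and $y$ being $p$-biased, the expectation against $\chi_\alpha\phi_\beta\psi_\gamma$ factors over scopes once a term is fixed, so it remains only to identify the surviving terms and collect the per-scope factors.

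The first step is the $b$-coordinates: the $b$-monomial of a term is $\prod_{S\text{ incl}}\prod_{j\in T_S}b_{(S,j)}$, and its inner product with $\psi_\gamma(b)=\prod_{(S,j)\in\gamma}b_{(S,j)}$ is nonzero only when these pair-sets coincide. As distinct scopes contribute pairs with distinct first coordinates, this forces every $S\in\bar\gamma$ to be included with $T_S=\{j:(S,j)\in\gamma\}=:\gamma_S$, and every $S\notin\bar\gamma$ to be either excluded or included with $T_S=\emptyset$ (using $\widehat{\eta_P}(\emptyset)=1$). Hence the $x$-monomial of every surviving term is the single reduced monomial $\prod_{(S,j)\in\gamma}x_{S_j}$, whose reduced form is $\chi_\alpha(x)$ precisely when $\gamma\vdash\alpha$ (re-deriving \pref{lem:derives}, so we may assume $\gamma\vdash\alpha$), and this contributes a factor $1$ to the expectation with no cancellation coming from $x$. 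For the $y$-coordinates, each $S\in\bar\gamma$ contributes $\Ex[y_S]{\Ind(y_S=-1)\phi(y_S)}=-\sqrt{pq}$ if $S\in\beta$ and $\Ex[y_S]{\Ind(y_S=-1)}=p$ if $S\notin\beta$, while summing the two choices (include/exclude) for each $S\notin\bar\gamma$ contributes the factor $p+q=1$ when $S\notin\beta$ but $-\sqrt{pq}+\sqrt{pq}=0$ when $S\in\beta$. Thus $\widehat{\mu_*}(\alpha,\beta,\gamma)=0$ unless $\beta\subseteq\bar\gamma$, and in that case
\[
\widehat{\mu_*}(\alpha,\beta,\gamma)=\left(\prod_{S\in\bar\gamma}\widehat{\eta_P}(\gamma_S)\right)(-\sqrt{pq})^{\abs{\bar\gamma\cap\beta}}\,p^{\abs{\bar\gamma\setminus\beta}}.
\]

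Finally I would invoke $(t-1)$-wise uniformity: $\widehat{\eta_P}(T)=\Ex[z\sim\U_P]{\prod_{j\in T}z_j}$, so $\abs{\widehat{\eta_P}(T)}\le1$ for every $T$, and $\widehat{\eta_P}(T)=0$ whenever $1\le\abs{T}\le t-1$. Since each $S\in\bar\gamma$ has $\abs{\gamma_S}\ge1$, if $r(\gamma)<t$ then some factor $\widehat{\eta_P}(\gamma_S)$ vanishes and $\widehat{\mu_*}(\alpha,\beta,\gamma)=0$; otherwise $\prod_{S\in\bar\gamma}\abs{\widehat{\eta_P}(\gamma_S)}\le1$ and the displayed identity gives $\abs{\widehat{\mu_*}(\alpha,\beta,\gamma)}\le\sqrt{pq}^{\,\abs{\bar\gamma\cap\beta}}p^{\abs{\bar\gamma\setminus\beta}}$, as claimed.

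The main obstacle I anticipate is the combinatorial bookkeeping in the term expansion: verifying cleanly that the many include/exclude choices compatible with a fixed $\gamma$ do not interfere through unexpected sign cancellations (beyond the genuine collapse when $\beta\not\subseteq\bar\gamma$), that the scopes outside $\bar\gamma$ resum exactly to $1$, and that the $\widehat{\eta_P}(\emptyset)=1$ terms together with the exact normalization of \pref{eqn:mustarformula} are tracked correctly. The Fourier-analytic input about $\eta_P$ (boundedness of its coefficients and vanishing of the low-degree ones) is routine.
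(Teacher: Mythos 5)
Your proof is correct. It is essentially dual to the paper's: the paper reads the coefficient from $\widehat{\mu_*}(\alpha,\beta,\gamma)=\Ex[\D_*]{\chi_\alpha\phi_\beta\psi_\gamma}$ and factorizes using independence of the per-scope data $(y_S,b_S\circ x_S)$ under the \emph{planted} measure $\D_*$, whereas you read it from $\Ex[\D(p)]{\mu_*\,\chi_\alpha\phi_\beta\psi_\gamma}$ by expanding the product formula for $\mu_*$ together with the Fourier expansion of $\eta_P$ and factorizing over the product measure $\D(p)$. Both hinge on the same two facts: the $(t-1)$-wise uniformity of $\U_P$ (which kills the factors with $1\le\abs{\gamma_S}\le t-1$, resp.\ $1\le\abs{T}\le t-1$), and the per-scope $y$-expectations that give $-\sqrt{pq}$, $p$, $0$, $1$. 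One small observation: the paper actually proves the necessary condition $\gamma\vdash\alpha$ separately as \pref{lem:derives}, and there it uses exactly your expand-the-product argument, before switching to the $\D_*$-expectation view for the quantitative bound in \pref{lem:fourierformula}; your proof merges those two steps into a single computation under $\D(p)$. What your route buys is an explicit closed form,
\[
\widehat{\mu_*}(\alpha,\beta,\gamma)=\left(\prod_{S\in\bar\gamma}\widehat{\eta_P}(\gamma_S)\right)(-\sqrt{pq})^{\abs{\bar\gamma\cap\beta}}\,p^{\abs{\bar\gamma\setminus\beta}},
\]
which is slightly stronger than the stated inequality; the paper's route avoids the need to expand $\eta_P$ and to verify the exact normalization of \pref{eqn:mustarformula} (though your check that the proportionality constant is $1$ is correct and does matter for your argument, since the bound would otherwise only hold up to an unknown global constant). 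The bookkeeping concern you flag — cancellations among include/exclude choices compatible with a fixed $\gamma$ — is in fact fully handled by your $b$-monomial matching plus the per-scope $y$-expectation resummation; there is no hidden gap.
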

\begin{proof}
	The Fourier coefficients are given by
	\[
		\widehat{\mu_*}(\alpha,\beta,\gamma) = \Ex[(x,y,b) \sim \D(p)]{\mu_*(x,y,b)\chi_\alpha(x)\phi_\beta(y)\psi_\gamma(b)} = \Ex[(x,y,b) \sim \D_*]{\chi_\alpha(x)\phi_\beta(y)\psi_\gamma(b)}.
	\]
	By \prettyref{lem:derives} any non-zero Fourier coefficients must have $\gamma \vdash \alpha$. In this case, letting $c_i = |\{(S,j) \in \gamma \mid S_j = i\}|$ as in \prettyref{def:derives}, we have
	\begin{align*}
		\chi_\alpha(x)\phi_\beta(y)\psi_\gamma(b)
			&= \phi_\beta(y)\prod_{(S,j) \in \gamma} b_{(S,j)}\prod_{i\in\alpha}x_i\\
			&= \phi_\beta(y)\prod_{(S,j) \in \gamma} b_{(S,j)}\prod_{i\in [n]} x_i^{c_i}\\
		 	&= \phi_\beta(y)\prod_{(S,j) \in \gamma} b_{(S,j)}x_{S_j}
	\end{align*}
	where the second equality used the fact that $\gamma \vdash \alpha$ implies that $x_i^{c_i} = x_i$ for $i \in \alpha$ and $x_i^{c_i} = 1$ for $i \notin \alpha$.
	Expanding $\phi_\beta$ as a product over scopes $S \in \Enk$ and grouping terms yields
	\[
		\chi_\alpha(x)\phi_\beta(y)\psi_\gamma(b) = \left(\prod_{S \in \beta\setminus\bar{\gamma}} \phi(y_S)\right)
		\left(\prod_{S \in \beta\cap\bar{\gamma}} \phi(y_S) \prod_{j : (S,j) \in \gamma} b_{(S,j)}x_{S_j}\right)
		\left(\prod_{S \in \bar{\gamma}\setminus\beta}\prod_{j : (S,j) \in \gamma} b_{(S,j)}x_{S_j}\right).
	\]
	Now we claim that, for each $S\in \Enk$, the term corresponding to $S$ in the above product is independent of all the other terms under the distribution $\D_*$. This is clearly true for the $S \in \beta\setminus\bar{\gamma}$ terms, since under $\D_*$ the variables $y_S$ depend only on the variables $x_S$ and negations $b_S$ on scope $S$, but none of those variables appear in the above product.
	For the remaining terms, note that the distribution of $b_S \circ x_S$ is independent of all other variables and constraints except $y_S$.
	This is because under $\D_*$ both $x$ and $y$ are sampled independently, and then each $b_S$ is sampled independently so that if $y_S = -1$ the distribution of $b_S \circ x_S$ is equal to the $(t-1)$-wise uniform distribution supported on satisfying assignments to $P$, and if $y_S = 1$ then $b_S \circ x_S$ is uniform and independent of everything else. Therefore, taking the expectation of $\chi_\alpha(x)\phi_\beta(y)\psi_\gamma(b)$ over $\D_*$ yields the product of expectations
	\begin{align*}
		&\Ex[(x,y,b) \sim \D_*]{\chi_\alpha(x)\phi_\beta(y)\psi_\gamma(b)}\\
			&= {\left(\prod_{S \in \beta\setminus\bar{\gamma}} \Ex{\phi(y_S)}\right)
				\left(\prod_{S \in \beta\cap\bar{\gamma}} \Ex{\phi(y_S) \prod_{j : (S,j) \in \gamma} b_{(S,j)}x_{S_j}}\right)
				\left(\prod_{S \in \bar{\gamma}\setminus\beta}\Ex{\prod_{j : (S,j) \in \gamma} b_{(S,j)}x_{S_j}}\right)}.
	\end{align*}
	First, note that $\Ex[\D_*]{\phi(y_S)} = \Ex[\D(p)]{\phi(y_S)}= 0$, and so the above product is zero whenever $\beta\setminus\bar{\gamma} \neq \emptyset$. Thus all non-zero Fourier coefficients $\widehat{\mu_*}(\alpha,\beta,\gamma)$ must have $\beta \subseteq \bar{\gamma}$.

	Second, suppose there exists some $S^* \in \bar{\gamma}$ such that $\abs{\{j \mid (S^*,j)\in\gamma\}} < t$.
	Then the distribution of the coordinates $\{b_{(S^*,j)}x_{S^*_j}\}_{(S^*,j)\in\gamma}$ is uniform and independent of everything else. This is because $b_S \circ x_S$ is sampled independently from a $(t-1)$-wise uniform distribution if $y_S = -1$ and from the uniform distribution if $y_S = 1$. Thus, in both cases the aforementioned coordinates are uniform and independent of all other variables. This implies that
	\[
		\Ex{\prod_{j : (S^*,j) \in \gamma} b_{(S^*,j)}x_{S^*_j}} = 0 = \Ex{\phi(y_{S^*})\prod_{j : (S^*,j) \in \gamma} b_{(S^*,j)}x_{S^*_j}} .
	\]
	So we conclude that whenever such an $S^*$ exists, the Fourier coefficient is zero. Taking the contrapositive, we have that all nonzero Fourier coefficients must have $r(\gamma) \geq t$.

	Next, for $S\in \beta \cap \bar{\gamma}$
	\begin{align*}
		\Abs{\Ex[\D_*]{\phi(y_S) \prod_{j : (S,j) \in \gamma} b_{(S,j)}x_{S_j}}}
			&= \Abs{p \cdot \phi(-1) \Ex{\prod_{j : (S,j) \in \gamma} b_{(S,j)}x_{S_j} \mid y_S = -1} + q \cdot \phi(1)\cdot 0 }\\
			&\leq \Abs{p \cdot \phi(-1)} = \sqrt{pq}
	\end{align*}
	where in the first equality we have used the fact that $b_S \circ x_S$ is uniformly random conditioned on $y_S = 1$, and the inequality follows from the fact that the variables $b$ and $x$ are bounded by one.
	Finally, for $S \in \bar{\gamma}\setminus\beta$, we have by the same argument
	\begin{align*}
		\Abs{\Ex[\D_*]{\prod_{j : (S,j) \in \gamma} b_{(S,j)}x_{S_j}}}
			= \Abs{p \cdot \Ex{\prod_{j : (S,j) \in \gamma} b_{(S,j)}x_{S_j} \mid y_S = -1} + q \cdot 0 } \leq p.
	\end{align*}
	Plugging these bounds into the original product of expectations completes the proof.
\end{proof}

\subsection{The Planted Density for $\maxkxor$ and $\maxksat$}
At this point it is instructive to see what the planted density looks like for a concrete example. For the $\maxkxor$ problem the predicate is simply the sum of the input bits modulo two. Thus, $P(x) = \frac{1}{2} - \frac{1}{2}\prod_i x_i$. The uniform distribution on all satisfying assignments to $P$ is $k-1$-wise independent. Also note that the planted density for $\maxksat$ is exactly the same as that for $\maxkxor$. This is because the $k-1$-wise independent distribution supported on satisfying assignments to the $\maxksat$ predicate can be taken to be the uniform distribution on assignments satisfying the $\maxkxor$ predicate on the same bits. Combining these facts, and using the above analysis we can precisely compute the Fourier expansion of $\bar{\mu}_*$ for $\maxkxor$ (and $\maxksat$).
\[
	\mu_*(x,y,b) = \sum_{\alpha}\chi_{\alpha}(x)\sum_{\substack{\gamma \vdash \alpha\\ r(\gamma) \geq k}}\chi_\gamma(b)\sum_{\beta \subseteq{\bar{\gamma}}}(-\sqrt{pq})^{\beta \cap \bar{\gamma}}p^{\bar{\gamma}\setminus\beta}\phi_\beta(S)
\]

Since each scope has size $k$, the fact that $r(\gamma)\geq k$ implies that for each scope $S$, either the full product of the $k$ bits $b_S$ appears in $\chi_\gamma$ or none of them do. Thus, all that matters in the above Fourier expansion is if the parity of $b_S$ is odd or even. This makes sense, because for the $\maxkxor$ predicate all that is relevant about the negations is their parity. Let us use $c_S$ to denote this parity and let $c_\gamma \defeq \prod_{S \in \bar{\gamma}} c_S$.
Now, observing that $\Ind(y_S = -1) = p - \sqrt{pq}\phi(y_S)$ we can rewrite the above expression as
\[
	\mu_*(x,y,b) = \sum_{\alpha}\chi_{\alpha}(x)\sum_{\substack{\gamma \vdash \alpha\\ r(\gamma) = k}}\Ind(y_{\bar{\gamma}} = -\ovec)c_\gamma.
\]
Now imagine fixing $y$ and $b$ to some value i.e. fixing some instance of $\maxkxor$. In this case, the above expansion of $\mu_*$ simplifies to a function of $x$. For each parity $\chi_\alpha(x)$ we have a sum over derivations of $\alpha$ by the constraints included in the instance $y$. The coefficients of the sum are $\pm 1$ depending on whether the negations on a given constraint have odd or even parity.
Here the derivations of $\alpha$ by some set of constraints correspond exactly to linear combinations over $\F_2$ of the set of constraints when thought of as vectors in $\F_2$.

\subsection{Counting Non-zero Fourier Coefficients}
Now that we have computed bounds for the Fourier coefficients of the planted density $\mu_*$ we can use them to prove that $\mu_*$ has the conditional Fourier decay properties from \pref{def:plantedproperties}.
The first step is to estimate the number of nonzero Fourier coefficients $\widehat{\mu_*}(\alpha,\beta,\gamma)$ for each fixed $\alpha$.
\begin{lemma}
	\label{lem:fouriercount}
	Fix $\alpha \subseteq [n]$ and let $l \leq \frac{cn}{k}$ for a sufficiently small constant $c > 0$. Let $N_l(\alpha)$ denote the number of Fourier coefficients with $\widehat{\mu_*}(\alpha,\beta,\gamma) \neq 0$ and $\abs{\bar{\gamma}} = l$. Then
	\[
		N_l(\alpha) \leq C^l n^{kl - \frac{tl + \abs{\alpha}}{2}} l^{\frac{tl + \abs{\alpha}}{2} - l}
	\]
	where $C$ is a constant depending only on $k$ and $t$.
\end{lemma}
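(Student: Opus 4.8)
The plan is to reduce the statement to a purely combinatorial count using the structural constraints on nonzero coefficients already established. By \pref{lem:derives} and \pref{lem:fourierformula}, a coefficient $\widehat{\mu_*}(\alpha,\beta,\gamma)$ can be nonzero only if $\gamma \vdash \alpha$, $r(\gamma) \geq t$, and $\beta \subseteq \bar\gamma$. Since $\abs{\bar\gamma} = l$ there are at most $2^l$ admissible $\beta$ for each such $\gamma$, so it suffices to bound $M_l(\alpha) \defeq \#\{\gamma : \gamma \vdash \alpha,\ r(\gamma) \geq t,\ \abs{\bar\gamma} = l\}$ and then take $N_l(\alpha) \leq 2^l M_l(\alpha)$. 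I would first record the easy observation that for such a $\gamma$ one has $\abs{\alpha} \leq \abs{\gamma} \leq k l$: each index of $\alpha$ appears an odd, hence positive, number of times among the at most $kl$ pairs of $\gamma$. Together with $l \leq cn/k$, this keeps all relevant quantities $\ll n$.

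Next I would parametrize $\gamma$. A $\gamma$ with $\abs{\bar\gamma} = l$ and $r(\gamma) \geq t$ is exactly an unordered family $\{(S^{(a)}, T_a)\}_{a=1}^{l}$ of $l$ distinct scopes $S^{(a)} \in \Enk$ together with subsets $T_a \subseteq [k]$ of size $s_a \geq t$, where $(S^{(a)}, j) \in \gamma \iff j \in T_a$. Counting ordered tuples instead, and also dropping the distinctness of the $S^{(a)}$, only increases the count, so $l! \cdot M_l(\alpha)$ is at most the number of ordered tuples $(S^{(1)},T_1,\dots,S^{(l)},T_l)$ with each $\abs{T_a}\ge t$ and $\gamma \vdash \alpha$. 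Fix the $T_a$'s and write $m = \sum_a s_a \in [tl, kl]$. Each scope $S^{(a)}$ splits into its active coordinates $T_a$ and its $k - s_a$ inactive ones; the inactive coordinates (even after imposing distinctness within each scope, which I may ignore as it only lowers the count) contribute a factor of at most $n^{kl - m}$. The active coordinates form $m$ ``slots'' to which indices in $[n]$ must be assigned so that, by \pref{def:derives}, every $i \in \alpha$ gets an odd number of slots and every $i \notin \alpha$ an even number. The key sub-step is to bound the number of such parity-constrained assignments: choose, for each $i \in \alpha$, a distinct representative slot carrying value $i$ ($\leq m^{\abs{\alpha}}$ ways); the remaining $m - \abs{\alpha}$ slots (an even number, since $m \equiv \abs{\alpha} \pmod 2$) must then carry every value an even number of times, which is covered by choosing a perfect matching on them ($\leq (m-\abs{\alpha}-1)!! \leq m^{(m-\abs{\alpha})/2}$ ways) together with a value in $[n]$ per matched pair ($n^{(m-\abs{\alpha})/2}$ ways). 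So the active part contributes at most $m^{\abs{\alpha}} m^{(m-\abs{\alpha})/2} n^{(m-\abs{\alpha})/2}$, and multiplying by the inactive factor gives at most $m^{(m+\abs{\alpha})/2} n^{kl - (m+\abs{\alpha})/2}$ per choice of the $T_a$'s.

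Summing over the at most $2^{kl}$ choices of the $T_a$'s, I obtain
\[
 l!\cdot M_l(\alpha) \ \leq\ 2^{kl}\cdot \max_{tl \le m \le kl}\ m^{\frac{m+\abs{\alpha}}{2}}\, n^{kl - \frac{m+\abs{\alpha}}{2}}.
\]
Using $m \leq kl$ to write $m^{\frac{m+\abs{\alpha}}{2}} \leq k^{\frac{m+\abs{\alpha}}{2}} l^{\frac{m+\abs{\alpha}}{2}} \leq k^{kl}\, l^{\frac{m+\abs{\alpha}}{2}}$, the quantity to be maximized becomes $k^{kl}(l/n)^{\frac{m+\abs{\alpha}}{2}} n^{kl}$; since $l \leq cn/k < n$ this is decreasing in $m$ and hence maximized at $m = tl$. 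Thus $l!\cdot M_l(\alpha) \leq (2^k k^k)^l\, l^{\frac{tl+\abs{\alpha}}{2}}\, n^{kl - \frac{tl+\abs{\alpha}}{2}}$, and dividing by $l! \geq (l/e)^l$ and multiplying by the $2^l$ from the choice of $\beta$ gives $N_l(\alpha) \leq C^l\, l^{\frac{tl+\abs{\alpha}}{2} - l}\, n^{kl - \frac{tl+\abs{\alpha}}{2}}$ with $C = 2^{k+1}k^k e$, as claimed. The step requiring the most care is this last one: one must split $m^{(m+\abs{\alpha})/2}$ so that the leftover $l$-power combines with $n^{kl-(m+\abs{\alpha})/2}$ into a quantity monotone in $m$, so that the constraint $m \geq tl$ is applied in the right direction, and one must remember to divide by $l!$ — this is precisely where the crucial $l^{-l}$ factor in the statement originates. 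The parity-counting sub-step (representatives plus perfect matchings) is the other delicate piece, since a crude surjection bound there would lose a factor of roughly $l^{\Omega(l)}$.
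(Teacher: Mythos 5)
Your proof is correct and follows essentially the same approach as the paper's: both parametrize $\gamma$ by choosing, for each of the $l$ scopes, a subset $T_a\subseteq[k]$ of active slots, count parity-constrained index assignments giving a factor of roughly $m^{(m+|\alpha|)/2}\,n^{kl-(m+|\alpha|)/2}$, divide by $l!$ to pass from ordered to unordered scopes, observe the bound is maximized at $m=tl$ because $l\ll n$, and multiply by $2^l$ for $\beta\subseteq\bar\gamma$. The only cosmetic difference is the parity sub-count: you bound it with representatives plus perfect matchings, while the paper bounds it by summing over the number $d\le(m-|\alpha|)/2$ of distinct repeated indices and charging $\binom{n}{d}d^{m-|\alpha|}$; the two give equivalent bounds up to the absorbed constant $C$.
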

\begin{proof}
	By \prettyref{lem:fourierformula} all nonzero Fourier coefficients must satisfy $\gamma \vdash \alpha$, $r(\gamma) \geq t$ and $\beta \subseteq \bar{\gamma}$. For fixed $\alpha$ we first count the number of $\gamma$ with $\abs{\bar{\gamma}} = l$, $\gamma \vdash \alpha$, and $r(\gamma) \geq t$.
	To do so, note that each scope in $\bar{\gamma}$ is a sequence of $k$ indices from $[n]$. Thus, we can think of choosing each scope $S \in \bar{\gamma}$ by filling in $k$ empty slots with indices from $[n]$.

	Each possible $\gamma$ can then be selected as follows: First choose $l$ subsets $T_1,\dots, T_l \subseteq [k]$ to be the subsets of slots which are to be included in $\gamma$ from each of the $l$ scopes in $\bar{\gamma}$. Let $s_i = \abs{T_i}$ and let $s = \sum_i s_i$. Next, assign indices in $[n]$ to each of the $s$ slots in $T_1,\dots, T_l$ so that $\gamma \vdash \alpha$. Finally, assign indices arbitrarily to the remaining $kl - s$ slots in each scope in $\bar{\gamma}$.

	Now we count the number of ways to choose $\gamma$ by the above process.
	First, we fix a choice for the subsets $T_i$.
	For $\gamma \vdash \alpha$ to hold, some subset $A$ of the $s$ slots selected by the $T_i$ must be assigned the indices from $\alpha$, since each such index appears an odd number of times in $\gamma$. There are $\binom{s}{\abs{\alpha}}$ ways to select the subset $A$ and $\abs{\alpha}!$ ways to assign the indices from $\alpha$ to $A$.

	Further, the $s - \abs{\alpha}$ slots outside of $A$ must be assigned indices that come in matching pairs, since each such index must appear an even number of times.
	In this case the total number $d$ of distinct indices appearing in slots outside of $A$ is at most $\frac{s - \abs{\alpha}}{2}$. This follows from the fact that there are $s - \abs{\alpha}$ slots outside of $A$ and each index must appear at least twice. There are $\binom{n}{d}$ choices for the $d$ indices and at most $d^{s-\abs{\alpha}}$ ways to assign these indices to the slots selected by the $T_i$ that are not in $A$.

	Finally, there are $n^{kl - s}$ ways to arbitrarily assign the remaining $kl - s$ slots of the scopes in $\bar{\gamma}$. Putting this all together we have that the total number of ways to assign indices to the $s$ slots is
	\begin{align*}
		 \binom{s}{\abs{\alpha}}\abs{\alpha}!\sum_{d = 1}^{\frac{s - \abs{\alpha}}{2}} \binom{n}{d} d^{s - \abs{\alpha}} n^{kl - s}
			&\leq \binom{s}{\abs{\alpha}}\abs{\alpha}!\left(\frac{s - \abs{\alpha}}{2}\right) \binom{n}{\frac{s - \abs{\alpha}}{2}} \left(\frac{s - \abs{\alpha}}{2}\right)^{s - \abs{\alpha}}n^{kl - s}\\
			&\leq (es)^{\abs{\alpha}}\frac{s}{2} \left(\frac{en}{\frac{(s - \abs{\alpha})}{2}}\right)^{\frac{s - \abs{\alpha}}{2}} \left(\frac{s - \abs{\alpha}}{2}\right)^{s - \abs{\alpha}}n^{kl - s}\\
			&\leq \frac{s}{2}(en)^{kl - \frac{s + \abs{\alpha}}{2}}(es)^{\frac{s + \abs{\alpha}}{2}}
	\end{align*}
	where the second to last inequality uses the bounds $\binom{n}{k} \leq \left(\frac{en}{k}\right)^k$ and $n! \leq n^n$.

	Now note that if we sum the above bound over all choices for the subsets $T_1,\dots T_l$ each $\gamma$ will be counted $l!$ times, once for each ordering of the $l$ scopes comprising $\gamma$. Thus the total number of $\gamma$ where $\gamma \vdash \alpha$ and $r(\gamma) \geq t$ is at most
	\begin{align*}
		\abs{\{\gamma \mid \gamma \vdash \alpha, r(\gamma) \geq t\}}
			&\leq \frac{1}{l!}\sum_{T_1,\dots,T_l} \frac{s}{2}(en)^{kl - \frac{s + \abs{\alpha}}{2}}(es)^{\frac{s + \abs{\alpha}}{2}}\\
			&\leq \frac{1}{l!}\sum_{T_1,\dots,T_l} \frac{kl}{2}(en)^{kl - \frac{s + \abs{\alpha}}{2}}(ekl)^{\frac{s + \abs{\alpha}}{2}}
	\end{align*}
	where the last line uses that $s \leq kl$. Since $r(\gamma) \geq t$ we have $s \geq tl$. Further $kl < n$, and so the expression inside the sum is maximized when $s$ is minimized i.e. when $s = tl$.
	Plugging this into the above inequality we have
	\begin{align*}
		\abs{\{\gamma \mid \gamma \vdash \alpha, r(\gamma) \geq t\}}
			&\leq \frac{1}{l!}\sum_{s_1,\dots,s_l} \frac{kl}{2}(en)^{kl - \frac{tl + \abs{\alpha}}{2}}(ekl)^{\frac{tl + \abs{\alpha}}{2}}\\
			&\leq \frac{k^l}{l!}\frac{kl}{2}(en)^{kl - \frac{tl + \abs{\alpha}}{2}}(ekl)^{\frac{tl + \abs{\alpha}}{2}}\\
			&\leq l^{-l}2^l(ek)^{l}\frac{k}{2}(en)^{kl - \frac{tl + \abs{\alpha}}{2}}(ekl)^{\frac{tl + \abs{\alpha}}{2}}\\
			&\leq C^l n^{kl - \frac{tl + \abs{\alpha}}{2}} l^{\frac{tl + \abs{\alpha}}{2} - l}
	\end{align*}
	where in the last inequality $C$ is a constant depending only on $k$ and $t$.
	Finally, since $\beta \subseteq \bar{\gamma}$ there are $2^l$ possible values of $\beta$ for each $\gamma$ satisfying $\gamma \vdash \alpha$ and $r(\gamma) \geq t$. Combining this with the above bound completes the proof.
\end{proof}

A key combinatorial quantity in the remainder of the analysis is the sum over non-zero Fourier coefficients weighted by $p^{\abs{\bar{\gamma}}}$. Our next step is to compute bounds on this quantity.

\begin{lemma}
	\label{lem:levell}
	Fix $\alpha \subseteq [n]$. For $\Delta \geq 1$, let $p = \frac{\Delta n}{\abs{\Enk}}$ and let $\ceil{\frac{\alpha}{k}} \leq s \leq l$. Let $N_r(\alpha)$ be the number of Fourier coefficients with $\widehat{\mu_*}(\alpha,\beta,\gamma) \neq 0$ and $\abs{\bar{\gamma}} = r$.
	Let $\rho>0$ be a sufficiently small constant depending only on $k$. For any $l \leq \rho n\Delta^{-\frac{2}{t-2}}$
	\[
		\sum_{r = s}^l p^rN_r(\alpha) 
		\leq (C\Delta)^s\left(\frac{s}{n}\right)^{\frac{t-2}{2}s + \frac{\abs{\alpha}}{2}}
	\]
	for some constant $C$ depending only on $k$.
\end{lemma}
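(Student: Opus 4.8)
The plan is to plug the counting bound of \pref{lem:fouriercount} into the sum, absorb the weight $p^r$ using the fact that $p = \Theta(\Delta\, n^{1-k})$, and then observe that the resulting series is geometrically dominated by its first term $r = s$, thanks to the hypothesis $l \leq \rho n \Delta^{-2/(t-2)}$. Throughout we take $\abs\alpha \geq 1$, so that $s \geq \ceil{\abs\alpha/k} \geq 1$; this is the only case relevant to the application in \pref{prop:nonnegwhp}.

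First I would record the elementary estimate $p \leq 2^k \Delta\, n^{1-k}$, which follows from $\abs{\Enk} = n(n-1)\cdots(n-k+1) \geq (n/2)^k$ for $n \geq 2k$. Combining this with \pref{lem:fouriercount} — whose hypothesis $r \leq cn/k$ holds once $\rho \leq c/k$, since $r \leq l \leq \rho n$ — gives, for every $s \leq r \leq l$,
\[
	p^r N_r(\alpha) \;\leq\; (2^k C)^r\, \Delta^r\, n^{(1-k)r}\, n^{kr - \frac{tr+\abs\alpha}{2}}\, r^{\frac{tr+\abs\alpha}{2}-r} \;=\; (C_1\Delta)^r \left(\tfrac{r}{n}\right)^{\frac{(t-2)r+\abs\alpha}{2}},
\]
which we denote $a_r$; here $C_1 = 2^k C$ depends only on $k$ (recall $t \leq k$), and the equality is just the identities $(1-k)r + kr - \tfrac{tr+\abs\alpha}{2} = -\tfrac{(t-2)r+\abs\alpha}{2}$ and $\tfrac{tr+\abs\alpha}{2} - r = \tfrac{(t-2)r+\abs\alpha}{2}$.

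It then remains to bound $\sum_{r=s}^l a_r$, and for this it suffices to show $a_{r+1}/a_r \leq \tfrac12$ throughout the range, since then the sum is at most $2a_s$. Writing it out,
\[
	\frac{a_{r+1}}{a_r} = C_1 \Delta \left(\tfrac{r+1}{n}\right)^{\frac{t-2}{2}} \left(1+\tfrac1r\right)^{\frac{(t-2)r+\abs\alpha}{2}}.
\]
The last factor is at most $\exp\!\left(\tfrac{t-2}{2} + \tfrac{\abs\alpha}{2r}\right)$, and since $r \geq s \geq \ceil{\abs\alpha/k}$ we have $\abs\alpha \leq kr$, so this factor is bounded by the constant $e^{(t-2+k)/2}$. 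For the remaining factors, the hypothesis $l \leq \rho n \Delta^{-2/(t-2)}$ together with $r+1 \leq 2l$ yields $\left(\tfrac{r+1}{n}\right)^{(t-2)/2} \leq 2^{(t-2)/2}\rho^{(t-2)/2}\Delta^{-1}$, so $C_1\Delta\left(\tfrac{r+1}{n}\right)^{(t-2)/2} \leq C_1\, 2^{(t-2)/2}\rho^{(t-2)/2}$. Multiplying, $a_{r+1}/a_r \leq C_1\, 2^{(t-2)/2}\, e^{(t-2+k)/2}\, \rho^{(t-2)/2}$, which is at most $\tfrac12$ once $\rho$ is chosen small enough as a function of $k$ (using $t \leq k$). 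Hence $\sum_{r=s}^l p^r N_r(\alpha) \leq 2a_s = 2(C_1\Delta)^s (s/n)^{\frac{(t-2)s+\abs\alpha}{2}}$, which is the claimed inequality with $C = 2C_1$.

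This argument is essentially a ratio test, so there is no deep obstacle; the one point that needs care is the factor $\left(\tfrac{r}{n}\right)^{\abs\alpha/2}$ hidden inside $a_r$, which \emph{increases} with $r$ and would break the geometric decay if $\abs\alpha$ were large relative to $r$. This is exactly why the lemma assumes $s \geq \ceil{\abs\alpha/k}$ — one needs at least $\abs\alpha/k$ scopes to touch $\abs\alpha$ distinct variables — and it lets us absorb $\abs\alpha/2r \leq k/2$ into a constant. The other bookkeeping point is to keep $\rho$ depending only on $k$, which is fine since $t \leq k$.
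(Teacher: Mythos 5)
Your proof is correct and follows essentially the same route as the paper's: both plug the bound from \pref{lem:fouriercount} into the sum, absorb $p^r$ via $p = \Theta(\Delta n^{1-k})$, and then use the hypothesis $l \leq \rho n \Delta^{-2/(t-2)}$ to show the resulting series is geometrically dominated by its $r=s$ term (implicitly using the standing assumption $t \geq 3$ so that $\rho^{(t-2)/2} \to 0$). The only stylistic difference is that you establish the geometric decay by a direct ratio test $a_{r+1}/a_r \leq \tfrac12$, whereas the paper factors out the $r=s$ term and bounds the remaining sum by a constant; also note that your restriction to $\abs\alpha \geq 1$ is not actually needed (and $\alpha = \emptyset$ does arise in the application via $\epsilon(0,1)$), but the argument works just as well whenever $s \geq 1$, which always holds there.
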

\begin{proof}
	By \prettyref{lem:fouriercount} we have
	\begin{align*}
		\sum_{r = s}^l p^rN_r(\alpha)
		&\leq \sum_{r=s}^l p^{r} \cdot C^r n^{kr - \frac{tr + \abs{\alpha}}{2}} r^{\frac{tr + \abs{\alpha}}{2} - r}\\
		&\leq \sum_{r=s}^l \Delta^r \cdot C^r n^{- \frac{tr + \abs{\alpha}}{2} + r} r^{\frac{tr + \abs{\alpha}}{2} - r}\\
		&\leq \sum_{r=s}^l (C\Delta)^r \left(\frac{r}{n}\right)^{\frac{t - 2}{2}r + \frac{\abs{\alpha}}{2}}\\
		&\leq (C\Delta)^s\left(\frac{s}{n}\right)^{\frac{t-2}{2}s + \frac{\abs{\alpha}}{2}}\sum_{r = s}^l (C\Delta)^{r - s}\left(\frac{r}{n}\right)^{\frac{t-2}{2}r + \frac{\abs{\alpha}}{2}}
		\left(\frac{n}{s}\right)^{\frac{t-2}{2}s + \frac{\abs{\alpha}}{2}}\\
		&= (C\Delta)^s\left(\frac{s}{n}\right)^{\frac{t-2}{2}s + \frac{\abs{\alpha}}{2}}\sum_{r = s}^l (C\Delta)^{r - s}\left(\frac{r}{n}\right)^{\frac{t-2}{2}(r-s)}
		\left(\frac{r}{s}\right)^{\frac{t-2}{2}s + \frac{\abs{\alpha}}{2}}.
	\end{align*}
	We will now show that the sum in the above equation is bounded by a constant. Since $r \leq l = \rho n\Delta^{-\frac{2}{t-2}}$ and $\abs{\alpha}\leq ks$ we have
	\begin{align*}
		\sum_{r = s}^l (C\Delta)^{r - s}\left(\frac{r}{n}\right)^{\frac{t-2}{2}(r-s)}
		\left(\frac{r}{s}\right)^{\frac{t-2}{2}s + \frac{\abs{\alpha}}{2}}
		&\leq \sum_{r = s}^l C^{r - s}\rho^{\frac{t-2}{2}(r-s)}
		\left(\frac{r}{s}\right)^{\frac{t-2}{2}s + \frac{ks}{2}}
	\end{align*}

	Using that $t \geq 3$ and setting $\rho < (e^{3k}C)^{-2}$ we next obtain
	\begin{align*}
		\sum_{r = s}^l \exp{(-3k(r-s))}
		\left(\frac{r}{s}\right)^{\frac{t-2}{2}s + \frac{ks}{2}}
			&\leq \sum_{r = s}^l \exp\left(-3k(r-s) + ks\log \left(\frac{r}{s} \right)\right)\\
			&= 	\sum_{r = s}^l \exp\left(-3k(r-s) + ks\log \left(1 + \frac{r - s}{s} \right)\right)\\
			&\leq \sum_{r = s}^l \exp\left(-3k(r-s) + ks \left(\frac{r-s}{s} \right)\right)\\
			&= \sum_{r = s}^l \exp\left(-2k(r-s)\right) \leq 1 + \frac{1}{\exp(-2k)}
	\end{align*}

	Putting it all together we have shown
	\begin{align*}
	\sum_{r = s}^l p^rN_r(\alpha)
		&\leq (C\Delta)^s\left(\frac{s}{n}\right)^{\frac{t-2}{2}s + \frac{\abs{\alpha}}{2}}
	\end{align*}
\end{proof}

Next for a fixed $\alpha$ we compute the $L^2$-norm of the part of $\mu_*$ corresponding to Fourier coefficients with $\abs{\bar{\gamma}} \leq l$.

\begin{lemma}
	\label{lem:l2norm}
	Fix $\alpha \subseteq [n]$. For $\Delta \geq 1$, let $p = \frac{\Delta n}{\abs{\Enk}}$ and let $s = \ceil*{\frac{\abs{\alpha}}{k}}$.
	Let $\rho>0$ be a sufficiently small constant depending only on $k$. For any $l \leq \rho n\Delta^{-\frac{2}{t-2}}$
	\[
		\sum_{\beta, \gamma: \abs{\bar{\gamma}} \leq l} \widehat{\mu_*}^2(\alpha,\beta,\gamma)
		\leq (C\Delta)^s\binom{n}{s}^{-\frac{t-2}{2}}\binom{n}{\abs{\alpha}}^{-\frac{1}{2}}
	\]
	for some constant $C$ depending only on $k$.
\end{lemma}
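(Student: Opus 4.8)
The plan is to bound each squared Fourier coefficient using the explicit formula of \pref{lem:fourierformula}, collapse the double sum over $(\beta,\gamma)$ into the weighted count controlled by \pref{lem:levell}, and then translate the resulting power of $s/n$ into the claimed binomial expression. For the pointwise bound: by \pref{lem:fourierformula}, whenever $\widehat{\mu_*}(\alpha,\beta,\gamma)\neq 0$ we have $\widehat{\mu_*}(\alpha,\beta,\gamma)^2\le (pq)^{\abs{\bar{\gamma}\cap\beta}}p^{2\abs{\bar{\gamma}\setminus\beta}}$, and writing $r=\abs{\bar{\gamma}}$ and using $0\le p,q\le 1$ together with $\abs{\bar{\gamma}\cap\beta}+2\abs{\bar{\gamma}\setminus\beta}\ge r$, this is at most $p^{r}$. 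Moreover any nonzero coefficient has $\gamma\vdash\alpha$, so the pairs $(S,j)\in\gamma$ number at least $\abs{\alpha}$ (each index of $\alpha$ occupies at least one slot, and distinct indices occupy distinct slots) while there are at most $k\abs{\bar{\gamma}}$ of them; hence $\abs{\bar{\gamma}}\ge \lceil \abs{\alpha}/k\rceil = s$. Grouping the nonzero coefficients by $r=\abs{\bar{\gamma}}$ therefore gives
\[
\sum_{\beta,\gamma\,:\,\abs{\bar{\gamma}}\le l}\widehat{\mu_*}(\alpha,\beta,\gamma)^2 \;\le\; \sum_{r=s}^{l} p^{r} N_r(\alpha),
\]
where $N_r(\alpha)$ is as in \pref{lem:fouriercount}; we may assume $l\ge s$, since otherwise the left side is zero.

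Next, with $s=\lceil \abs{\alpha}/k\rceil$ and $l\le \rho n\Delta^{-2/(t-2)}$ (the hypothesis of the lemma), \pref{lem:levell} applies and bounds the right-hand side by $(C\Delta)^{s}\left(\tfrac{s}{n}\right)^{\frac{t-2}{2}s+\frac{\abs{\alpha}}{2}}$. It remains to convert this into binomial form. From $\binom{n}{s}\le (en/s)^{s}$ we get $(s/n)^{s}\le e^{s}\binom{n}{s}^{-1}$, hence $(s/n)^{\frac{t-2}{2}s}\le e^{\frac{t-2}{2}s}\binom{n}{s}^{-\frac{t-2}{2}}$. For the remaining factor, when $\abs{\alpha}\ge 1$ we have $s\le\abs{\alpha}$, so $(s/n)^{\abs{\alpha}/2}\le(\abs{\alpha}/n)^{\abs{\alpha}/2}\le e^{\abs{\alpha}/2}\binom{n}{\abs{\alpha}}^{-1/2}$ by the same estimate, and $e^{\abs{\alpha}/2}\le e^{ks/2}$ since $\abs{\alpha}\le ks$ (the case $\abs{\alpha}=0$ forces $s=0$ and the bound is trivial). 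Multiplying the two inequalities and absorbing $e^{((t-2)/2+k/2)s}$ into a new constant $C'$ (which, since $t\le k+1$, depends only on $k$) yields
\[
(C\Delta)^{s}\left(\frac{s}{n}\right)^{\frac{t-2}{2}s+\frac{\abs{\alpha}}{2}} \;\le\; (C'\Delta)^{s}\binom{n}{s}^{-\frac{t-2}{2}}\binom{n}{\abs{\alpha}}^{-\frac12},
\]
which is the asserted bound.

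The argument is short precisely because the two substantive pieces — the Fourier coefficient estimate and the geometric-type bound on $\sum_{r}p^{r}N_r(\alpha)$ — are already supplied by \pref{lem:fourierformula} and \pref{lem:levell}. The only points that need care are verifying that the sum over $r$ genuinely starts at $\lceil\abs{\alpha}/k\rceil$ (so that \pref{lem:levell} applies with the intended value of $s$) and checking that the various $e^{(\cdot)s}$ factors together with the constant from \pref{lem:levell} collapse into a single constant depending only on $k$; I do not anticipate any real obstacle here.
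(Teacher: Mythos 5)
Your proof is correct and follows essentially the same route as the paper: bound each squared coefficient by $p^{\abs{\bar{\gamma}}}$ via \pref{lem:fourierformula}, collapse the sum to $\sum_{r\ge s} p^r N_r(\alpha)$, invoke \pref{lem:levell}, and convert $(s/n)^{\frac{t-2}{2}s+\abs{\alpha}/2}$ into the binomial form using $s\le\abs{\alpha}$ and $\binom{n}{m}\le(en/m)^m$, absorbing the resulting $e^{O(s)}$ factor into $(C\Delta)^s$ (valid since $\Delta\ge1$). You are somewhat more explicit than the paper on two minor points it leaves implicit — that $\gamma\vdash\alpha$ forces $\abs{\bar{\gamma}}\ge\lceil\abs{\alpha}/k\rceil$, and that the extra exponential constant is $k$-dependent because $t\le k+1$ — but these are cosmetic, not substantive differences.
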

\begin{proof}
	By \prettyref{lem:fourierformula} the magnitude of every non-zero Fourier coefficient of $\mu_*$ is bounded by
	\begin{equation*}
		\abs{\widehat{\mu_*}(\alpha,\beta,\gamma)}
			\leq  \sqrt{pq}^{\abs{\bar{\gamma}\cap\beta}}  p^{\abs{\bar{\gamma}\setminus\beta}}
			\leq p^{\frac{\abs{\bar{\gamma}}}{2}}.
	\end{equation*}
	Let $s = \ceil*{\frac{\abs{\alpha}}{k}}$. Combining the above Fourier coefficient bounds with \prettyref{lem:fouriercount} we have
	\begin{align*}
		\sum_{\beta, \gamma: \abs{\bar{\gamma}} \leq l} \widehat{\mu_*}^2(\alpha,\beta,\gamma)
		&\leq \sum_{r=s}^l p^{r} \cdot N_r(\alpha)\\
		&\leq (C\Delta)^s\left(\frac{s}{n}\right)^{\frac{t-2}{2}s + \frac{\abs{\alpha}}{2}}
	\end{align*}
	where the final inequality is \pref{lem:levell}. Now using the fact $s \leq \abs{\alpha}$ and absorbing some constant factors into $C$ we obtain
	\[
		\sum_{\beta, \gamma: \abs{\bar{\gamma}} \leq l} \widehat{\mu_*}^2(\alpha,\beta,\gamma)
		\leq (C\Delta)^s\left(\frac{s}{n}\right)^{\frac{t-2}{2}s} \left(\frac{\abs{\alpha}}{n}\right)^{ \frac{\abs{\alpha}}{2}}
		\leq (C\Delta)^s\binom{n}{s}^{-\frac{t-2}{2}}\binom{n}{\abs{\alpha}}^{-\frac{1}{2}}.
	\]
\end{proof}

\subsection{Conditional Fourier Decay for CSPs}
 In this section we will establish that the pseudo-calibrated density $\bar{\mu}_*$ exhibits conditional Fourier decay.
Recall that for a fixed block $U\subseteq \Enk$ we defined the restricted instance $I_U = (y_U,b_U)$ where $y_U$ is the subset of coordinates of $y$ restricted to the scopes in $U$ and $b_U$ is the subset of coordinates $b_{(S,j)}$ for all $S\in U$ and all $j\in[k]$. We will let $V(U)$ denote the coordinates of $x$ appearing in the scopes in $U$. By abuse of notation we will also write $x_V \defeq x_{V(U)}$.

Using this notation, when we fix the block $U$ to $I_U = (y_U,b_U)$, the conditional density $\mu_*\rvert_U$ from \pref{def:condtionplanted} is given by
\[
	\mu_*\rvert_U(x,y_{\setcomp{U}},b_{\setcomp{U}}) = \frac{\Pr_{\D_*}[x_{\setcomp{V}},y_{\setcomp{U}},b_{\setcomp{U}} \mid x_V, y_U,b_U]}{\Pr_{\D(p)}[x_{\setcomp{V}},y_{\setcomp{U}},b_{\setcomp{U}}]}.
\]

\begin{observation}
\label{obs:mustarcondformula}
	Let $U \subseteq \Enk$. Then we have
	\[
		\mu_*\rvert_U(x,y_{\setcomp{U}},b_{\setcomp{U}}) \propto \prod_{S \notin U} \left(\Ind(y_S = -1)\eta_P(b_S \circ x_S) + \Ind(y_S = 1)\right).
	\]
\end{observation}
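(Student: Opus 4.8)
The plan is to derive the formula directly from the generative descriptions of $\D_*$ (\pref{def:cspplanted}) and of the null distribution $\D(p)$, mirroring the computation that produced \pref{eqn:mustarformula} for $\mu_*$ itself. Recall that, with the block $U$ and the restricted instance $I_U = (y_U, b_U)$ fixed and $V = V(U)$,
\[
\mu_*\rvert_U(x, y_{\setcomp U}, b_{\setcomp U}) = \frac{\Pr_{\D_*}[x_{\setcomp V}, y_{\setcomp U}, b_{\setcomp U} \mid x_V, y_U, b_U]}{\Pr_{\D(p)}[x_{\setcomp V}, y_{\setcomp U}, b_{\setcomp U}]}.
\]
Since under $\D(p)$ all the coordinates $x_i$, $y_S$, $b_{(S,j)}$ are mutually independent, the denominator is unchanged if we condition it on $(x_V, y_U, b_U)$, so I would first rewrite $\mu_*\rvert_U$ as a ratio of two conditional probabilities with the \emph{same} conditioning event.

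Next I would observe that both conditional distributions factor as the marginal law of $x_{\setcomp V}$ times a product over the scopes $S \notin U$. The key structural points are: (i) every scope $S \in U$ has all of its variables inside $V(U)$, so conditioning on $b_U$ together with $x_V$ is self-consistent and imposes no constraint on any coordinate outside $U$; and (ii) under $\D_*$ the pair $(y_S, b_S)$ for $S \notin U$ is generated using only $x_S$ and fresh independent randomness $z_S$ (one sets $b_S = z_S \circ x_S$ with $z_S \sim \U_P$ when $y_S = -1$, and samples $b_S$ uniformly when $y_S = 1$), hence it is conditionally independent of everything else given $x_S$. Therefore
\[
\Pr_{\D_*}[x_{\setcomp V}, y_{\setcomp U}, b_{\setcomp U} \mid x_V, y_U, b_U] = \Pr[x_{\setcomp V}] \prod_{S \notin U} \Pr_{\D_*}[y_S, b_S \mid x_S],
\]
and likewise with $\D(p)$ in place of $\D_*$. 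Taking the ratio, the $x_{\setcomp V}$ factors cancel, and for each $S \notin U$ the per-scope ratio equals $\eta_P(b_S \circ x_S)$ when $y_S = -1$ (using $\Pr_{z \sim \U_P}[z = b_S \circ x_S] = 2^{-k}\eta_P(b_S \circ x_S)$, together with the fact that $\Pr[y_S=-1]=p$ under both distributions) and equals $1$ when $y_S = 1$. Writing the $S$-th factor uniformly as $\Ind(y_S = -1)\eta_P(b_S \circ x_S) + \Ind(y_S = 1)$ and multiplying over $S \notin U$ gives the claimed expression; in fact this argument yields an exact equality, which is stronger than the stated proportionality. (Alternatively, one can combine \pref{eqn:mustarformula} with the identity $R_U\mu_* = \mu_*\rvert_U \cdot \pi_U$ from \pref{lem:restrictcondition}, noting that $\pi_U$ is proportional to $\prod_{S \in U}\bigl(\Ind(y_S=-1)\eta_P(b_S\circ x_S) + \Ind(y_S=1)\bigr)$ and dividing.)

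The step requiring the most care is not any calculation but the bookkeeping of which variables are fixed: one must keep in mind that $\mu_*\rvert_U$ is regarded as a function of the \emph{entire} assignment $x$, including the coordinates $x_V$ that are conditioned on, and correspondingly that the right-hand product genuinely depends on $x_V$ through those scopes $S \notin U$ that share variables with $U$ — the statement is emphatically not that the conditional density is independent of $x_V$. Verifying that the conditioning event has positive probability, so that Bayes' rule applies, is immediate since all the random variables involved are discrete.
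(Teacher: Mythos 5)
Your proposal is correct. Your primary argument takes a genuinely different route from the paper: you compute the conditional likelihood ratio directly from the generative descriptions of $\D_*$ and $\{-1,1\}^n\times\D(p)$, factoring both conditional laws over scopes $S\notin U$ and cancelling scope-by-scope, which in fact gives exact equality (not just proportionality). The paper instead invokes \pref{lem:restrictcondition} to write $R_U\mu_* = \pi_U\cdot\mu_*\rvert_U$, applies \pref{eqn:mustarformula} both to $R_U\mu_*$ (which is just $\mu_*$ with $(y_U,b_U)$ fixed) and, by independence of the block $U$ from its complement, to $\pi_U$ alone, and then divides the two product formulas — a shorter derivation once the earlier machinery is in place. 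You note this alternative in your closing parenthetical, so you have both. Your direct argument is more self-contained and makes explicit the per-scope cancellation (including the fact that the $p$'s and $2^{-k}$'s drop out), while the paper's version reuses \pref{lem:restrictcondition} and \pref{eqn:mustarformula} and hence requires almost no new computation. One small bookkeeping point you handled correctly but is worth flagging: the ``null distribution'' you divide by is $\{-1,1\}^n\times\D(p)$ rather than $\D(p)$ alone (the paper's $\D_\emptyset$ in \pref{def:condtionplanted} is used in this extended sense), and both your $x_{\setcomp V}$ marginals are uniform and so cancel, which you implicitly use.
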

\begin{proof}
Observe that since $x_V$ consists of all variables present in the scopes $U$, the distribution of $x_V,I_U$ is independent of everything outside of $U$. Thus we can apply \pref{eqn:mustarformula} to only the scopes in $U$ to obtain.
\[
	\pi_U(x_V,I_U) \propto \prod_{S \in U} \left(\Ind(y_S = -1)\eta_P(b_S \circ x_S) + \Ind(y_S = 1)\right)
\]
By \pref{lem:restrictcondition} we have $R_U\mu_*(x,y_{\setcomp{U}},b_{\setcomp{U}}) = \pi_U(x_V,I_U)\mu_*\rvert_U(x,y_{\setcomp{U}},b_{\setcomp{U}})$. Now applying \pref{eqn:mustarformula} to $R_U\mu_*(x,y_{\setcomp{U}},b_{\setcomp{U}}) = \mu_*(x,y,b)$ and dividing through by the formula for $\pi_U$ completes the proof. 
\end{proof}


Next we prove bounds on the Fourier coefficients of $\mu_*\rvert_U$ by relating them to the original Fourier coefficients of $\mu_*$. The key fact we exploit is that, after conditioning on all variables $x_V$ in the scopes $U$, the distribution of $x,y,b$ outside of $U$ is independent of $I_U$.
\begin{lemma}
	\label{lem:conditionalfourier}
	Let $U \subseteq \Enk$ and let $I_U$ and $x_V$ be the corresponding restricted instance and assignment. Let $\alpha \subseteq [n]$. If $\gamma \vdash \alpha$, $\bar{\gamma} \subseteq \Enk\setminus U$, $r(\gamma) \geq t$ and $\beta \subseteq \bar{\gamma}$ then
	\[
		\widehat{\mu_*\rvert_U}(\alpha,\beta,\gamma) \leq \sqrt{pq}^{\abs{\bar{\gamma}\cap\beta}}p^{\bar{\gamma}\setminus\beta}.
	\]
	Otherwise, $\widehat{\mu_*\rvert_U}(\alpha,\beta,\gamma) = 0$.
\end{lemma}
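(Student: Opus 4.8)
The plan is to reduce this statement to the unconditional computations already carried out in Lemmas~\ref{lem:derives} and~\ref{lem:fourierformula}. The starting point is \pref{obs:mustarcondformula}, which says that $\mu_*\rvert_U$ has, up to a normalizing constant, exactly the product shape of \pref{eqn:mustarformula}, but with the product taken only over scopes $S\notin U$:
\[
\mu_*\rvert_U(x,y_{\setcomp{U}},b_{\setcomp{U}}) \propto \prod_{S\notin U}\bigl(\Ind(y_S=-1)\eta_P(b_S\circ x_S)+\Ind(y_S=1)\bigr).
\]
In particular, the only negation variables $b_{(S,j)}$ and the only way $x$ enters are through scopes $S\notin U$, so the Fourier expansion of $\mu_*\rvert_U$ involves only basis functions $\chi_\alpha(x)\phi_\beta(y_{\setcomp{U}})\psi_\gamma(b_{\setcomp{U}})$ with $\bar\gamma\subseteq\Enk\setminus U$ and $\beta\subseteq\Enk\setminus U$; hence $\widehat{\mu_*\rvert_U}(\alpha,\beta,\gamma)=0$ whenever $\bar\gamma\not\subseteq\Enk\setminus U$. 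Running the argument of \pref{lem:derives} on the displayed product — expand each $\eta_P(b_S\circ x_S)$ in its Fourier series, multiply out over distinct scopes, and use $x_i^{c_i}\in\{1,x_i\}$ according to the parity of $c_i=\abs{\{(S,j)\in\gamma\mid S_j=i\}}$ — shows that every surviving monomial has the form $\psi_\gamma(b)\chi_\alpha(x)$ with $\gamma\vdash\alpha$, so $\widehat{\mu_*\rvert_U}(\alpha,\beta,\gamma)=0$ unless $\gamma\vdash\alpha$ as well.

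To bound the surviving coefficients I would use the identity $\widehat{\mu_*\rvert_U}(\alpha,\beta,\gamma)=\Ex[(x_{\setcomp{V}},I_{\setcomp{U}})\sim\D_*(\cdot\mid x_V,I_U)]{\chi_\alpha(x)\phi_\beta(y)\psi_\gamma(b)}$ and repeat the per-scope decomposition of $\chi_\alpha(x)\phi_\beta(y)\psi_\gamma(b)$ from \pref{lem:fourierformula}. The point that makes this legitimate is that, conditioned on $(x_V,I_U)$, the joint law of $x_{\setcomp{V}}$ together with $\{(y_S,b_S)\}_{S\notin U}$ still factors as a product over the scopes $S\notin U$, with each factor distributed exactly as under $\D_*$: for each such $S$, $y_S$ is an independent $p$-biased bit, and conditioned on $y_S=-1$ the tuple $b_S\circ x_S$ has the $(t-1)$-wise uniform distribution $\U_P$ while conditioned on $y_S=1$ it is uniform — and in both cases this law is independent of $x_S$, so fixing the shared variables $x_V$ does not couple the factors for distinct $S\notin U$. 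Given this factorization, the estimates of \pref{lem:fourierformula} transcribe verbatim: scopes $S\in\beta\setminus\bar\gamma$ contribute $\Ex{\phi(y_S)}=0$ (forcing $\beta\subseteq\bar\gamma$), any $S^*\in\bar\gamma$ of arity below $t$ contributes $0$ by $(t-1)$-wise uniformity (forcing $r(\gamma)\ge t$), scopes $S\in\beta\cap\bar\gamma$ contribute a factor bounded by $\sqrt{pq}$, and scopes $S\in\bar\gamma\setminus\beta$ a factor bounded by $p$. Multiplying gives $\abs{\widehat{\mu_*\rvert_U}(\alpha,\beta,\gamma)}\le\sqrt{pq}^{\abs{\bar\gamma\cap\beta}}p^{\abs{\bar\gamma\setminus\beta}}$ when $\gamma\vdash\alpha$, $\bar\gamma\subseteq\Enk\setminus U$, $r(\gamma)\ge t$ and $\beta\subseteq\bar\gamma$, and $0$ otherwise, which is the claim.

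The main obstacle is purely one of bookkeeping: verifying that conditioning on $(x_V,I_U)$ genuinely preserves the scope-wise product structure on the complement of $U$, i.e. that fixing the overlap variables $x_V$ introduces no correlation between the per-scope factors for $S\notin U$. This is exactly the point flagged just before the lemma statement, and it rests on the fact that under $\D_*$ the conditional law of $b_S\circ x_S$ given $y_S$ does not depend on $x_S$ (as already exploited in the proof of \pref{lem:fourierformula}). Once this is nailed down, everything else is a direct transcription of the proofs of \pref{lem:derives} and \pref{lem:fourierformula}.
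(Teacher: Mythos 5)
Your proposal is correct and follows essentially the same route as the paper's proof: start from \pref{obs:mustarcondformula}, rerun \pref{lem:derives} to get the support conditions $\gamma\vdash\alpha$ and $\bar\gamma,\beta\subseteq\Enk\setminus U$, and then redo the per-scope factorization from \pref{lem:fourierformula}, with the key observation being that the conditional law of $b_S\circ x_S$ given $y_S$ does not depend on $x_S$, so fixing $x_V$ cannot couple the factors for distinct $S\notin U$.

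Two small notational points worth tightening but which do not affect the argument. First, the displayed identity $\widehat{\mu_*\rvert_U}(\alpha,\beta,\gamma)=\Ex[(x_{\setcomp V},I_{\setcomp U})\sim\D_*(\cdot\mid x_V,I_U)]{\chi_\alpha\phi_\beta\psi_\gamma}$ is missing the outer average over $(x_V,I_U)\sim\D(p)$: since $\mu_*\rvert_U$ is a function of all of $x$ and the Fourier coefficient is computed against $\D(p)$, one actually has $\widehat{\mu_*\rvert_U}(\alpha,\beta,\gamma)=\Ex[x_V,I_U\sim\D(p)]{\Ex[\D_*\rvert_U]{\chi_\alpha\phi_\beta\psi_\gamma}}$, and what you prove is a bound on the inner expectation that is uniform in the fixing, which then transfers to the coefficient — this is precisely what the paper does. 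Second, the phrase ``the joint law of $x_{\setcomp V}$ together with $\{(y_S,b_S)\}_{S\notin U}$ factors as a product over scopes'' is not literally true, since a variable $x_i$ with $i\notin V$ can still be shared between distinct scopes outside $U$; what actually factors is the monomial $\chi_\alpha\phi_\beta\psi_\gamma$ once you use $\gamma\vdash\alpha$ to rewrite it as $\prod_S\phi(y_S)^{\Ind(S\in\beta)}\prod_{j:(S,j)\in\gamma}b_{(S,j)}x_{S_j}$, at which point the independence of the per-scope tuples $(y_S,\,b_S\circ x_S)$ suffices. You do state the correct independence property in the next clause, so this is only a matter of phrasing.
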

\begin{proof}
	The proof is by modifying the proofs in \pref{lem:derives} and \pref{lem:fourierformula} for the original planted density.
	By \pref{obs:mustarcondformula} we have that
	\[
		\mu_*\rvert_U(x,y_{\setcomp{U}},b_{\setcomp{U}}) \propto \prod_{S \notin U} \left(\Ind(y_S = -1)\eta_P(b_S \circ x_S) + \Ind(y_S = 1)\right).
	\]
	Thus the argument from the proof of \pref{lem:derives} still holds and all non-zero Fourier coefficients must have $\bar{\gamma} \vdash \alpha$.
	Further, since the above product is only over $b$ with coordinates outside of $U$, we can write its Fourier expansion entirely in terms of basis functions $\psi_\gamma$ with both $\bar{\gamma} \subseteq \Enk\setminus U$ and $\beta \subseteq \Enk\setminus U$.

	Next, since $\mu_*\rvert_U$ is a density relative to $\D(p)\rvert U$ we have
	\begin{align*}
		\Ex[\D(p)]{\mu_*\rvert_U \chi_\alpha \phi_\beta \psi_\gamma} 
			&= \Ex[x_V, I_U \sim \D(p)]{\Ex[\D(p)\rvert_U]{\mu_*\rvert_U \chi_\alpha \phi_\beta \psi_\gamma}}\\
			&= \Ex[x_V, I_U \sim \D(p)]{\Ex[\D_*\rvert_U]{\chi_{\alpha}\phi_\beta\psi_\gamma}}
	\end{align*}
	Thus if we can get a bound on the inner expectation that holds for all fixings of $x_V, I_U$ then we will have an overall bound on the Fourier coefficients.
	Therefore, we proceed by carrying out the argument of \pref{lem:fourierformula} where we compute the expectation of each basis function
	\[
	\Ex[\D_*\rvert_U]{\chi_{\alpha}(x)\phi_\beta(y)\psi_\gamma(b)}.
	\]
	As before, this expectation splits into a product of expectations corresponding to each scope $S$. To see why, note that since $\gamma \vdash \alpha$, the term corresponding to each scope $S$ has dependence on $x$ and $b$ of the form
	\[
		\prod_{j : (S,j) \in \gamma} b_{(S,j)}x_{S_j}.
	\]
	Therefore, even though the variables $x_V$ are fixed due to the conditioning of $\D_*\rvert_U$, the product above depends only on $y_S$ and is independent of everything else. 
	
	Next, any term where $\beta\setminus\bar{\gamma} \neq \emptyset$ must be zero as we still have that $\Ex[\D_*\rvert_U]{\phi(y_S)} = \Ex[\D(p)]{\phi(y_S)} = 0$. Furthermore, $t$-wise independence of $b_S \circ x_S$ still implies that the expectation corresponding to any scope $S$ with $\Abs{\left\{j \mid (S,j) \in \gamma\right\} } < t$ must be zero. So $r(\gamma) \geq t$ for all non-zero Fourier coefficients.

	Finally, the magnitude of the expectation of each term corresponding to a remaining scope $S$ has the same upper bound as in \pref{lem:fourierformula}, since all we used in that part of the proof was that each scope is included with probability $p$ and that the coordinates of $x$ and $b$ are bounded by one. This completes the proof.
\end{proof}

We are now ready to prove that the planted density for CSPs exhibits conditional Fourier decay.
\begin{proposition}
\label{prop:cspconditionalfourierdecay}
Fix $\delta = O(\log n)$ in \pref{def:plantedproperties}. 
Let $1 < \Delta < n^{\frac{t-2}{2} - \epsilon} $ and let $p = \frac{\Delta n}{\abs{\Enk}}$.
If $s = \max\left\{\ceil*{\frac{s_x}{k}} , s_I\right\}$ then $\mu_*$ exhibits $\epsilon(s_x,s_I)$ conditional Fourier decay with
\[
	\epsilon(s_x,s_I) = (C\Delta)^{s}\left(\frac{s}{n}\right)^{\frac{t-2}{2}s}\left(\frac{s}{s_x}\right)^{\frac{s_x}{2}}
\]
\end{proposition}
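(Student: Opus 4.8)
The plan is to instantiate \pref{def:plantedproperties} with the pointwise bound
\[
  B_U(\alpha,\beta,\gamma)=\sqrt{pq}^{\,\abs{\bar\gamma\cap\beta}}\,p^{\abs{\bar\gamma\setminus\beta}}
\]
on the triples satisfying $\gamma\vdash\alpha$, $r(\gamma)\ge t$, $\beta\subseteq\bar\gamma$ and $\bar\gamma\subseteq\Enk\setminus U$, and $B_U(\alpha,\beta,\gamma)=0$ otherwise; this is a valid upper bound on $\abs{\widehat{\mu_*\rvert_U}(\alpha,\beta,\gamma)}$ by \pref{lem:conditionalfourier}. On the support of $B_U$ one has $\beta\subseteq\bar\gamma$, hence $B_U(\alpha,\beta,\gamma)=p^{\abs{\bar\gamma}-\abs\beta/2}q^{\abs\beta/2}\le p^{\abs{\bar\gamma}-\abs\beta/2}$. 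The second property of \pref{def:plantedproperties} is then a short calculation with the $p$-biased basis: since $\normi{\psi_{\gamma'}}=1$ and $\normi{\phi_{\beta'}}=(q/p)^{\abs{\beta'}/2}$ (using $p\le\tfrac12$), each extra $p$-biased character appearing in $\zeta_{\sigma'}$ contributes a factor $(q/p)^{1/2}$ to $\normi{\zeta_{\sigma'}}$, which is exactly matched by a $\sqrt{pq}$ factor in $B_\emptyset(\alpha,\sigma+\sigma')$ through the identity $\sqrt{pq}\cdot\sqrt{q/p}=q\le1$; together with $p^{\abs{\bar{\gamma'}\setminus\beta'}}\le1$ this yields $B_\emptyset(\alpha,\sigma+\sigma')\normi{\zeta_{\sigma'}}\le B_\emptyset(\alpha,\sigma)$.

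The bulk of the work is the first property. Fix $\alpha$ with $\abs\alpha=s_x$, fix $s_I$, and let $\D$ be any $b$-CBD distribution with parameter $\delta$ and fixed block $U$. Only triples with $B_U(\alpha,\sigma)>0$ contribute, and for these $\supp(\sigma)=\bar\gamma\subseteq\Enk\setminus U$; writing $\zeta_\sigma=\phi_\beta(y)\psi_\gamma(b)$ with $\abs{\psi_\gamma}\equiv1$, the $b$-CBD domination $\Pr_\D[I_\beta=I'_\beta]\le\Pr_{\D(p)}[I_\beta=I'_\beta]^{1-\delta}$ (valid since $\beta\subseteq\bar\gamma\subseteq\Enk\setminus U$) factorizes over the coordinates of $\beta$ to give
\[
  \abs{\Ex[\D]{\zeta_\sigma}}\le\Ex[\D]{\textstyle\prod_{S\in\beta}\abs{\phi(y_S)}}\le\prod_{S\in\beta}2^{k\delta}\bigparen{p^{1-\delta}\sqrt{q/p}+q^{1-\delta}\sqrt{p/q}}\le\bigparen{C\sqrt{pq}\,p^{-\delta}}^{\abs\beta}.
\]
Taking $\delta$ as in the statement makes $p^{-\delta}\le n^{(k-1)\delta}=O_k(1)$, so $\abs{\Ex[\D]{\zeta_\sigma}}\le C^{\abs\beta}p^{\abs\beta/2}$ for a constant $C=C(k)$; multiplying by the bound on $B_U$ gives the clean termwise estimate $B_U(\alpha,\sigma)\abs{\Ex[\D]{\zeta_\sigma}}\le C^{\abs{\bar\gamma}}p^{\abs{\bar\gamma}}$. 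Now group the triples by $r=\abs{\bar\gamma}$: there are at most $N_r(\alpha)$ of them (\pref{lem:fouriercount}, which already accounts for the $2^r$ choices of $\beta\subseteq\bar\gamma$), and $N_r(\alpha)=0$ for $r<\ceil{s_x/k}$ since $\gamma\vdash\alpha$ forces $\abs{\bar\gamma}\ge\ceil{s_x/k}$. Hence, with $s\defeq\max\bigset{\ceil{s_x/k},\,s_I}$,
\[
  \sum_{s_I\le\abs\sigma\le d_I}B_U(\alpha,\sigma)\abs{\Ex[\D]{\zeta_\sigma}}\le\sum_{r=s}^{d_I}(Cp)^rN_r(\alpha)\le(C'\Delta)^s\Paren{\frac sn}^{\frac{t-2}2s+\frac{s_x}2},
\]
where the last step is \pref{lem:levell} applied with $p$ replaced by $Cp$ (equivalently with $\Delta$ replaced by $C\Delta$, which only tightens its hypothesis $d_I\le\rho n\Delta^{-2/(t-2)}$). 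Finally, using $\binom n{s_x}\le(en/s_x)^{s_x}$ one checks $(C'\Delta)^s(s/n)^{\frac{t-2}2s+s_x/2}\le\binom n{s_x}^{-1/2}(C\Delta)^s(s/n)^{\frac{t-2}2s}(s/s_x)^{s_x/2}$ provided the constant $C=C(k)$ in $\epsilon(s_x,s_I)$ is taken large enough, the spare factor $e^{s_x/2}\le e^{ks/2}$ being absorbed since $s_x\le ks$. The right-hand side is precisely $\binom n{s_x}^{-1/2}\epsilon(s_x,s_I)$, which establishes the first property and the proposition.

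The step I expect to require the most care is the $\Ex[\D]{\zeta_\sigma}$ estimate: the earlier lemmas supply pointwise Fourier bounds and counting estimates relative to the fixed measure $\D(p)$, whereas \pref{def:plantedproperties} demands a bound uniform over \emph{all} $b$-CBD distributions $\D$. Blockwise-density only dominates $\D$ by $\D(p)^{1-\delta}$ rather than by $\D(p)$ itself, so the argument produces slack factors $p^{-\delta\abs\beta}$ and $2^{k\delta\abs{\bar\gamma}}$, and the whole point of choosing $\delta$ small (so $p^{-\delta}\le n^{(k-1)\delta}=O_k(1)$) is to absorb these into the $k$-dependent constant $C$. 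The remainder is bookkeeping: matching the output of \pref{lem:levell} against the claimed form of $\epsilon(s_x,s_I)$, and checking that for the intended choices of $d_x,d_I$ the quantity $s=\max\{\ceil{s_x/k},s_I\}$ stays inside the admissible range $s\le\rho n\Delta^{-2/(t-2)}$, which is a genuine power of $n$ thanks to $\Delta<n^{\frac{t-2}2-\epsilon}$.
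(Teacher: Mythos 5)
Your proof is correct and follows essentially the same route as the paper's: take $B_U$ from \pref{lem:conditionalfourier}, verify property (2) of \pref{def:plantedproperties} via the cancellation $\sqrt{pq}\cdot\sqrt{q/p}=q\le 1$ (using $\beta'\subseteq\bar{\gamma'}$), and for property (1) combine CBD domination by $\D(p)^{1-\delta}$ with the counting estimate \pref{lem:fouriercount} and the weighted sum in \pref{lem:levell} after absorbing $p^{-\delta}$ into the constant. You are slightly more explicit than the paper in two spots it elides: you justify the factorization $\abs{\Ex[\D]{\zeta_\sigma}}\le\prod_{S\in\beta}(\cdots)$ by passing through the product measure $\D(p)^{1-\delta}$ rather than asserting it directly, and you track the stray $e^{s_x/2}$ from $\binom{n}{s_x}\le(en/s_x)^{s_x}$ and fold it into $C^s$ using $s_x\le ks$.
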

\begin{proof}
	We first prove that $\mu_*$ satisfies the first property from \pref{def:plantedproperties}. Fix $\alpha \subseteq [n]$ with $\abs{\alpha} = s_x \leq d_x$ and $s_I \leq d_I$. Let $A(\alpha) = \{(\beta,\gamma) \mid \widehat{\mu_*\rvert_U}(\alpha,\beta,\gamma) \neq 0\}$ be the set of $\beta,\gamma$ corresponding to the non-zero Fourier coefficients of $\widehat{\mu_*\rvert_U}$. Recall that for the planted density on CSPs the Fourier basis functions for the instance variables are $\zeta_\sigma = \phi_\beta\chi_\gamma$. By the bounds in \pref{lem:conditionalfourier} and the fact that $\E_{\D}[\phi(y_S)] \leq \sqrt{p}^{1-\delta}$, we have
	\begin{align*}
		\sum_{s_I \leq \abs{\sigma} \leq d_I} \Abs{\widehat{\mu_*\rvert_U}(\alpha,\sigma)\Ex[\D]{\zeta_\sigma}}
			&= \sum_{s_I \leq \abs{\beta}, \abs{\bar{\gamma}} \leq d_I} \Abs{\widehat{\mu_*\rvert_U}(\alpha,\beta,\gamma)\Ex[\D]{\phi_\beta \chi_\gamma}}\\
			&\leq \sum_{\substack{s_I \leq \abs{\beta}, \abs{\bar{\gamma}} \leq d_I\\
				(\beta,\gamma) \in A(\alpha)}} 
				\sqrt{pq}^{\abs{\bar{\gamma}\cap\beta}}p^{\bar{\gamma}\setminus\beta}\Abs{\Ex[\D]{\phi_\beta \chi_\gamma}}\\
			&\leq \sum_{\substack{s_I \leq \abs{\beta}, \abs{\bar{\gamma}} \leq d_I\\
				(\beta,\gamma) \in A(\alpha)}} 
				\sqrt{pq}^{\abs{\bar{\gamma}\cap\beta}}p^{\bar{\gamma}\setminus\beta}\sqrt{p}^{(1-\delta)\abs{\beta}}\\
			&\leq \sum_{\substack{s_I \leq \abs{\beta}, \abs{\bar{\gamma}} \leq d_I\\
				(\beta,\gamma) \in A(\alpha)}} 
				p^{(1-\delta)\abs{\bar{\gamma}}}\\
	\end{align*}
	Since $\delta \leq O\left(\frac{1}{\log n}\right)$ and $p = \frac{\Delta n}{\abs{\Enk}} \leq O\left(\frac{1}{n}\right)$ we have
	\[
		p^{1-\delta} = \frac{\Delta n}{\Enk}\cdot p^{-\delta} = \frac{(c \Delta) n}{\Enk}
	\]
	for some constant $c$.
	Further, observe that \pref{lem:conditionalfourier} implies that the non-zero Fourier coefficients of $\mu_*\rvert_{U}$ are a subset of those of $\mu_*$. Thus the number of non-zero Fourier coefficients of $\mu_*\rvert_{U}$ corresponding to a fixed $\alpha$ with $\abs{\bar{\gamma}} \leq r$ is at most $N_r(\alpha)$, and all non-zero Fourier coefficients have $\abs{\bar{\gamma}} \geq \ceil{\frac{\abs{\alpha}}{k}}$. Thus, for $s = \max\left\{\ceil{\frac{s_x}{k}} , s_I\right\}$ we conclude that
	\begin{align*}
		\sum_{s_I \leq \abs{\beta}, \abs{\bar{\gamma}} \leq d_I} \Abs{\widehat{\mu_*\rvert_U}(\alpha,\beta,\gamma)\Ex[\D]{\phi_\beta \chi_\gamma}}
			&\leq \sum_{r = s}^{d_I} N_r(\alpha)p^{(1-\delta)r}\\
			&\leq (C\Delta)^{s}\left(\frac{s}{n}\right)^{\frac{t-2}{2}s + \frac{\abs{\alpha}}{2}}
	\end{align*}
	where the final inequality follows by applying \pref{lem:levell} and absorbing the constant $c$ into $C$.
	Using the fact that $s_x = \abs{\alpha}$ and rearranging yields
	\begin{align*}
		(C\Delta)^{s}\left(\frac{s}{n}\right)^{\frac{t-2}{2}s + \frac{\abs{\alpha}}{2}} 
			&\leq (C\Delta)^{s}\left(\frac{s}{n}\right)^{\frac{t-2}{2}s}\left(\frac{s}{s_x}\right)^{\frac{s_x}{2}} \binom{n}{s_x}^{-\frac{1}{2}}\\
	\end{align*}
	This completes the proof for the first property.
	
	To prove that the second property from \pref{def:plantedproperties} holds, recall that for CSPs the basis functions $\zeta_\sigma = \phi_\beta\psi_\gamma$ where each coordinate $\sigma_j$ indicates both whether a given scope is included in $\beta$, as well as which of the negations on that scope are included in $\gamma$. For any $\sigma$ and $\sigma'$ with disjoint support, let $\beta,\gamma$ and $\beta',\gamma'$ be the corresponding (disjoint) subsets of scopes and negations. Then $\sigma + \sigma'$ corresponds to $\beta\cup\beta',\gamma\cup\gamma'$. Observing that $\norm{\phi_\beta}_\infty = \sqrt{\frac{q}{p}}^{\abs{\beta}}$ and $\norm{\psi_\gamma}_\infty = 1$ we have
	\begin{align*}
		\Abs{\widehat{\mu_*}(\alpha,\sigma + \sigma')}\norm{\zeta_{\sigma'}}_\infty 
			&=\Abs{\widehat{\mu_*}(\alpha,\beta \cup \beta',\gamma \cup \gamma')}\norm{\phi_{\beta'}}_\infty \norm{\psi_{\gamma'}}_\infty \\
			&\leq \sqrt{pq}^{\abs{(\bar{\gamma} \cup \bar{\gamma'})\cap(\beta \cup \beta')}}p^{(\bar{\gamma}\cup\bar{\gamma'})\setminus(\beta\cup\beta')} \cdot \sqrt{\frac{q}{p}}^{\abs{\beta'}}\\
			&\leq \sqrt{pq}^{\abs{\bar{\gamma}\cap\beta}}p^{(\bar{\gamma}\cup\bar{\gamma'})\setminus(\beta\cup\beta')} \leq \sqrt{pq}^{\abs{\bar{\gamma}\cap\beta}}p^{\bar{\gamma}\setminus \beta}.
	\end{align*}
	This is precisely the bound on $\Abs{\widehat{\mu_*}(\alpha,\beta,\gamma)}$ from \pref{lem:fourierformula}, so $\mu_*$ satisfies the second property from \pref{def:plantedproperties}.
\end{proof}

We conclude this section by showing that the function $\epsilon(s_x,s_I)$ is rapidly decaying.
\begin{lemma}
\label{lem:csprapiddecay}
	Fix constants $\nu, \rho > 0$ and let $d_x \leq \rho (d_y - 2b)$ and $d_y \leq \left(n\Delta^{-\frac{2}{t-2}}\right)^{\frac{\nu}{\nu + \rho}}$. Fix $\delta = O(\log n)$ and $b	 \leq \rho d_y$ in \pref{def:plantedproperties}. 
Let $1 < \Delta < n^{\frac{t-2}{2} - \epsilon} $ and let $p = \frac{\Delta n}{\abs{\Enk}}$.
The function $\epsilon(s_x,s_I)$ given in \pref{prop:cspconditionalfourierdecay} is $b$-rapidly decaying. 
\end{lemma}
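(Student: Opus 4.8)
The plan is to verify, one at a time, the three conditions in \pref{def:rapidlydecaying} for the explicit function $\epsilon(s_x,s_I)=(C\Delta)^{s}(s/n)^{\frac{t-2}{2}s}(s/s_x)^{s_x/2}$, $s=\max(\ceil{s_x/k},s_I)$, supplied by \pref{prop:cspconditionalfourierdecay}. First I would record a clean uniform bound. Set $m:=n\Delta^{-\frac{2}{t-2}}$; the hypothesis $\Delta<n^{\frac{t-2}{2}-\epsilon}$ gives $m>n^{2\epsilon/(t-2)}\to\infty$ (and $m\le n$), and one has the identity $C\Delta(r/n)^{\frac{t-2}{2}}=C(r/m)^{\frac{t-2}{2}}$ for every $r>0$. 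Using the one–variable maximization $\max_{0<r\le a}(a/r)^{r/2}=e^{a/(2e)}$ to absorb the factor $(s/s_x)^{s_x/2}\le e^{s/(2e)}$ and the bound $s\le d_y$, I get a constant $C'=C'(k,t)$ and $\theta_0:=C'(d_y/m)^{\frac{t-2}{2}}$ with $\epsilon(s_x,s_I)\le\theta_0^{\,s}$ whenever $1\le s=\max(\ceil{s_x/k},s_I)\le d_y$; the constraint $d_y\le m^{\nu/(\nu+\rho)}$ then yields $\theta_0\le C'm^{-\frac{\rho(t-2)}{2(\nu+\rho)}}=o(1)$.

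Conditions~1 and~2 follow at once. For Condition~1, if $(s_x,s_I)\neq(0,0)$ then $s\ge1$, so $\epsilon(s_x,s_I)\le\theta_0^{s}\le\theta_0=o(1)$. For Condition~2, with $s_I=1$ and $1\le s_x\le d_x$ we have $s=\ceil{s_x/k}\ge s_x/k$, hence $\epsilon(s_x,1)\le\theta_0^{s_x/k}$ and $\sum_{s_x\ge1}\epsilon(s_x,1)^{c}\le\sum_{s_x\ge1}\theta_0^{cs_x/k}=\theta_0^{c/k}/(1-\theta_0^{c/k})=o(1)$ for every constant $c>0$.

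Condition~3 is the crux. I would fix $s_x\le d_x$ and $u$ in the relevant range $d_I-2b\le u\le d_I$ (this is exactly what occurs via $u=\max(s_I,d_I-2\abs{U})$ in \pref{lem:hfourierbounds}), write $s_0:=\max(\ceil{s_x/k},1)$ for the index defining $\epsilon(s_x,1)$, and note that since $d_x\le\rho(d_I-2b)\le u$ the index defining $\epsilon(s_x,u)$ is $u$ itself, so a direct computation gives
\[
\frac{\epsilon(s_x,u)}{\epsilon(s_x,1)}=\bigl(C\Delta(u/n)^{\frac{t-2}{2}}\bigr)^{u-s_0}(u/s_0)^{\frac{t-2}{2}s_0+\frac{s_x}{2}}.
\]
For the first factor I use $C\Delta(u/n)^{\frac{t-2}{2}}=C(u/m)^{\frac{t-2}{2}}\le C(d_y/m)^{\frac{t-2}{2}}<1$ together with $s_0\le\rho(d_I-2b)\le\rho u$, so $u-s_0\ge(1-\rho)u$, bounding it by $\exp(-\tfrac{(1-\rho)(t-2)}{2}u\ln(m/d_y))\le\exp(-\tfrac{(1-\rho)(t-2)\rho}{4(\nu+\rho)}u\ln m)$ for $n$ large (via $\ln(m/d_y)\ge\tfrac{\rho}{\nu+\rho}\ln m$). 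For the second factor I use $\tfrac{t-2}{2}s_0+\tfrac{s_x}{2}\le\tfrac{t-2+k}{2}s_0$ (from $s_x\le ks_0$) and again the one–variable maximization $(u/s_0)^{\frac{t-2+k}{2}s_0}\le\exp(\tfrac{t-2+k}{2e}u)$. Since $\ln m\to\infty$ the decay dominates, giving $\epsilon(s_x,u)/\epsilon(s_x,1)\le m^{-c_0u}$ for a constant $c_0>0$. Finally $b\le\rho d_y\le\tfrac{\rho}{1-2\rho}u$, so $2^{b}m^{-c_0u}\le(2^{\rho/(1-2\rho)}m^{-c_0})^{u}\le1$ for $n$ large; as $\epsilon(s_x,1)\le1$ this yields $2^{b}\epsilon(s_x,u)\le\epsilon(s_x,1)\le\epsilon(s_x,1)^{1-\nu}$, so Condition~3 holds (with any constant $\nu\in(0,1)$). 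The degenerate case $s_x=0$ runs identically with $s_0=1$, using that $d_x\ge1$ forces $u\ge d_I-2b\ge1/\rho$, hence $s_0=1\le\rho u$.

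The main obstacle is the combinatorial factor $(u/s_0)^{\frac{t-2}{2}s_0+\frac{s_x}{2}}$ in Condition~3: it can be as large as $e^{\Theta(u)}$ when $s_0$ is a constant fraction of $u$, so one cannot bound it by a power of $m$. The resolution is that it is only $e^{O(u)}$, while the decay factor $(C\Delta(u/n)^{\frac{t-2}{2}})^{u-s_0}$ is $e^{-\Theta(u\ln m)}$; the precise parameter budget ($d_y\le m^{\nu/(\nu+\rho)}$ and $d_x,b\le\rho d_y$) is exactly what forces $u-s_0=\Omega(u)$ and $\ln(m/d_y)=\Omega(\ln m)$, so the $\ln m\to\infty$ gain in the exponent absorbs both the $2^{b}$ factor and the $e^{O(u)}$ combinatorial loss.
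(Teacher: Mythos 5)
Your proof is correct and follows essentially the same route as the paper: bound the Fourier-decay function $\epsilon(s_x,s_I)$ by a power of a uniform quantity $\theta_0=C'(d_y/m)^{(t-2)/2}$ with $m=n\Delta^{-2/(t-2)}$, then for Condition~3 show that the multiplicative loss $2^{b}(u/s_0)^{\Theta(s_0)+s_x/2}=e^{O(u)}$ is dominated by the factor $\bigl(C(u/m)^{(t-2)/2}\bigr)^{u-s_0}=e^{-\Theta(u\ln m)}$, using the parameter budget $d_y\le m^{\nu/(\nu+\rho)}$ to ensure $\ln(m/d_y)=\Omega(\ln m)$.

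Your write-up is noticeably more careful than the paper's in a few places worth flagging. The paper justifies Conditions~1--2 with the one-liner ``$\epsilon(s_x,s_I)=n^{-\epsilon's_x}$,'' which does not literally cover the case $s_x=0$, $s_I\ge1$; your uniform bound $\epsilon(s_x,s_I)\le\theta_0^{s}$ handles all cases at once. The paper's chain for Condition~3 implicitly assumes $s_x\ge1$ (it divides by $s_x$) and, in passing from its second to its third display line, appears to drop the factor $\tfrac{t-2}{2}$ from the exponent on $(u/n)$; your explicit ratio $\epsilon(s_x,u)/\epsilon(s_x,1)$ and the one-variable optimization $(a/r)^{cr}\le e^{ca/e}$ avoid both issues and also cover $s_x=0$ by setting $s_0=1$. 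So: same approach, cleaner bookkeeping, and you correctly isolate the conceptual point (final paragraph) that the combinatorial factor is only $e^{O(u)}$ while the decay is $e^{-\Omega(u\ln m)}$, which is exactly why the constraint $d_y\le m^{\nu/(\nu+\rho)}$ is the right one.
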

\begin{proof}
	Observe that by the definition of $\epsilon(s_x,s_I)$ and our choice of $\Delta$ we have $\epsilon(s_x,s_I) = n^{-\epsilon' s_x}$ for some constant $\epsilon' > 0$.
	This implies that $\epsilon(s_x,s_I)$ satisfies the first two properties from \pref{def:rapidlydecaying}.	
	
	For the third property, note that if $d_y - 2b < u \leq d_y$ and $s_x < d_x \leq \rho d_y$ we have
	\begin{align*}
		2^b\epsilon(s_x,u) &= 2^b (C\Delta)^{u}\left(\frac{u}{n}\right)^{\frac{t-2}{2}u}\left(\frac{u}{s_x}\right)^{\frac{s_x}{2}}\\
			&\leq 2^{\rho u} (C\Delta)^{u}\left(\frac{u}{n}\right)^{\frac{t-2}{2}u}\left(\frac{u}{s_x}\right)^{\frac{\rho u}{2}}\\
			&\leq \left(\frac{(C\Delta)^{\frac{2}{t-2}}u^{1 + \rho}}{n}\right)^{u}\\
			&\leq \left(\left(\frac{(C\Delta)^{\frac{2}{t-2}}u}{n}\right)^{1-\nu} \cdot \frac{(C\Delta)^{\nu\frac{2}{t-2}}u^{\nu + \rho}}{n^\nu}\right)^{u}\\
			&\leq \left(\frac{(C\Delta)^{\frac{2}{t-2}}u}{n}\right)^{u(1-\nu)}\\
			&\leq \epsilon(s_x,1)^{1 - \nu}
	\end{align*}
	where the penultimate inequality follows from the fact that $u \leq d_y \leq \left(n\Delta^{-\frac{2}{t-2}}\right)^{\frac{\nu}{\nu + \rho}}$.
\end{proof}

\section{Conjunctive Blockwise-Dense Decompositions}
\label{sec:cbddecomposition}
In this section we show that any distribution can be decomposed into a convex combination of distributions, each of which is conjunctive blockwise dense, along with an error set which is small in an appropriate sense. The proof of this fact is inspired by that in \cite{KothariMR17}.
The main difference is that the error set in their decomposition has small measure under the distribution which was decomposed, whereas in our case the error set only has small measure under the background distribution $\D(p)$. This means our error set may actually contain all the probability mass of the decomposed measure.
However, this allows us to decompose \emph{any} distribution in this way, and is actually necessary for our application.
\begin{lemma}
	\label{lem:cbddecomp}
Let $\D$ be a probability distribution supported on instances $I = (y,b) \in \Enk \times (\Enk \times \{-1,1\}^k)$. Then there is a partition of $\Enk \times (\Enk \times \{-1,1\}^k)$ into subsets $A_1, \cdots A_l, B, C$ such that
\begin{enumerate}
	\item For each $i$ the distribution $\D\rvert_{A_i}$ is $\frac{2}{\delta}t$-CBD with parameter $\delta$.
	\item $\Pr_{\D(p)}\Brac{B} \leq n^{k+1}\left(\frac{p}{2^k}\right)^t$.
	\item $\Pr_{\D}\Brac{C} \leq O(\exp(-n))$.
\end{enumerate}
\end{lemma}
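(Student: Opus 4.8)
The plan is to run the greedy conjunctive-blockwise-dense decomposition of \cite{GoosLMWZ16,KothariMR17}, interleaved with a new \emph{truncation} step that repeatedly discards the instances carrying too much mass relative to the background measure $\D(p)$. The truncation is what lets us decompose an \emph{arbitrary} $\D$ --- with $B$ possibly carrying all of $\D$'s mass --- while keeping $B$ light under $\D(p)$ and keeping the fixed blocks below the hard size cap $\frac{2}{\delta}t$.

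\emph{Core-extraction lemma.} The first ingredient is: if a distribution $\D'$ on instances, supported on $\{I : I_U = I^*_U\}$, is not blockwise-dense relative to $\D(p)$ with parameter $\delta$, then there are scopes $W \subseteq \Enk \setminus U$ with $\abs{W} = O(1/\delta)$ --- which may be taken to be \emph{present}-constraint coordinates (those on which $y_S = -1$ is fixed) --- and a value $I'_W$ with $\Pr_{\D'}\Brac{I_W = I'_W} \ge \Pr_{\D(p)}\Brac{I_W = I'_W}^{1-\delta/2}$, so in particular $\Pr_{\D(p)}\Brac{I_W = I'_W} \le (p/2^k)^{\abs{W}}$. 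This is the standard argument: start from a witness $(V,I'_V)$, rewrite $\log\bigl(\Pr_{\D'}\Brac{I_V=I'_V}/\Pr_{\D(p)}\Brac{I_V=I'_V}\bigr) > \delta\sum_{S\in V}\log\bigl(1/\Pr_{\D(p)}\Brac{I_S=I'_S}\bigr)$, greedily order the scopes of $V$ to maximize at each step the ratio of $\log$-density gain to $\log$ background cost, and extract a good prefix $W$ of length $O(1/\delta)$ by pigeonhole; since fixing a present constraint is far more expensive under $\D(p)$ than fixing an absent one, the violation is carried by the present coordinates, so $W$ can be chosen among them.

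\emph{The recursion.} I would then build a binary tree whose nodes $v$ record a conditioning event $E_v$ (a block equality $I_{U_v}=I^*_{U_v}$ together with finitely many inequalities $I_W\neq I'_W$), the distribution $\D_v \defeq \D\rvert_{E_v}$ (supported on $\{I:I_{U_v}=I^*_{U_v}\}$), and the block $U_v$ with $\abs{U_v}\le\frac{2}{\delta}t$. At $v$: first \emph{truncate} --- move into $C$ the heavy instances $\{I\in E_v : \D_v(I) > \Pr_{\D(p)}\Brac{I\mid E_v}^{1-\delta}\}$, but only up to an $\exp(-\Omega(n^k))$ fraction of $\D_v$'s mass, since if more than that is concentrated on heavy instances then $\D_v$ is essentially an atom and is pushed straight down to a bad leaf --- and renormalize. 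If the truncated $\D_v$ is now blockwise-dense on $\Enk\setminus U_v$, declare $v$ a good leaf: its (renormalized) support is an $A_i$, and $\D\rvert_{A_i}=\D_v$ is $\abs{U_v}$-CBD, hence $\frac{2}{\delta}t$-CBD, with fixed block $U_v$. If $\abs{U_v}=\frac{2}{\delta}t$, declare $v$ a bad leaf and put $E_v$ into $B$. Otherwise apply the core-extraction lemma to get $(W,I'_W)$ and recurse on $v_{\mathrm{in}}$ (event $E_v\wedge(I_W=I'_W)$, block $U_v\cup W$) and $v_{\mathrm{out}}$ (event $E_v\wedge(I_W\neq I'_W)$, block $U_v$). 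The recursion halts because $\mathrm{in}$-edges increase $\abs{U_v}$ (capped at $\frac{2}{\delta}t$), while $\mathrm{out}$-edges strictly shrink the finite set of positive-$\D$-mass instances (the extracted core has $\Pr_{\D_v}\Brac{I_W=I'_W}>0$); distinct leaves have disjoint events, so $A_1,\dots,A_l,B,C$ partition the instance space, giving property~(1).

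\emph{The error bounds, and the obstacle.} For property~(2): the core-extraction lemma fixes only present coordinates, so a bad leaf's block $U_v$ contains at least $t$ present coordinates and $\Pr_{\D(p)}\Brac{I_{U_v}=I^*_{U_v}}\le(p/2^k)^t$; having stripped the concentrated mass, the surviving failures of blockwise-density --- hence the bad-leaf blocks --- are confined to a family of at most $n^{k+1}$ such sets, so summing $\Pr_{\D(p)}\Brac{E_v}$ over the disjoint bad leaves gives $\Pr_{\D(p)}\Brac{B}\le n^{k+1}(p/2^k)^t$. For property~(3): every instance has $\Pr_{\D(p)}\Brac{I}\le\exp(-\Omega(n^k))$ (a product of $\abs{\Enk}$ marginals, using $\Delta\ge1$), so the per-node capped truncation removes at most an $\exp(-\Omega(n^k))$ fraction of $\D_v$, and summing these contributions over the tree leaves $\Pr_{\D}\Brac{C}\le O(\exp(-n))$. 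The delicate point --- and where this goes beyond \cite{KothariMR17} --- is calibrating the truncation so that it simultaneously (i) removes enough mass that bad-leaf blocks are forced into the small-background, $n^{k+1}$-size family needed for~(2), (ii) removes little enough that $C$ stays $\exp(-n)$-small under $\D$ for~(3), and (iii) never drives $\abs{U_v}$ past $\frac{2}{\delta}t$; the present-coordinate strengthening of the core-extraction lemma is precisely what makes these compatible.
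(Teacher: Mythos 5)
Your decomposition is recognizably in the same family as the paper's (a recursive branching on witnesses interleaved with a truncation step to cope with arbitrary $\D$), but two of your specific moves diverge from the paper in ways that open genuine gaps. First, the paper does not need a core-extraction lemma. Its algorithm is a \emph{linear} recursion: at each stage it picks a \emph{maximal} violating pair $(U,I^*_U)$, so $A_0 = A' \cap \{I : I_U = I^*_U\}$ is conjunctive blockwise-dense simply by maximality with no further iteration, and only the complement $A_1$ is recursed on; the bound $\abs{U}\le \frac{2}{\delta}t$ is derived \emph{a posteriori} from the truncation guarantee together with $\Pr_{\D(p)}[A'] \ge 2^{-kt}p^t$, not from a structural witness lemma. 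Your core-extraction lemma --- a small $W$ of present coordinates carrying the violation --- is asserted rather than proven, and it is false as a standalone statement: a point mass on an all-absent instance violates blockwise-density on every nonempty $V$ but has no present witness. It could only be rescued by the truncation having already removed the concentrated mass, which is exactly where the second issue arises.

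Second, and this is the central gap: the paper routes the \emph{truncated heavy atoms into $B$}, not $C$. Each removed atom is heavy precisely because its $\D(p)\rvert_A$-mass is at most $(p/2^k)^t$ times its $\D\rvert_A$-mass, so the removed set is automatically tiny under $\D(p)$; truncation runs uncapped until either nothing heavy remains or $\Pr_{\D}[A']\le \exp(-n)$, and it is the surviving low-$\D$-mass remainder --- not the removed atoms --- that becomes $C$. You instead cap the truncation and send the removed mass to $C$, which forces the ``pushed straight down to a bad leaf'' device whenever the cap bites. Those extra bad leaves sit on events $E_v$ whose fixed blocks $U_v$ may be far from the $\frac{2}{\delta}t$ cap, so $\Pr_{\D(p)}[E_v]$ need not be small, and the target bound $\Pr_{\D(p)}[B]\le n^{k+1}(p/2^k)^t$ is not established for them; the ``$n^{k+1}$ sets'' count is likewise asserted without an argument. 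Reassigning the truncated atoms to $B$ and dropping the cap, as the paper does, removes both problems at once, and also makes the maximal-$U$ step sound (after uncapped truncation every remaining atom has bounded $\D/\D(p)$ ratio, which is what the $\abs{U}$ bound actually uses).
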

\begin{proof}
	The proof follows from the analysis of a greedy algorithm which constructs the desired partition along with the two error sets.
	\begin{algorithm}
		Let $\D$ be as in the statement of the lemma. We define a recursive function which takes as input a set of instances $A$ and builds up a partition as described in the lemma.
		Initially let $B = C = \emptyset$. 
		\paragraph{Decompose($A$)}
		\begin{enumerate}
			\item If $\D\rvert_A$ is blockwise-dense, then add $A$ to the partition and end the recursion.
			\item If $\Pr_{\D(p)}\Brac{A} \leq 2^{-kt}p^t$, then assign $B \gets A\cup B$, and end the recursion.
			\item Set $(A',B') \gets $\textbf{Truncate($A$)}, then assign $B \gets B' \cup B$. If $\Pr_{\D}(A') \leq \exp(-n)$ set $C \gets A' \cup C$ and end the recursion.
			\item Choose $U \subseteq \Enk$ to be a maximal set such that there exist $I^*_U$ with
			\[
				\Pr_{\D\rvert_{A'}}\Brac{I_U = I^*_U} > \Pr_{\D(p)}\Brac{I_U = I^*_U}^{1-\delta}
			\]
			\item Assign $A_0 \gets A' \cap \{I \mid I_U = I^*_U\}$.
			\item Assign $A_1 \gets A' \cap \{I \mid I_U \neq I^*_U\}$.
			\item Add $A_0$ to the partition and call \textbf{Decompose($A_1$)}.
		\end{enumerate}
	\end{algorithm}

	The algorithm calls the following truncation subroutine, which iteratively removes instances from a set $A$ so as to truncate those which appear with much higher probability under $\D$ than under $\D(p)$.
	\begin{algorithm}
		This subroutine takes as input a set of instances $A$ and constructs a partition of $A$ into two sets $A',B'$.
		\paragraph{Truncate($A$)}
		\begin{enumerate}
			\item If $\Pr_{\D}\Brac{A} \leq \exp(-n)$ set $A' \gets A$ and terminate.
			\item If for all $I' \in A$ we have $\Pr_{\D\rvert_A}\Brac{I = I'} < 2^{kt}p^{-t}\Pr_{\D(p)\rvert_A}\Brac{I = I'}$, set $A' \gets A$ and terminate.
			\item Else choose $I^*$ which maximizes $\frac{\Pr_{\D\rvert_A}\Brac{I = I^*}}{\Pr_{\D(p)\rvert_A}\Brac{I = I^*}}$ and set $B' \gets B' \cup \{I^*\}$.
			\item Call \textbf{Truncate($A\setminus I^*$)}.
		\end{enumerate}
	\end{algorithm}
	We prove the lemma through a series of claims.
	\begin{claim}In any execution of \textbf{Decompose} that makes it to the end of the function we must have $\D\rvert_{A_0}$ is conjunctive blockwise-dense with fixed block $U$.
	\end{claim}
	\begin{proof}
		To see why let $V \subseteq \Enk\setminus U$ and suppose that there exists $I'_V$ such that $\Pr_{\D\rvert_{A_0}}\Brac{I_V = I'_V} > \Pr_{\D(p)}\Brac{I_V = I'_V}^{1-\delta}$. Then
		\begin{align*}
			\Pr_{\D\rvert_A}\Brac{I_{U \cup V} = (I^*_U,I'_U)}
				&\geq \Pr_{\D(p)}\Brac{I_U = I^*_U}^{1-\delta} \cdot \Pr_{\D(p)}\Brac{I_V = I'_V}^{1-\delta}\\
				&= \Pr_{\D(p)}\Brac{I_{U \cup V} = (I^*_U, I'_U)}^{1-\delta}
		\end{align*}
	which contradicts the maximality of $U$.
	\end{proof}
	Next we show that \textbf{Truncate} does not remove too much probability mass from $\D(p)$.
	\begin{claim}
		\label{claim:Bsmall}
		If $(A',B') =$ \textbf{Truncate($A$)}, then $\Pr_{\D(p)}\Brac{A'} \geq \left(1 - n\left(\frac{p}{2^k}\right)^t\right)\Pr_{\D(p)}\Brac{A}$.
	\end{claim}
	\begin{proof}
		Let $A_i$ denote the input to the $i$-th recursive call to \textbf{Truncate}.
		Now note that in each recursive call if $\Pr_{\D(p)}\Brac{A_i}$ decreases by a factor of $(1 - \eta_i)$ after removing $I^*$,
		then $\Pr_{\D}\Brac{A_i}$ decreases by a factor of $(1 - \eta_i\left(\frac{p}{2^k}\right)^{-t})$. Thus if after the $j$-th recursive call we have $\sum_{i = 1}^j \eta_i = n\left(\frac{p}{2^k}\right)^t$ then,
		\[
			\Pr_{\D(p)}\Brac{A_i} = \exp\left(-n\left(\frac{p}{2^k}\right)^t\right)\Pr_{\D(p)}\Brac{A_0} \geq \left(1 - n\left(\frac{p}{2^k}\right)^t\right)\Pr_{\D(p)}\Brac{A_0}
		\]
		while on the other hand
		\[
			\Pr_{\D}\Brac{A_i} = \exp(-n)\Pr_{\D}\Brac{A_0} \leq \exp(-n).
		\]
		Since the subroutine terminates once $\Pr_{\D}\Brac{A_i} \leq \exp(-n)$, the claim is proven.
	\end{proof}
	Next we show that the fixed block $U$ is not too large.
	\begin{claim} $\abs{U} \leq \frac{2}{\delta}t$.\end{claim}
	\begin{proof}
	If the call to \textbf{Truncate} does not return a set $A'$ with $\Pr_{\D}\Brac{A'} \leq \exp(-n)$, then we must have for all $I'$
	\[
		\Pr_{\D\rvert_{A'}}\Brac{I = I'} < 2^{kt}p^{-t}\Pr_{\D(p\rvert_{A'})}\Brac{I = I'}.
	\]
	Thus, for any restricted instance $I'_U$ after the truncation step
	\begin{align*}
		\Pr_{\D\rvert_{A'}}\Brac{I_U = I'_U}
			&\leq 2^{kt}p^{-t}\Pr_{\D(p)\rvert_{A'}}\Brac{I_U = I'_U}.
	\end{align*}
	Thus, we have that for the instance $I^*_U$,
	\[
		\Pr_{\D(p)}\Brac{I_U = I^*_U}^{1-\delta}\leq 2^{kt}p^{-t}\Pr_{\D(p)\rvert_{A'}}\Brac{I_U = I^*_U} = 2^{kt}p^{-t}\frac{\Pr_{\D(p)\rvert_{A'}}\Brac{I_U = I^*_U}}{\Pr_{\D(p)}\Brac{A'}}.
	\]
	Since we have passed the second step in the algorithm, $\Pr_{\D(p)}\Brac{A'} \geq 2^{-kt}p^t$. Using this fact and rearranging terms gives
	\[
		2^{2kt}p^{-2t} \geq \Pr_{\D(p)}\Brac{I_U = I^*_U}^{-\delta} \geq 2^{\delta k\abs{U}}p^{-\delta\abs{U}},
	\]
	where the last inequality comes from the definition of $\D(p)$. Thus $\abs{U} \leq \frac{2}{\delta}t$.
	\end{proof}
	To wrap up the proof of the lemma, note that \textbf{Decompose} is called recursively at most $n^k$ times, because we fix at least one constraint scope in every call that does not terminate. Thus, by \prettyref{claim:Bsmall} the total probability mass added to $B$ is at most $n^{k+1}\left(\frac{p}{2^k}\right)^t$.
\end{proof}

\section{Proof of LP Lower Bounds}
\label{sec:lbproof}
In this section we prove lower bounds for LP formulations of CSPs. We proceed with the approach based on pseudo-calibration as introduced in \prettyref{sec:proofoverview}. In particular, assume there exists an LP formulation for a CSP which certifies the upper bound $c$ on a subset of instances $A \subset \{-1,1\}^{\abs{\Enk}} \times \{-1,1\}^{\abs{\Enk}k}$ of measure $s$ under $\D(p)$. We will identify a subset $B \subseteq A$ of instances and some $\lambda > 0$ such that both
\begin{equation}
\label{eqn:lhs}
	\Ex[\{-1,1\}^n\times\D(p)]{\Ind((y,b) \in B)\bar{\mu}_*(x,y,b)(c - F(x,y,b))} < -\lambda
\end{equation}
and
\begin{equation}
\label{eqn:rhs}
	\Ex[\{-1,1\}^n\times\D(p)]{\Ind((y,b) \in B)\bar{\mu}_*(x,y,b) \sum_{i=0}^{R} p_i(y,b)q_i(x)} \geq -\lambda.
\end{equation}
This will then contradict the fact that the LP formulation certifies the upper bound $c$ on the instances in $A$.

\paragraph{Setting Parameters}For the rest of this section let $\nu_x > \nu_y > 0$ be small constants which will be chosen later to be sufficiently small. Let $d = (d_x,d_y)$ where
\[
	d_x = \left(\frac{Cn^{t-2}}{\Delta^2}\right)^{\frac{(1 - \nu_x)}{k}} \qquad d_y = \left(\frac{Cn^{t-2}}{\Delta^2}\right)^{\frac{(1 - \nu_y)}{k}}
\]
Let $\bar{\mu}_*$ be the $d$-pseudo-calibration of $\mu_*$.
Additionally, set $p = \frac{\Delta n}{\abs{\Enk}}$ where $1 < \Delta < n^{\frac{t-2}{2} - \eps}$.

The first step will be to prove that $\Ex[x]{\bar{\mu}_*(x,y,b)(c - F(x,y,b))}$ is negative with high probability over instances $I = (y,b)$.
\begin{lemma}
	\label{lem:lhsnegative}
	Let $c = (1 - \eta)\Delta n$ and let $m = m(y,b)$ be the number of constraints in the instance $(y,b)$. 
	Let $E$ be the event that both:
	\begin{enumerate}
		\item $\abs{m(y,b) - \Delta n} \leq \frac{\eta}{2}\Delta n$
		\item $\Ex[x]{\bar{\mu}_*(x,y,b)(c - F(x,y,b))} \leq - \frac{\eta}{2}\Delta n$.
	\end{enumerate}
	Then
	\[
		\Pr_{(y,b) \sim \D(p)}\Brac{E} = 1 - o(1).
	\]
\end{lemma}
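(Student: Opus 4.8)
Write $Z(y,b)\defeq\Ex[x]{\bar\mu_*(x,y,b)(c-F(x,y,b))}$ and $g_\emptyset(y,b)\defeq\Ex[x]{\bar\mu_*(x,y,b)}$, and set $\bar P\defeq 1-P$. Using $F(x,y,b)=\sum_S\Ind(y_S=-1)P(b_S\circ x_S)$ and expanding in the $x$-Fourier basis, one has the identity $Z(y,b)=\bigl(c-m(y,b)\bigr)\,g_\emptyset(y,b)+V(y,b)$, where $V(y,b)\defeq\sum_{S:y_S=-1}\Ex[x]{\bar\mu_*(x,y,b)\,\bar P(b_S\circ x_S)}$. The plan is to show that $m$, $g_\emptyset$ and $V$ each concentrate, and that on the intersection of these events both conditions defining $E$ hold; a union bound over the three failure events then finishes.

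The mean computation is where we use that $F$ is low-degree. Every summand $\Ind(y_S=-1)P(b_S\circ x_S)$ of $F$ depends on $x$ only through the $k$ coordinates $x_S$ and on the instance only through the single scope $S$, so $c-F$ and $\sum_S\Ind(y_S=-1)\bar P(b_S\circ x_S)$ have degree $\le k\le d_x$ in $x$ and $\le 1\le d_I$ in the $\zeta_\sigma$ basis. Since $\bar\mu_*=L_d\mu_*$ and $L_d$ is the self-adjoint projection onto these low-degree functions, $\Ex[\{-1,1\}^n\times\D(p)]{V}=\Ex[(x,I)\sim\D_*]{\sum_S\Ind(y_S=-1)\bar P(b_S\circ x_S)}=\Ex[\D_*]{m-F}=0$ (under $\D_*$ the planted assignment satisfies every included constraint, so $F=m$), and likewise $\Ex{g_\emptyset}=\Ex[\D_*]{\chi_\emptyset(x)}=1$.

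For concentration, \pref{lem:constraintconcentration} gives $\abs{m-\Delta n}\le n^{-1/4}\Delta n$ with probability $1-o(1)$, and for $g_\emptyset$ and $V$ I would apply Chebyshev, so it suffices to bound their variances. Writing $g_\alpha(y,b)\defeq\Ex[x]{\bar\mu_*(x,y,b)\chi_\alpha(x)}=\sum_{\abs\sigma\le d_I}\widehat{\mu_*}(\alpha,\sigma)\zeta_\sigma(y,b)$ for the $x$-Fourier slices of $\bar\mu_*$, each slice has constant term $\widehat{\mu_*}(\alpha,\emptyset)=\Ex[\D_*]{\chi_\alpha(x)}$, equal to $1$ for $\alpha=\emptyset$ and $0$ otherwise. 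The coefficient bounds of \pref{lem:fourierformula} combined with the counting estimates \pref{lem:fouriercount}/\pref{lem:levell} give $\Var_{(y,b)}\Brac{g_\emptyset}=\sum_{\sigma\ne\emptyset}\widehat{\mu_*}(\emptyset,\sigma)^2=O(\Delta^2 n^{-(t-2)})$ and $\norm{g_\alpha}_2^2=O(\Delta\,n^{-(t-2)/2-1/2})$ for $1\le\abs\alpha\le k$, both of which are $n^{-\Omega(1)}$ precisely because $\Delta<n^{\frac{t-2}{2}-\eps}$; hence $\abs{g_\emptyset-1}=n^{-\Omega(1)}$ with probability $1-o(1)$. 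For $V$, expanding $\bar P(b_S\circ x_S)=\sum_{T\subseteq[k]}\widehat{\bar P}(T)\prod_{j\in T}b_{S_j}x_{S_j}$ one gets $V=\sum_{S:y_S=-1}\sum_{T\subseteq[k]}\widehat{\bar P}(T)\psi_{\gamma_{S,T}}(b)\,g_{\alpha_{S,T}}(y,b)$ with $\gamma_{S,T}=\{(S,j):j\in T\}$ and $\alpha_{S,T}=\{S_j:j\in T\}$; a second-moment computation---writing $\Ex{V^2}$ as a sum over ordered pairs of scopes and estimating the resulting inner products via the $p^{\abs{\bar\gamma}}$-weighting of \pref{lem:fourierformula} together with \pref{lem:fouriercount}/\pref{lem:levell}---yields $\Var_{(y,b)}\Brac{V}=(\Delta n)^2\,n^{-\Omega(1)}$, again using $\Delta<n^{\frac{t-2}{2}-\eps}$, so $\abs V=n^{-\Omega(1)}\Delta n$ with probability $1-o(1)$. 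On the intersection of the three events, condition $(1)$ holds since $n^{-1/4}\Delta n\le\frac\eta2\Delta n$ for large $n$, and $Z=(c-m)g_\emptyset+V=-\eta\Delta n\pm n^{-\Omega(1)}\Delta n\le-\frac\eta2\Delta n$ gives condition $(2)$.

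I expect the main obstacle to be the variance estimate $\Var_{(y,b)}\Brac{V}=(\Delta n)^2 n^{-\Omega(1)}$: the scope-sum has $\approx\Delta n$ active terms, so a triangle inequality over scopes loses a factor of roughly $n^{\Theta(k)}$ and is useless. The fix is to observe that the $\zeta$-supports of the slices $g_{\alpha_{S,T}}$ attached to distinct scopes $S\ne S'$ are nearly orthogonal---a nonzero cross-term forces $S$ and $S'$ to jointly appear in a derivation $\gamma\vdash\alpha$ with $r(\gamma)\ge t$---and to sum the surviving overlaps via the structural counting of \pref{lem:fouriercount}/\pref{lem:levell}, with $\Delta<n^{\frac{t-2}{2}-\eps}$ providing the polynomial-in-$n$ gap needed to beat $(\Delta n)^2$. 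The remaining ingredients---the mean computation above and the Chernoff/Chebyshev/union-bound bookkeeping---are routine.
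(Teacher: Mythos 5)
Your proposal is correct in substance and takes a genuinely different decomposition from the paper's. The paper works directly with $G(y,b)=\Ex[x]{\bar\mu_*(x,y,b)F(x,y,b)}$, writes $G=\sum_S\Ind(y_S=-1)\mu_S$ with $\mu_S=\Ex[x]{P(b_S\circ x_S)\bar\mu_*}$, shows the per-scope conditional second moment $\Ex{\mu_S^2\mid y_S=-1}\le 1+O(n^{-\eps})$ and the cross terms $\Ex{\mu_S\mu_{S'}\mid y_S=y_{S'}=-1}\le 1+O(n^{-\eps})$ via Cauchy--Schwarz, then applies Chebyshev to $G$. You instead split $Z=(c-m)g_\emptyset+V$ and control $m$, $g_\emptyset$, $V$ separately; this is arguably cleaner, because it makes explicit the step where $g_\emptyset=\Ex[x]{\bar\mu_*}\approx 1$ is needed (the paper's phrase ``for any constant $\eta$ this implies $\ldots\le-\tfrac{\eta}{2}\Delta n$'' silently uses this, since $Z=c\cdot g_\emptyset - G$), and it isolates the Poisson noise in $m$ from the pseudocalibration noise. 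Where you go slightly off-track is in diagnosing the $\Var(V)$ step. Your worry that ``a triangle inequality over scopes loses $n^{\Theta(k)}$'' is not right once you keep the indicators: $\Ex{V^2}=\sum_{S,S'}\Pr[y_S=y_{S'}=-1]\,\Ex{\nu_S\nu_{S'}\mid y_S=y_{S'}=-1}$ with $\nu_S:=\Ex[x]{\bar\mu_*\bar P(b_S\circ x_S)}$, and the prefactor is $p^2\abs{\Enk}^2=(\Delta n)^2$, not $n^{2k}$. So there is no need for the ``near-orthogonality of $\zeta$-supports'' mechanism you propose; the paper's per-scope Cauchy--Schwarz applies verbatim, with the one simplification that $\nu_S=g_\emptyset-\mu_S$ has \emph{zero} constant term after conditioning on $y_S=-1$ (because $\Ex{\nu_S\Ind(y_S=-1)}=\Ex[\D_*]{\Ind(y_S=-1)\bar P(b_S\circ x_S)}=0$), so $\Ex{\nu_S^2\mid y_S=-1}=O(\Delta n^{-(t-2)/2})$ directly from \pref{lem:l2norm}, and $\Var(V)\le(\Delta n)^2\cdot O(n^{-\eps})$ follows by Cauchy--Schwarz. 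The rest of your plan (the mean computations via the low-degree pseudocalibration property, $\Var(g_\emptyset)$ from \pref{lem:l2norm}, and the union bound with \pref{lem:constraintconcentration}) is correct and matches the paper's ingredients.
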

\begin{proof}
	The objective function $F$ has degree at most $k$ in $x$ and $b$ and degree one in $y$. Thus
	\[
		\Ex[\{-1,1\}^n\times\D(p)]{\bar{\mu}_*(x,y,b)(c - F(x,y,b))}
			= \Ex[\D_*]{c - F(x,y,b)} = -\eta\Delta n.
	\]
	Next let $G(y,b) = \Ex[x]{\bar{\mu}_*(x,y,b)F(x,y,b))}$. By the previous equation $\Ex[y,b]{G(y,b)} = \Delta n$. We next prove that $G$ concentrates around its mean.

	For a scope $S$ let $\mu_S(y,b) = \Ex[x]{P(b_S \circ x_S)\bar{\mu}_*(x,y,b)}$.
	Therefore
	\[
		G(y,b) = \sum_S \Ind(y_S = -1)\mu_S(y,b).
	\]
	Next we compute the Fourier expansion of $\mu_S$. For $T \subset [k]$ we will use the notation $\alpha_T = \{S_j \mid j \in T\}$.
	\begin{align*}
		\mu_S(x,y,b)
			&= \Ex[x]{\left(\sum_{T \subseteq [k]}\widehat{P}(T)\prod_{j\in T}x_{S_j} b_{(S,j)}\right)\bar{\mu}_*}\\
			&= \sum_{T \subseteq [k]}\widehat{P}(T)\prod_{j\in T}b_{(S,j)} \sum_{\substack{\gamma \vdash \alpha_T\\ \beta \subseteq \bar{\gamma}}} \widehat{\mu}(\alpha_T,\beta,\gamma)\phi_\beta(y)\psi_\gamma(b)\\
			&= \sum_{T \subseteq [k]}\widehat{P}(T) \sum_{\substack{\gamma \vdash \alpha_T\\ \beta \subseteq \bar{\gamma}}} \widehat{\mu}(\alpha_T,\beta,\gamma)\phi_\beta(y)\psi_{\gamma\triangle \{S\} \times T}(b)
	\end{align*}
	where the second sum runs over $\gamma$ with $\abs{\bar{\gamma}}\leq d_y$.
	For each fixed $T$ in the above expression, the only terms with $\abs{\bar{\gamma}} \leq 2$ occur for pairs $\beta,\gamma$ where $\gamma = \{S\}\times T$ and $\beta = \{S\}$ or $\beta = \emptyset$.
	Further note that all the basis functions $\phi_\beta(y)\psi_{\gamma\triangle \{S\} \times T}(b)$ are distinct except for the cases when $\gamma = \{S\}\times T$ and $\beta = \emptyset$.
	These cases correspond exactly to the constant term in the Fourier expansion of $\mu_S$ since $\gamma \triangle \{S\}\times T = \emptyset$ when $\gamma = \{S\}\times T$.
	
	Finally, if we fix $y_S = -1$, all the terms with $\gamma = \{S\}\times T$ and $\beta = \{S\}$ will be fixed to constants.
	Other basis functions may add in pairs, but this will at most double the magnitude of the Fourier coefficients by the second property from \pref{def:plantedproperties}.
	Thus, after fixing $y_S = -1$, the non-constant part of $\mu_S$ has Fourier coefficients bounded in magnitude by $2\Abs{\widehat{P}(T)\widehat{\mu}(\alpha_T,\beta,\gamma)}$. As for the constant term, note that both
	\[
		\Ex[y,b]{\mu_S \Ind(y_S = -1)} = \Ex[y,b]{\mu_S \mid y_S = -1} p
	\]
	and
	\[
		\Ex[y,b]{\mu_S \Ind(y_S = -1)} = \Ex[\D_*]{\Ind(y_S = -1)P(b_S \circ x_S)} = p.
	\]
	Dividing the first equation by the second then implies that the constant term after fixing $y_S = -1$ is $\Ex[y,b]{\mu_S \mid y_S = -1} = 1$.
	
	Since $P$ is $\{0,1\}$-valued, $\Abs{\widehat{P}(T)} \leq 1$ for all $T$. There are at most $2^k$ subsets $T \subseteq [k]$, so by \pref{lem:l2norm}
	\[
		\Ex[y,b]{\mu_S^2 \mid y_S = -1} \leq 1 + O\left(\Delta n^{-\frac{t - 2}{2}}\right) \leq 1 + O(n^{-\eps}).
	\]

	By a similar argument, we have that if we fix a pair $y_S,y_S'$ the non-constant part of $\mu_S$ has Fourier coefficients bounded by $4\Abs{\widehat{P}(T)\widehat{\mu}(\alpha_T,\beta,\gamma)}$ and the constant term is again equal to one. Therefore
	\[
		\Ex[y,b]{\mu_S \mu_{S'} \mid y_S = y_{S'} = -1}
			\leq \sqrt{\Ex{\mu_S^2 \mid y_S = y_{S'} = -1}\Ex{\mu_{S'}^2 \mid y_S = y_{S'} = -1}} \leq 1 + O(n^{-\eps}).
	\]
	Thus we conclude that
	\begin{align*}
		\Ex[y,b]{G^2}
			&= \sum_{S,S'}\Ex{\Ind(y_S = y_S' = -1)\mu_S \mu_{S'} \mid y_S = y_{S'} = -1}p^2\\
			&\leq p\Abs{\Enk}(1 + O(n^{-\epsilon})) + p^2\Abs{\Enk}^2(1 + O(n^{-\eps})) \leq (\Delta n)^2 (1 + n^{-\eps'})
	\end{align*}
	for some constant $\eps' > 0$.
	
	Chebyshev's inequality then implies that $\Pr[\Abs{G - \Delta n} \geq n^{-\frac{\eps'}{4}}\Delta n] \leq n^{-\frac{\eps'}{2}}$. For any constant $\eta$ this implies that $\Ex[x]{\bar{\mu}_*(x,y,b)(c - F(x,y,b))} \leq - \frac{\eta}{2}\Delta n$.

	Lastly, by \prettyref{lem:constraintconcentration} $\abs{m(y,b) - \Delta n} \leq \frac{\eta}{2}\Delta n$ with probability at least $1 - 2\exp\left(\Omega(\eta^2\Delta n)\right) = 1 - o(1)$.
\end{proof}
In the next lemma we plug in our parameter settings to compute a bound on the probability that $\bar{\mu}_*$ is non-negative after averaging over a CBD distribution.
\begin{lemma}
	\label{lem:nonnegwhp}
	For $\rho > 0$ let $b \leq \min(\rho d_y , k^{-1} d_x)$. Let $\D$ be a $b$-CBD distribution with parameter $\delta \leq O\left(\frac{1}{\log n}\right)$ and fixed block $U$. Let $H(x) = \E_{\D}\Brac{\bar{\mu}_*}$. Then there is positive number $\nu = \nu(\rho)$ going to $0$ as $\rho \to 0$ such that
	\[
		\Pr_{x}\Brac{H(x) \leq -o(1)} \leq O\left(\exp\left(-\left(\frac{Cn^{t-2}}{\Delta^2}\right)^{\frac{(1 - \nu)}{k}}\right)\right).
	\]
\end{lemma}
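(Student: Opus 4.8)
The plan is to deduce the lemma directly from three results already established: \pref{prop:cspconditionalfourierdecay}, which says that the planted density $\mu_*$ for CSPs exhibits $\epsilon(s_x,s_I)$-conditional Fourier decay with the explicit function $\epsilon(s_x,s_I) = (C\Delta)^{s}(s/n)^{\frac{t-2}{2}s}(s/s_x)^{s_x/2}$, $s = \max\{\lceil s_x/k\rceil, s_I\}$; \pref{lem:csprapiddecay}, which says this $\epsilon$ is $b$-rapidly decaying once the parameters obey $d_x \le \rho(d_y - 2b)$, $b \le \rho d_y$ and $d_y \le (n\Delta^{-2/(t-2)})^{\nu/(\nu+\rho)}$; and \pref{prop:nonnegwhp}, which turns conditional Fourier decay by a $b$-rapidly decaying $\epsilon$ into the estimate $\Pr_x[H(x)\le -o(1)] \le \sum_{s=1}^{d_x}\exp(-\frac{s}{2e}\epsilon(s,1)^{-\frac{2-2\nu}{s}})$ for every $b$-CBD distribution. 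So all that remains is to (i) check that the parameter settings of the ``Setting Parameters'' paragraph meet the hypotheses of these results, and (ii) evaluate the sum.

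For (i), write $N \defeq n^{t-2}/\Delta^2$, so $d_x = (CN)^{(1-\nu_x)/k}$, $d_y = (CN)^{(1-\nu_y)/k}$ and $n\Delta^{-2/(t-2)} = N^{1/(t-2)}$. Because $\nu_x > \nu_y$, the ratio $d_x/d_y$ is a fixed negative power of $N$, so the hypothesis $b \le \min(\rho d_y, k^{-1}d_x)$ immediately gives $b \le \rho d_y$, $b \le k^{-1}d_x$, and $d_x \le \rho(d_y - 2b)$ for $n$ large. For the last constraint, $d_y \le N^{\frac{\nu}{(t-2)(\nu+\rho)}}$, I would take $\nu = \nu(\rho) = c\rho$ for a constant $c = c(k,t,\nu_y)$ chosen large enough that $\frac{\nu}{\nu+\rho} = \frac{c}{c+1} > \frac{(t-2)(1-\nu_y)}{k}$; this is possible precisely because $t \le k+1$ forces $\frac{t-2}{k} < 1$, and then $d_y = (CN)^{(1-\nu_y)/k} \le N^{\frac{\nu}{(t-2)(\nu+\rho)}}$ for all large $n$. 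Note $\nu \to 0$ as $\rho \to 0$, as the statement demands. Since $\delta \le O(1/\log n)$ and $1 < \Delta < n^{(t-2)/2-\epsilon}$ hold by hypothesis, \pref{prop:cspconditionalfourierdecay} and \pref{lem:csprapiddecay} now apply, and \pref{prop:nonnegwhp} yields the displayed sum bound with the explicit $\epsilon$.

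For (ii), I would substitute $\epsilon(s,1)$ (\ie $s_x = s$, $s_I = 1$, inner index $\lceil s/k\rceil$) into the per-term exponent $\frac{s}{2e}\epsilon(s,1)^{-\frac{2-2\nu}{s}}$. Up to constants depending only on $k$ this exponent equals $s\big((n/\lceil s/k\rceil)^{(t-2)}(C\Delta)^{-2}\big)^{(1-\nu)\lceil s/k\rceil/s}$; on each block $(j-1)k < s \le jk$ (where $\lceil s/k\rceil = j$) it is minimized at $s = jk$, and the resulting value is of order at least $(CN)^{(1-\nu)/k}$ and non-decreasing in $j$ (using once more $(t-2)/k < 1$). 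Hence over all $s\in[1,d_x]$ the exponent is at least of order $(CN)^{(1-\nu)/k}$, so every summand is $\le \exp(-\Omega((CN)^{(1-\nu)/k}))$. There are at most $d_x \le n^{O(1)}$ summands; since $(CN)^{(1-\nu)/k}$ is a positive power of $n$, this polynomial factor is dominated by the exponential and only changes the leading constant, giving $\Pr_x[H(x)\le -o(1)] \le O\big(\exp(-(Cn^{t-2}/\Delta^2)^{(1-\nu)/k})\big)$.

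The only genuine difficulty here is bookkeeping: threading the small constants $\rho, \nu_x, \nu_y, \nu$ and the various multiplicative constants $C$ — those hidden in \pref{prop:cspconditionalfourierdecay}, \pref{lem:csprapiddecay}, and the $\frac{1}{2e}$ of \pref{prop:nonnegwhp} — through the chain so that all hypotheses hold simultaneously, the final exponent is genuinely of the form $(Cn^{t-2}/\Delta^2)^{(1-\nu)/k}$ with $\nu \to 0$ as $\rho \to 0$, and the $n^{O(1)}$ terms in the sum are harmless. No idea beyond the three cited statements is required.
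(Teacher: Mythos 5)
Your proposal takes exactly the paper's route: chain \pref{prop:cspconditionalfourierdecay}, \pref{lem:csprapiddecay}, and \pref{prop:nonnegwhp}, then evaluate the resulting sum. You additionally spell out the parameter bookkeeping needed to invoke \pref{lem:csprapiddecay} (which the paper's terse proof leaves implicit) and evaluate the sum via a minimization argument rather than the paper's direct algebraic chain, but these are cosmetic differences in an otherwise identical argument.
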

\begin{proof}
	By \pref{prop:cspconditionalfourierdecay} the density $\mu_*$ exhibits $\epsilon(s_x,s_I)$-Fourier decay, and \pref{lem:csprapiddecay} the function $\epsilon(s_x,s_I)$ is rapidly decaying. Thus we may apply \pref{prop:nonnegwhp} to conclude that
	\begin{align*}
		\Pr_{x}\Brac{H(x) \leq -o(1)} &\leq \sum_{s = 1}^{d_x}\exp\left(-\frac{s}{2e}\epsilon(s,1)^{-\frac{2 - 2\nu}{s}}\right)\\
			&= \sum_{s = 1}^{d_x}\exp\left(-\frac{s}{2e}\epsilon\left(s,\ceil*{\frac{s}{k}}\right)^{-\frac{2 - 2\nu}{s}}\right)\\
			&\leq \sum_{s = 1}^{d_x}\exp\left(-\frac{s}{2e}\left(\frac{(C\Delta)^{\frac{2}{t-2}}s}{kn}\right)^{-\frac{2 - 2\nu}{s} \cdot \frac{(t-2)s}{k}}\right)\\
			&\leq \sum_s \exp\left(-\left(\frac{Cn^{t-2}}{\Delta^2}\right)^{\frac{(1 - \nu)}{k}}s^{1 - \frac{1}{k}}\right)\\
			&\leq O\left(-\exp\left(\left(\frac{Cn^{t-2}}{\Delta^2}\right)^{\frac{(1 - \nu)}{k}}\right)\right)
	\end{align*}

\end{proof}
We are now ready to prove our main result.
\begin{theorem}
Let $P$ be a predicate supporting a $(t-1)$-wise uniform distribution on satisfying assignments.
For any $\eps > 0$, let $1 < \Delta < n^{\frac{t-2}{2} - \eps}$.
Then for any constants $\eta, \nu > 0$, any linear programming formulation that $\eta$-refutes random instances of $\CSP(P)$ with constant probability must have size at least $\Omega\left(\exp\left(\left(\frac{Cn^{t-2}}{\Delta^2}\right)^{\frac{(1 - \nu)}{k}}\right)\right)$.
\end{theorem}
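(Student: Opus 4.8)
The plan is to assume, for contradiction, that such an LP formulation of size $R$ exists and to use \pref{lem:nonnegativerep} together with the pseudo-calibrated density $\bar{\mu}_*$ to exhibit the two incompatible inequalities \pref{eqn:lhs} and \pref{eqn:rhs}. Fix the target constant $\nu>0$ and choose smaller constants $\nu_x>\nu_y>0$ (and auxiliary $\rho,\delta$) so that with the degree bounds $d=(d_x,d_y)$ of the ``Setting Parameters'' paragraph the hypotheses of \pref{prop:cspconditionalfourierdecay}, \pref{lem:csprapiddecay} and \pref{lem:nonnegwhp} all hold; set $c=(1-\eta)\Delta n$. If the LP certifies the bound $c$ on a set $A$ of instances with $\Pr_{\D(p)}[A]=s=\Omega(1)$, then by \pref{lem:nonnegativerep} there are non-negative $p_i,q_i$ with $c-F(x,y,b)=\sum_{i=0}^R p_i(y,b)q_i(x)$ on $A$ (extend $p_i$ by zero off $A$). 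Let $E$ be the event of \pref{lem:lhsnegative} and $B=A\cap E$, so $\Pr_{\D(p)}[B]\ge s-o(1)$.

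Establishing \pref{eqn:lhs} is immediate: since $F$ has degree $\le k$ in $(x,b)$ and degree one in $y$, $\Ex[\{-1,1\}^n\times\D(p)]{\bar{\mu}_*(c-F)}=\Ex[\D_*]{c-F}=-\eta\Delta n$, and \pref{lem:lhsnegative} upgrades this to $\Ex[x]{\bar{\mu}_*(x,y,b)(c-F(x,y,b))}\le-\tfrac{\eta}{2}\Delta n$ for every $(y,b)\in B$; integrating over $B$ gives \pref{eqn:lhs} with $\lambda:=\tfrac{\eta s}{4}\Delta n$. The real work is \pref{eqn:rhs}. Write its left side as $\sum_i\Ex[x]{q_i(x)g_i(x)}$ with $g_i(x)=\Ex[(y,b)\sim\D(p)]{\Ind((y,b)\in B)p_i(y,b)\bar{\mu}_*(x,y,b)}=P_i\,\Ex[(y,b)\sim\D_i]{\bar{\mu}_*(x,y,b)}$, where $P_i=\Ex[\D(p)]{\Ind(B)p_i}$ and $\D_i$ has density $\Ind(B)p_i/P_i$ relative to $\D(p)$. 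The key normalization is that, since $\sum_i p_i(y,b)q_i(x)=c-F\in[0,c]$ and $x,(y,b)$ are independent under the background measure, $\sum_i P_i\,\Ex[x]{q_i(x)}=\Ex[\D(p)]{\Ind(B)(c-\Ex[x]{F})}\le c\le\Delta n$; hence an additive error $\epsilon P_i$ in each $g_i$ costs at most $\epsilon\Delta n$ overall.

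Now decompose each $\D_i$ via \pref{lem:cbddecomp}, run with a large enough threshold that the fixed blocks have size $b=\Theta(D/\log n)$ (where $D=(Cn^{t-2}/\Delta^2)^{(1-\nu)/k}$), that the error set $B^{(i)}$ has $\Pr_{\D(p)}$-mass $\ll R^{-2}$ and $C^{(i)}$ has $\Pr_{\D_i}$-mass $\le\exp(-n)$, while still $b\le\min(\rho d_y,k^{-1}d_x)$. For each conjunctive-blockwise-dense component $\D_{i,j}$, \pref{lem:nonnegwhp} (applicable because \pref{prop:cspconditionalfourierdecay} and \pref{lem:csprapiddecay} supply $\epsilon(s_x,s_I)$-conditional Fourier decay for a rapidly decaying $\epsilon$) gives $\Ex[\D_{i,j}]{\bar{\mu}_*(x,\cdot)}\ge-o(1)$ outside a set of $x$ of measure $O(\exp(-D))$; a union bound over $i\le R+1$ and the $\le n^k$ components of each $\D_i$ makes all components simultaneously good on a $1-O(Rn^k\exp(-D))$ fraction of $x$. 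On $C^{(i)}$ use the crude bound $\abs{\bar{\mu}_*}\le\exp(o(n))$ (coefficients of $\bar{\mu}_*$ are $\le1$, there are $\exp((d_x+d_y)O(\log n))=\exp(o(n))$ of them, and $\norm{\phi_\beta}_\infty\le(q/p)^{d_y/2}$), so its contribution is $o(1)$. On $B^{(i)}$ use $\sum_i\Ind((y,b)\in B^{(i)})p_iq_i\le c-F\le\Delta n$ together with the fact that every instance in $B^{(i)}$ is pinned on a block $U$ large enough that \pref{lem:conditionalfourier}/\pref{lem:csprapiddecay} force $\bar{\mu}_*$ to be tiny there; this keeps the $B$-contribution $o(\Delta n)$. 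Since $q_i\ge0$, combining these estimates with the normalization above yields $\sum_i\Ex[x]{q_ig_i}\ge-o(\Delta n)>-\lambda$, i.e.\ \pref{eqn:rhs}. Finally, $\Ind(B)\bar{\mu}_*(c-F)=\Ind(B)\bar{\mu}_*\sum_i p_iq_i$ pointwise, so \pref{eqn:lhs} and \pref{eqn:rhs} contradict each other; the contradiction only needs $Rn^k\exp(-D)=o(1)$ and the CBD error $\ll R^{-2}$, both of which hold for $n$ large whenever $R\le\exp((Cn^{t-2}/\Delta^2)^{(1-\nu')/k})$ for any $\nu'>\nu$. Hence every valid LP formulation has size $\Omega(\exp((Cn^{t-2}/\Delta^2)^{(1-\nu')/k}))$; renaming $\nu'\to\nu$ proves the theorem.

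I expect the main obstacle to be precisely the control of the $B^{(i)}$ error sets: these have small mass only under the background measure $\D(p)$, not under $\D_i$, so a naive bound $\norm{\bar{\mu}_*}_\infty\cdot\Pr_{\D(p)}[B^{(i)}]$ loses a factor $\exp(\Omega(D\log n))$ and is hopelessly weak against subexponential $R$. Making this term negligible forces one to run \pref{lem:cbddecomp} with an atypically large fixed-block threshold so that $\Pr_{\D(p)}[B^{(i)}]\ll R^{-2}$, and then to exploit the conditional Fourier decay of $\mu_*\rvert_U$ on the large block that pins every instance in $B^{(i)}$ — all while keeping the block size inside the window $b\le\min(\rho d_y,k^{-1}d_x)$ required by \pref{lem:nonnegwhp}. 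Checking that $\nu_x,\nu_y,\rho,\delta$ can be chosen to make all of these constraints simultaneously satisfiable (in particular that $d_y$, which is at most $(n\Delta^{-2/(t-2)})^{\nu/(\nu+\rho)}$, still dominates $b$) is the delicate accounting at the heart of the proof.
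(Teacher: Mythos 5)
Your overall scaffold matches the paper's: apply \pref{lem:nonnegativerep}, pseudo-calibrate, use \pref{lem:lhsnegative} to force the left side negative, decompose each renormalized $p_i$ via \pref{lem:cbddecomp}, bound the good CBD components with \pref{lem:nonnegwhp} plus a union bound, and kill the $C^{(i)}$ error with its exponentially small $\D_i$-mass. You have also correctly located the delicate point, namely the $B^{(i)}$ error sets, which have small $\D(p)$-mass but can carry essentially all of the $\D_i$-mass.

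The gap is in the fix you propose for that step. You assert that every instance in $B^{(i)}$ is ``pinned on a block $U$ large enough that \pref{lem:conditionalfourier}/\pref{lem:csprapiddecay} force $\bar{\mu}_*$ to be tiny there,'' and use this to bound the $B$-contribution by $o(\Delta n)$. This is false: inspecting the \textbf{Decompose}/\textbf{Truncate} algorithms, $B^{(i)}$ is populated either by entire recursion cells whose total $\D(p)$-mass has fallen below $2^{-kt}p^t$, or by instances $I^*$ removed because the ratio $\Pr_{\D_i}[I^*]/\Pr_{\D(p)}[I^*]$ is large. Neither class is described by fixing a large block of scopes, and there is no reason $\bar{\mu}_*$ (let alone $\Ex[x]{|\bar{\mu}_*|}$) should be small on them; indeed $\|\bar{\mu}_*\|_\infty$ can be as large as $n^{O(d_y)}$, and $d_y > d_x$, so the naive bound you rightly reject cannot be rescued this way. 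The paper's resolution is different and simpler: it never tries to bound the $B$-contribution at all. It sets $B = \cup_i B'_i$, excises $B$ from the integration domain on \emph{both} sides of \pref{eqn:maineqn}, and uses that (i) the left-side inequality $\Ex[x]{\bar{\mu}_*(c-F)} \le -\tfrac{\eta}{2}\Delta n$ holds \emph{pointwise} for $(y,b)\in A'$, so it survives after restricting to $A'\setminus B$, which is nonempty because $\Pr_{\D(p)}[B]\le R\,n^{k+1}(p/2^k)^r = o(1)$; and (ii) on the right side, $\Ex\big[\Ind((y,b)\notin B'_i)\bar{\mu}_* p_i\big]$ decomposes cleanly into the $A_j$ terms plus the exponentially small $C_i$ term, with no $B$ term to control. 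You should replace your step bounding the $B^{(i)}$ contribution by this domain-restriction argument; the rest of your proof can stay as is.
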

\begin{proof}
	 For any $\nu > 0$, suppose that there is a linear programming formulation of size $R \leq O\left(\exp\left(\left(\frac{Cn^{t-2}}{\Delta^2}\right)^{\frac{(1 - \nu)}{k}}\right)\right)$ for $\eta$-refuting random instances of $\CSP(P)$. By \prettyref{lem:nonnegativerep} this means there exists a set $A$ of constant measure under $D(p)$, and non-negative functions $p_i(y,b),q_i(x)$ such that, for $c = (1-\eta)\Delta n$:
	\begin{equation}
		\label{eqn:maineqn}
		c - F(x,y,b) = \sum_{i=0}^R p_i(y,b)q_i(x)
	\end{equation}
	whenever $(y,b) \in A$. We will derive a contradiction by multiplying each side of the above equation by $\bar{\mu}_*$ and then averaging over $(x,y,b)$.

	Let $m(y,b)$ be the number of constraints in $(y,b)$ and let $E$ be the event defined in \prettyref{lem:lhsnegative}. By the lemma, $\Pr_{\D(p)}\Brac{E} = 1 -o(1)$. Thus, letting $A' = A \cap E$ we have $\Pr_{\D(p)}\Brac{A'}$ is constant, because $A$ has constant measure. We now restrict \prettyref{eqn:maineqn} to instances in $A'$ and normalize the equation by dividing both sides by $\Delta n$. Note that by the definition of $E$ we have $\abs{\frac{1}{\Delta n}(c - F(x,y,b))} \leq 1 + \frac{\eta}{2}$ for all $x$, and for all $(y,b)\in E$.

	Thus, we must have the same bound for the right hand side of \prettyref{eqn:maineqn}
	\begin{equation}
		\label{eqn:normalize}
		\frac{1}{\Delta n}\sum_{i=0}^R p_i(y,b)q_i(x) \leq 1 + \frac{\eta}{2}
	\end{equation}
	Now, since the $p_i,$ are non-negative, we can re-normalize each $p_i$ to be a density relative to $\D(p)$ by simply rescaling $p_i$ by $\Ex{p_i}^{-1}$ and $q_i$ by $\Ex{p_i}$. After this rescaling, averaging \prettyref{eqn:normalize} over $(y,b)$ implies that $\sum_i q_i(x) \leq 1 + \frac{\eta}{2}$ for all $x$.

	Let $\D_i$ be the probability distribution given by the density $p_i$, let $\delta \leq O\left(\frac{1}{\log n}\right)$ and let $r = \frac{d_x}{2k\log n}$.
	By \prettyref{lem:cbddecomp} we can partition $A'$ into sets $A_1,\dots A_N,B_i,C_i$ such that each $A_j$ is a $\frac{2}{\delta}r$-CBD distribution with parameter $\delta$, $\Pr_{\D(p)}\Brac{B_i} \leq n^{k+1}\left(\frac{p}{2^k}\right)^r$, and $\Pr_{\D_i}\Brac{C_i} \leq O(\exp(-n))$. 
	Let $\rho > 0$ be arbitrarily small. Recalling the parameter settings for $d_x$ and $d_y$ from the beginning of this section, we can set $\nu_y < \nu_x$ to be small enough that $\rho d_y > k^{-1} d_x > \frac{2}{\delta}r$.
	In particular, we have that each density is $d$-CBD for $d \leq \min(\rho d_y,k^{-1}d_x)$. Letting $B'_i \defeq A' \cap B_i$ and $H_{i,j}(x) \defeq \Ex[\D_i\rvert A_j]{\bar{\mu}_*}$  yields
	\begin{align*}
		\Ex{\Ind((y,b)\notin B'_i)\bar{\mu}_*(x,y,b)p_i(y,b) } &= \sum_j \Pr_{\D_i}[A_j] \Ex[\D_i\rvert A_j]{\bar{\mu}_*} + \Pr_{\D_i}[C_i]\Ex[\D_i\rvert C_i]{\bar{\mu}_*}\\
		&= \sum_j \Pr_{\D_i}[A_j] H_{i,j}(x) + \Pr_{\D_i}[C_i]\Ex[\D_i\rvert C_i]{\bar{\mu}_*}
	\end{align*}
	Since $\Pr_{\D_i}\Brac{C_i} \leq O(\exp(-n))$ the second term above is bounded in magnitude by
	\begin{equation}
		\label{eqn:Cerror}
		\abs{\Pr_{\D_i}[C_i]\Ex[\D_i\rvert C_i]{\bar{\mu}_*}} \leq O(\exp(-n)) \norm{\bar{\mu}_*}_\infty \leq O(\exp(-n)) \cdot n^{O(d_x + d_y)} \leq \exp(-\Omega(n))
	\end{equation}
	where the final inequality uses the fact that both $d_x$ and $d_y$ are bounded by $O\left(n^{(1 - \nu_y)}\right)$.
	Further, since each distribution $\D_i$ is $d$-CBD for $d \leq \min(\rho d_y,k^{-1}d_x)$ we have by \prettyref{lem:nonnegwhp} that there exists $\nu' = \nu'(\rho)$ going to $0$ as $\rho \to 0$ such that for each $i,j$
	\[
		\Pr_{x}\Brac{H_{i,j}(x) \leq -o(1)} \leq O\left(\exp\left(-\left(\frac{Cn^{t-2}}{\Delta^2}\right)^{\frac{(1 - \nu')}{k}}\right)\right) \leq o(R^{-1}n^{-d_x}).
	\]
	where the last inequality follows from the definition of $d_x$, and by taking $\nu'$ to be sufficiently small by choosing $\rho$ to be sufficiently small.
	Since the $\frac{1}{\Delta n}\sum_i q_i(x) \leq 1 + \frac{\eta}{2}$ we conclude that
	\begin{align*}
		\sum_{i,j}\Pr_{\D_i}\Brac{A_j}\Ex[x]{q_i(x)H_{i,j}(x)}
			&\geq \sum_{i,j}\Pr_{\D_i}\Brac{A_j} \cdot \left(-o(1)\Ex{q_i(x)} - \Pr_{x}\Brac{H_{i,j}(x) \leq -o(1)}\norm{H_{i,j}}_\infty\right)\\
			&\geq -o(1) - R\max_{i,j}\left(\Pr_{x}\Brac{H_{i,j}(x) \leq -o(1)}\norm{H_{i,j}}_\infty\right)\\
			&\geq -o(1) - R \cdot o(R^{-1}n^{-d_x}) \cdot n^{d_x} \geq -o(1)
	\end{align*}
	Let $B = \cup_i B'_i$. Then the above inequality combined with \prettyref{eqn:Cerror} implies that
	\[
		\sum_i \Ex{\Ind((y,b) \notin B)\bar{\mu}_*(x,y,b)p_i(y,b)q_i(x)} \geq -o(1).
	\]
	To summarize, averaging the right hand side of (the normalized version of) \prettyref{eqn:maineqn} multiplied by $\bar{\mu}_*$ and restricted to instances not in $B$ has expectation that is at least a small negative number. Now observe that by our choice of $r$ and $R$ we have
	\[
		\Pr_{\D(p)}[B] \leq Rn^{k+1}\left(\frac{p}{2k}\right)^r \leq R \exp(-O(d_x)) \leq o(1)
	\]
	where again we require that $\nu_x > 0$ is taken to be sufficiently small relative to $\nu$. 
	Recall on the left hand side of \prettyref{eqn:maineqn} we have that
	\[
		\frac{1}{\Delta n}\Ex[x]{\bar{\mu}_*(x,y,b)(c - F(x,y,b))} \leq - \frac{\eta}{2}
	\]
	whenever $(y,b)\in A'$. Further, $B \subseteq A'$ and $A'$ has constant measure under $\D(p)$, so $B$ cannot be all of $A'$. We conclude that
	\[
		\frac{1}{\Delta n}\Ex[x,y,b]{\Ind((y,b)\notin B)\bar{\mu}_*(x,y,b)(c - F(x,y,b))} \leq - \frac{\eta}{2}
	\]
	For any constant $\eta$ this yields the desired contradiction.
\end{proof}

\appendix

\bibliographystyle{amsalpha}
\bibliography{bib/mr,bib/dblp,bib/scholar,bib/bibliography,bib/lpsize,bib/lplowerbounds,bib/writeup}


\end{document}